\newcommand{\newparentheses}[3]{%
  \expandafter\newcommand\csname #1\endcsname[1]{#2##1#3}%
  \expandafter\newcommand\csname #1L\endcsname[1]{\bigl#2##1\bigr#3}%
  \expandafter\newcommand\csname #1XL\endcsname[1]{\Bigl#2##1\Bigr#3}%
  \expandafter\newcommand\csname #1V\endcsname[1]{\left#2##1\right#3}}
\newcommand{\onenewattribute}[3]{%
  \@ifundefined{#1}{\let\@@def\newcommand}{\let\@@def\renewcommand}%
  \expandafter\@@def\csname #1\endcsname[2][]{%
    \ifthenelse{\equal{##1}{}}%
    {#2\csname #3\endcsname{##2}}%
    {#2_{##1}\csname #3\endcsname{##2}}}}
\newcommand{\newattribute}[2]{%
  \onenewattribute{#1}{#2}{parens}%
  \onenewattribute{#1L}{#2}{parensL}%
  \onenewattribute{#1XL}{#2}{parensXL}%
  \onenewattribute{#1V}{#2}{parensV}}
\newcommand{\subtree}[2][]{%
  \ifthenelse{\equal{#1}{}}%
  {T(#2)}%
  {#1(#2)}}
\newcommand{\induced}[2][]{%
  \ifthenelse{\equal{#1}{}}%
  {T|#2}%
  {#1|#2}}
\newcommand{\clade}[2][]{%
  \ifthenelse{\equal{#1}{}}%
  {T_{#2}}%
  {#1_{#2}}}
\newcommand{\Clade}[3][]{%
  \ifthenelse{\equal{#1}{}}%
  {T_{#2}^{#3}}%
  {#1_{#2}^{#3}}}
\newcommand{\reach}[1][]{%
  \ifthenelse{\equal{#1}{}}%
  {\sim}%
  {\sim_{#1}}}
\newcommand{\noreach}[1][]{%
  \ifthenelse{\equal{#1}{}}%
  {\nsim}%
  {\nsim_{#1}}}
\newcommand{\lca}[2][]{%
  \ifthenelse{\equal{#1}{}}%
  {LCA(#2)}%
  {LCA_{#1}(#2)}}
\newcommand{\edge}[1]{e_{#1}}
\newcommand{\triple}[3][]{%
  \ifthenelse{\equal{#1}{}}%
  {#2|#3}%
  {#1|#2|#3}}
\newcommand{\quartet}[2]{#1|#2}
\newcommand{\parent}[1]{p_{#1}}
\newcommand{\dest}[1]{d_{#1}}
\newcommand{\I}{\mathcal{I}}
\newcommand{\nil}{\textrm{nil}}
\def\comment#1{}
\def\withcomments{%
  \addtolength{\oddsidemargin}{-0.5in}%
  \addtolength{\evensidemargin}{-0.5in}%
  \setlength{\marginparwidth}{1.25in}
  \newcounter{mycommentcounter}%
  \def\comment##1{\refstepcounter{mycommentcounter}%
    \ifhmode
      \unskip
      {\dimen1=\baselineskip
        \divide\dimen1 by 2%
        \raise\dimen1\llap{\tiny -\themycommentcounter-}}%
    \fi
    \marginpar{\renewcommand{\baselinestretch}{0.8}%
      \footnotesize [\themycommentcounter]: \raggedright ##1}}%
  \date{\framebox{Draft of \today}}}
\newtheorem{observation}{Observation}{\bfseries}{\itshape}
\newtheorem{lemma}{Lemma}{\bfseries}{\itshape}
\newtheorem{theorem}{Theorem}{\bfseries}{\itshape}
{\bfseries}{\itshape}
\begin{document}


\title{Fixed-Parameter and Approximation Algorithms for Maximum Agreement
  Forests of Multifurcating Trees}

\author{Chris~Whidden, Robert~G.~Beiko,	and~Norbert~Zeh%
  \thanks{
    Faculty of Computer Science, Dalhousie University, Halifax, Nova Scotia,
    Canada.\protect\\
    E-mail: $\{$whidden,beiko,nzeh$\}$@cs.dal.ca}%
  }

\markboth{IEEE/ACM Transactions on Computational Biology and Bioinformatics,
  Vol.\ X, No.\ Y,  Month Year}{Whidden \MakeLowercase{\textit{et al.}}:
  Fixed-Parameter and Approximation Algorithms for Maximum Agreement Forests}

\maketitle


\begin{abstract}
  We present efficient algorithms for computing a maximum agreement
  forest (MAF) of a pair of multifurcating (nonbinary) rooted trees.
  Our algorithms match the running times of the currently best algorithms for
  the binary case.
  The size of an MAF corresponds to the subtree prune-and-regraft (SPR) distance
  of the two trees and is intimately connected to their hybridization number.
  These distance measures are essential tools for understanding reticulate
  evolution, such as lateral gene transfer, recombination, and hybridization.
  Multifurcating trees arise naturally as a result of statistical uncertainty in
  current tree construction methods.
\end{abstract}




\section{Introduction}

Phylogenetic trees are the standard model for representing the evolution of a
set of species (taxa) through ``vertical'' inheritance~\cite{hillis96}.
Yet, genetic material can also be shared between contemporary organisms via
lateral gene transfer, recombination or hybridization.
These processes allow species to rapidly adapt to new environments as shown by,
for example, the rapid spread of antibiotic resistance and other harmful traits
in pathogenic bacteria~\cite{rosasmagallanes2006}.
Untangling vertical and lateral evolutionary histories is thus both difficult
and of great importance.
To do so often requires the comparison of phylogenetic trees for individual gene
histories with a reference tree.
Distance measures that model reticulation events using subtree prune-and-regraft
(SPR)~\cite{hillis96} and hybridization~\cite{baroni05} operations
are of particular interest in such comparisons due to their direct evolutionary
interpretations~\cite{baroni05, beiko2006pil}.

These distance measures are biologically meaningful but also NP-hard to
compute~\cite{bordewich05,hickey2008sdc, bordewich07}.
As a result, there has been significant effort to develop efficient
fixed-parameter~\cite{whidden2009uva, whidden2010fast, albrecht2012fast,
  chen2010hybridnet, 2011arXiv1108.2664W} and
approximation~\cite{rodrigues2007maf, bordewich08, whidden2009uva} algorithms to
compute these distances, most of which use the equivalent notion of maximum
agreement forests (MAFs)~\cite{hein96, bordewich05, baroni05}.
Efficient algorithms for computing these distances
have generally been restricted to binary trees.
The exceptions are reduction rules for computing hybridization numbers
of nonbinary trees~\cite{linz09hnt} and a recent depth-bounded search
algorithm~\cite{vaniersel} for computing the subtree prune-and-regraft distance
of such trees.

Multifurcations (or \emph{polytomies}) are vertices of a tree with two or more
children.
A multifurcation is \emph{hard} if it indeed represents an inferred common
ancestor which produced three or more species as direct descendants; it is
\emph{soft} if it simply represents ambiguous evolutionary
relationships~\cite{maddison1989reconstructing}.
Simultaneous speciation events are assumed to be rare, so a common assumption
is that all multifurcations are soft.
If we force the resolution of multifurcating trees into binary trees, then we
infer evolutionary relationships that are not supported by the original data and
may infer meaningless reticulation events.
Thus, it is crucial to develop efficient algorithms to compare multifurcating
trees directly.

In this paper, we extend the fastest approximation and fixed-parameter
algorithms for computing MAFs of binary rooted trees to multifurcating trees
(thus showing that computing MAFs for multifurcating trees is fixed-parameter
tractable).
The size of an MAF of two binary trees is equivalent to their SPR distance.
In keeping with the assumption that multifurcations are soft, we define an MAF
of two multifurcating trees so that its size is equivalent to what we call the
\emph{soft} SPR distance: the minimum number of SPR operations required to
transform a binary resolution of one tree into a binary resolution of the other.
This distinction avoids the inference of meaningless differences between the
trees that arises, for example, when one tree has a set of resolved bifurcations
that are part of a multifurcation in the other tree.
This is similar to the extension of the hybridization number to
multifurcating trees by Linz and Semple~\cite{linz09hnt}.

\begin{figure*}[t]
  \hspace*{\stretch{1}}%
  \subfigure[\unskip\label{fig:x-tree}]{\includegraphics{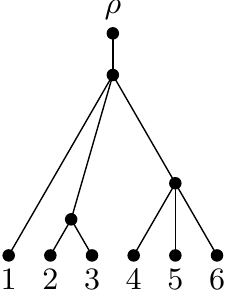}}%
  \hspace*{\stretch{2}}%
  \subfigure[\unskip\label{fig:subtree}]{\includegraphics{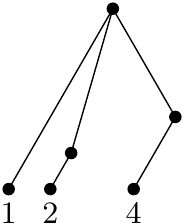}}%
  \hspace*{\stretch{2}}%
  \subfigure[\unskip\label{fig:induced}]{\includegraphics{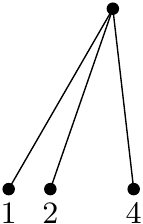}}%
  \hspace*{\stretch{2}}%
  \subfigure[\unskip\label{fig:x-tree-binary}]{\includegraphics{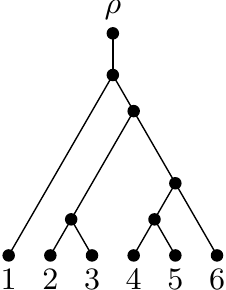}}%
  \hspace*{\stretch{1}}%
  \\
  \hspace*{\stretch{1}}%
  \subfigure[\unskip\label{fig:spr}]{\includegraphics{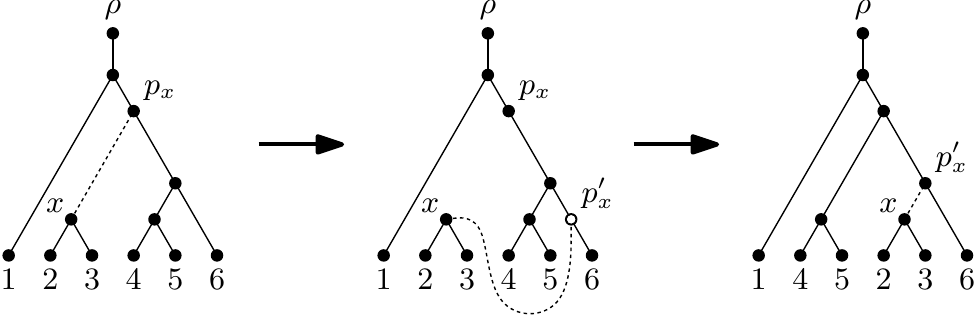}}%
  \hspace*{\stretch{1}}%
  \caption{(a)~An $X$-tree $T$.
    (b)~The subtree $\subtree{V}$ for $V = \set{1, 2, 4}$.
    (c)~$\induced{V}$.
    (d) A binary resolution of $T$.
    (e)~Illustration of an SPR operation applied to the binary resolution of
    $T$.}
\end{figure*}

Our fixed-parameter algorithm achieves the same running time as in the binary
case.
Our approximation algorithm achieves the same approximation factor as in
the binary case at the cost of increasing the running time from linear to
$\Oh{n \lg n}$.
These results are not trivial extensions of the algorithm for binary trees.
They require new structural insights and a novel method for
terminating search branches of the depth-bounded search tree,
coupled with a careful analysis of the resulting recurrence relation.

The rest of this paper is organized as follows.
Section~\ref{sec:prelim} introduces the necessary terminology and notation.
Section~\ref{sec:structure} presents the key structural results for
multifurcating agreement forests.
Section~\ref{sec:fpt} presents our new algorithms based on these results.
Finally, in Section~\ref{sec:concl}, we present closing remarks and discuss open
problems and possible extensions of this work.


\section{Preliminaries}

\label{sec:prelim}

Throughout this paper, we mostly use the definitions and notation
from~\cite{allen01, bonet06, bordewich05, bordewich08, rodrigues2007maf,
  linz09hnt}.
A \emph{(rooted phylogenetic) $X$-tree} is a rooted tree $T$ whose leaves are
the elements of a label set $X$ and whose non-root internal nodes have at least
two children each; see Figure~\ref{fig:x-tree}.
$T$~is \emph{binary} (or \emph{bifurcating}) if all internal nodes have exactly
two children each, otherwise it is \emph{multifurcating}.
The root of $T$ has label $\rho$ and has one child.
Throughout this paper, we consider $\rho$ to be a member of $X$.
For a subset $V$ of~$X$, $\subtree{V}$ is the smallest subtree of $T$ that
connects all nodes in~$V$; see Figure~\ref{fig:subtree}.
The \emph{\mbox{$V$-tree} induced by $T$} is the smallest tree $\induced{V}$
that can be obtained from $\subtree{V}$ by \emph{contracting} unlabelled nodes
with only one child, that is, by merging each such node with one of its
neighbours and removing the edge between them.
See Figure~\ref{fig:induced}.
An \emph{expansion} does the opposite:
It splits a node $v$ into two nodes $v_1$ and $v_2$ such that $v_1$ is $v_2$'s
parent and divides the children of $v$ into two subsets that become the children
of $v_1$ and~$v_2$, respectively.
For brevity, we refer to this operation as expanding the subset of $v$'s
children that become $v_2$'s children.

Let $T_1$ and $T_2$ be two $X$-trees.
We say that $T_2$ \emph{resolves} $T_1$ or, equivalently, $T_2$ is a
\emph{resolution} of $T_1$ if $T_1$ can be obtained from $T_2$ by contracting
internal edges.
$T_2$ is a \emph{binary resolution} of $T_1$ if $T_2$ is binary.
See Figure~\ref{fig:x-tree-binary}.

A \emph{subtree prune-and-regraft} (SPR) operation on a binary rooted $X$-tree
$T$ cuts an edge $x\parent{x}$, where $\parent{x}$ denotes the
parent of $x$.
This divides $T$ into subtrees $T_x$ and $T_{\parent{x}}$ containing $x$
and~$\parent{x}$, respectively.
Then it introduces a node $\parent{x}'$ into $T_{\parent{x}}$ by subdividing an
edge of $T_{\parent{x}}$ and adds an edge~$x\parent{x}'$, thereby making $x$ a
child of~$\parent{x}'$.
Finally, $\parent{x}$~is removed using a contraction.
See Figure~\ref{fig:spr}.
On a multifurcating tree, an SPR operation may also use any existing node of
$T_{\parent{x}}$ as $\parent{x}'$ and contracts $\parent{x}$ only if it
has only one child besides $x$.

\begin{figure*}[t]
  \hspace*{\stretch{1}}%
  \subfigure[\unskip\label{fig:spr-hard}]{\includegraphics{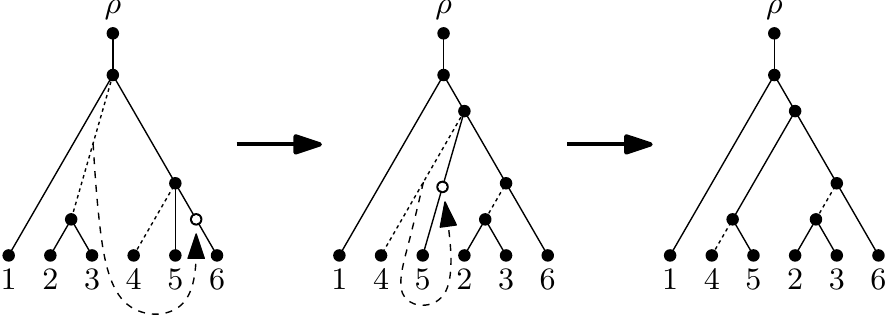}}%
  \hspace*{\stretch{2}}%
  \subfigure[\unskip\label{fig:maf}]{\includegraphics{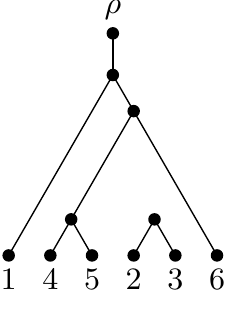}}
  \hspace*{\stretch{1}}%
  \caption{(a) SPR operations transforming the tree $T$ from
    Figure~\ref{fig:x-tree} into the second tree in Figure~\ref{fig:spr}.
    Each operation changes the top endpoint of one of the dotted edges.
    The hard SPR distance between the two trees is 2.
    (b) The MAF representing only the first transfer of (a) (equivalently,
    Figure~\ref{fig:spr}).
    The second transfer is unnecessary if the multifurcation represents
    ambiguous data rather than simultaneous speciation.
    Thus, the soft SPR distance is 1.}
  \label{fig:hard}
\end{figure*}

SPR operations give rise to a distance measure $\dspr{\cdot, \cdot}$ between
binary $X$-trees, defined as the minimum number of such operations required to
transform one tree into the other.
The trees in Figure~\ref{fig:spr}, for example, have SPR distance
$\dspr{T_1, T_2} = 1$.
An analogous distance measure, which we call the \emph{hard SPR distance},
could be defined for multifurcating \mbox{$X$-trees}; however, under the
assumption that most multifurcations are soft, this would capture differences
between the trees that are meaningless.
Instead, we define the \emph{soft SPR distance} $\dmspr{T_1,T_2}$
between two multifurcating trees $T_1$ and $T_2$ to be the minimum SPR distance
of all pairs of binary resolutions of $T_1$ and~$T_2$.\footnote{This is similar
  to the generalization of the hybridization number used by Linz and
  Semple~\cite{linz09hnt}}
For simplicity, we simply refer to this as the SPR distance in the remainder of
this paper.
These two distance measures are illustrated in Figure~\ref{fig:hard}.
Note that the soft SPR distance is not a metric but captures the minimum number
of SPR operations needed to explain the difference between the two trees.

These distance measures are related to the sizes of appropriately defined
agreement forests.
To define these, we first introduce some terminology.
For a forest $F$ whose components $T_1, T_2, \dots, T_k$ have label sets
$X_1, X_2, \dots, X_k$, we say $F$ \emph{yields} the forest with components
$\induced[T_1]{X_1}, \induced[T_2]{X_2}, \dots, \induced[T_k]{X_k}$;
if $X_i = \emptyset$, then $\subtree[T_i]{X_i} = \emptyset$ and, hence,
$\induced[T_i]{X_i} = \emptyset$.
For a subset $E$ of edges of $F$, we use $F - E$ to denote the forest obtained
by deleting the edges in $E$ from $F$, and $F \div E$ to denote the forest
yielded by $F - E$.
Thus, $F \div E$ is the contracted form of $F - E$.
We say $F \div E$ is a \emph{forest of~$F$}.

Given $X$-trees $T_1$ and $T_2$ and forests $F_1$ of $T_1$ and $F_2$ of~$T_2$, a
forest $F$ is an \emph{agreement forest} (AF) of $F_1$ and $F_2$ if it is a
forest of a binary resolution of $F_1$ and of a binary resolution of $F_2$.
$F$~is a \emph{maximum agreement forest} (MAF) of $F_1$ and $F_2$ if there is
no AF of $F_1$ and $F_2$ with fewer components.
An MAF of the trees from Figure~\ref{fig:spr-hard} is shown in
Figure~\ref{fig:maf}.
We denote the number of components in an MAF of $F_1$ and~$F_2$ by
$\maf{F_1, F_2}$, and the size of the smallest edge set $E$ such that
$F' \div E$ is an AF of $F_1$ and $F_2$ by $\ecut{F_1, F_2, F}$, where $F$ is a
forest of $F_2$ and $F'$ is a binary resolution of $F$.
Bordewich and Semple~\cite{bordewich05} showed that, for two \emph{binary}
rooted $X$-trees $T_1$ and $T_2$, $\dspr{T_1, T_2} = \ecut{T_1, T_2, T_2} =
\maf{T_1, T_2} - 1$.
This implies that
$\dmspr{T_1, T_2} = \ecut{T_1, T_2, T_2} = \maf{T_1, T_2} - 1$ for two
arbitrary rooted $X$-trees because $\dmspr{T_1, T_2}$,
$\ecut{T_1, T_2, T_2}$, and $\maf{T_1, T_2}$ are taken as the minimum over
all binary resolutions of $T_1$ and $T_2$.
Thus, to determine the SPR distance between two rooted $X$-trees, we need to
compute a binary MAF of the two trees.

We write $a \reach[F] b$ when there exists a path between two nodes $a$ and $b$
of a forest $F$.
For a node $x$ of $F$, $F^x$~denotes the subtree of $F$ induced
by all descendants of $x$, inclusive.
For two rooted forests $F_1$ and $F_2$ and a node $a \in F_1$,
we say that $a$ \emph{exists} in $F_2$ if there exists a node $a'$ in $F_2$
such that $F_1^a = F_2^{a'}$.
For simplicity, we refer to both $a'$ simply as $a$.
For forests $F_1$ and $F_2$ and nodes $a, c \in F_1$ with a common parent, we
say $\set{a, c}$ is a \emph{sibling pair} of $F_1$ if $a$ and $c$ exist in
$F_2$.
Figure~\ref{fig:sibling-pair} shows such a sibling pair.
We say $\set{a_1, a_2, \ldots, a_m}$ is a \emph{sibling group} if
$\set{a_i, a_j}$ is a sibling pair of $F_1$, for all $1 \le i < j \le m$,
and $a_1$ has no sibling not in the group.

The correctness proofs of our algorithms in the next sections make use of the
following three lemmas.
Lemma~\ref{lem:edge-shift} was shown by Bordewich et al.~\cite{bordewich08} for
binary trees.
The proof trivially extends to multifurcating trees.

\begin{lemma}
  \label{lem:edge-shift}
  Let $F$ be a forest of an \mbox{$X$-tree}, $e$ and $f$ edges of~$F$, and $E$ a
  subset of edges of $F$ such that $f \in E$ and $e \notin E$.
  Let $v_f$ be the end vertex of $f$ closest to~$e$, and $v_e$ an end vertex
  of~$e$.
  If (1) $v_f \reach[F-E] v_e$ and (2)~$x \noreach[F - (E \cup \set{e})] v_f$,
  for all $x \in X$,
  then $F \div E = F \div (E \setminus \set{f} \cup \set{e})$.
\end{lemma}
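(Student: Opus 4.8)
The plan is to write $E' = E \setminus \set{f} \cup \set{e}$ and to compare the forests $F - E$ and $F - E'$ component by component, showing that they differ only in how a single \emph{label-free} subtree is attached and that such a subtree disappears under the yield operation. Since $f \in E$ and $e \notin E$, passing from $F - E$ to $F - E'$ amounts to reinserting $f$ and deleting $e$, every other edge being treated identically.

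First I would locate the affected components. Because $e \notin E$, the edge $e$ survives in $F - E$, so by~(1) the vertex $v_f$ lies in the same component $P$ of $F - E$ as $v_e$, and hence as both endpoints of~$e$. Deleting $e$ splits $P$ into two subtrees; let $P_1$ be the one containing $v_f$ and set $P_2 = P \setminus P_1$. The component of $v_f$ in $F - (E \cup \set{e})$ is exactly $P_1$, so~(2) states precisely that $P_1$ contains no element of $X$ (in particular not $\rho$); that is, $P_1$ is label-free. Since $f \in E$ and $F$ is a forest, $f$ is a bridge, so its endpoints $v_f$ and $u_f$ lie in distinct components of $F - E$; let $Q \ne P$ be the component containing~$u_f$.

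Next I would track the two edge changes. Deleting $e$ breaks $P$ into $P_1$ and $P_2$, while reinserting $f$ joins $v_f \in P_1$ to $u_f \in Q$, merging $P_1$ with $Q$; every remaining component is untouched. Thus the only components in which $F - E$ and $F - E'$ differ are $P = P_1 \cup P_2$ and $Q$ on the one hand, which become $P_2$ and $P_1 \cup Q$ on the other. Because $P_1$ is label-free, $P$ and $P_2$ carry the same labels, and likewise $Q$ and $P_1 \cup Q$ carry the same labels.

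The crux, which I expect to be the only genuine (and still routine) step, is that attaching a label-free subtree $B$ to a component $C$ by a single edge leaves the yield unchanged. Indeed, taking $Y = X_C$ to be the label set of $C$, the smallest subtree of $C \cup B$ connecting $Y$ contains no vertex of $B$, since $B$ is label-free; hence it coincides with $\subtree[C]{Y}$, and the two therefore have the same contracted form $\induced[C]{Y}$. Applying this with $(C,B) = (P_2, P_1)$ and with $(C,B) = (Q, P_1)$ shows that $P$ yields the same induced tree as $P_2$ and that $Q$ yields the same induced tree as $P_1 \cup Q$. As all remaining components of $F - E$ and $F - E'$ coincide, the forests they yield are identical, i.e.\ $F \div E = F \div E'$, which is the claim.
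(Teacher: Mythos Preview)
Your argument is correct. The paper does not give its own proof of this lemma; it simply cites Bordewich et al.\ for the binary case and remarks that the extension to multifurcating trees is trivial. Your direct component-level argument---identifying the unique label-free piece $P_1$ via condition~(2), tracking the swap of $e$ for $f$ as moving $P_1$ from $P$ to $Q$, and observing that attaching or detaching a label-free subtree leaves the yield unchanged---is exactly the elementary verification one expects, and it goes through without issue.
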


Let $F_1$ and $F_2$ be forests of $X$-trees $T_1$ and $T_2$, respectively.
Any agreement forest of $F_1$ and $F_2$ is clearly also an agreement forest of
$T_1$ and $T_2$.
Conversely, an agreement forest of $T_1$ and $T_2$ is an agreement forest of
$F_1$ and $F_2$ if it is a forest of $F_2$ and there are no two leaves $a$ and
$b$ such that $a \reach[F_2] b$ but $a \noreach[F_1] b$.
This is formalized in the following lemma.
Our algorithms ensure that any intermediate forests $F_1$ and $F_2$ they produce
have this latter property.
Thus, this lemma allows us to reason about agreement forests of $F_1$ and $F_2$
and of $T_1$ and $T_2$ interchangeably, as long as they are forests of $F_2$.

\begin{lemma}
  \label{lem:forest_ecut}
  Let $F_1$ and $F_2$ be forests of $X$-trees $T_1$ and $T_2$, respectively.
  Let $F_1$ be the union of trees $\dot{T}_1, \dot{T}_2, \ldots, \dot{T}_k$ and
  $F_2$ be the union of forests $\dot{F}_1, \dot{F}_2, \ldots, \dot{F}_k$ such
  that $\dot{T}_i$ and $\dot{F}_i$ have the same label set,
  for all $1 \le i \le k$.
  Let $F_2'$ be a resolution of $F_2$.
  $F_2' \div E$ is an AF of $T_1$ and $T_2$ if and only if it is an AF of $F_1$
  and $F_2$.
\end{lemma}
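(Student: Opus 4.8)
The plan is to prove the biconditional in Lemma~\ref{lem:forest_ecut} by establishing that the two notions of agreement forest coincide precisely under the stated hypotheses. The key observation is that $F_1$ and $F_2$ agree component-wise on label sets (each $\dot T_i$ shares its label set with $\dot F_i$), so they differ from $T_1$ and $T_2$ only in which edges have already been cut, not in which leaves are grouped together.

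First I would dispose of the forward direction ($F_2'\div E$ an AF of $F_1$ and $F_2$ $\implies$ AF of $T_1$ and $T_2$), which follows immediately from the definitions: since $F_1$ is a forest of $T_1$ and $F_2$ is a forest of $T_2$, a binary resolution of $F_1$ is also obtainable as a forest of a binary resolution of $T_1$ (contractions and edge deletions commute with taking binary resolutions), and likewise for $F_2$. Hence any $F_2'\div E$ that is a forest of a binary resolution of both $F_1$ and $F_2$ is also a forest of a binary resolution of both $T_1$ and $T_2$, so it is an AF of $T_1$ and $T_2$. This is exactly the easy inclusion already noted in the text preceding the lemma.

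For the reverse direction, suppose $F := F_2'\div E$ is an AF of $T_1$ and $T_2$. Because $F_2'$ is a resolution of $F_2$, $F$ is automatically a forest of $F_2$, so the only thing to check is that $F$ is a forest of a binary resolution of $F_1$ — equivalently, that $F$ introduces no incompatibility relative to $F_1$ that was not already present relative to $T_1$. The plan is to argue via leaf reachability: I would show that for every pair of labels $a,b$, if $a\reach[F] b$ then $a\reach[F_1] b$. Since $F$ is an AF of $T_1$, each component of $F$ embeds into $T_1$; I must promote this to an embedding into $F_1$. Here the hypothesis that $\dot T_i$ and $\dot F_i$ share label sets is essential: it guarantees that two labels lying in a common component of $F_1$ are exactly those lying in a common component of $T_1$ restricted to the relevant $\dot T_i$, because passing from $T_1$ to $F_1$ only splits $T_1$ into the pieces $\dot T_1,\dots,\dot T_k$ along the same partition of $X$ realized by the $\dot F_i$. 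Thus any two leaves joined in $F$ are joined within a single $\dot T_i$, hence within the corresponding $\dot F_i$, giving a valid embedding into a binary resolution of $F_1$.

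The main obstacle I anticipate is the reverse direction, specifically verifying that the component structure (the induced topology, not merely the leaf partition) of $F$ is consistent with a binary resolution of $F_1$ rather than just of $T_1$. It is not enough that $F$ respects the partition of $X$ into the $\dot F_i$; I must also ensure that within each block the branching order displayed by $F$ is compatible with $F_1$. The clean way to handle this is to appeal to the fact that an AF is defined through \emph{binary} resolutions on both sides, so any displayed topology in $F$ is already a resolution of the topology displayed by $T_1$; since $F_1$ is obtained from $T_1$ only by edge contractions between the blocks (never within a block, as the blocks agree with the $\dot F_i$), the restriction of $T_1$ to each block equals the restriction of $F_1$ to that block. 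Consequently $F$ displaying a resolution of $T_1$ within each block is the same as $F$ displaying a resolution of $F_1$ within each block, and the equivalence follows. I would present this carefully, as it is the crux where the shared-label-set hypothesis does all the work.
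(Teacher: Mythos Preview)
Your proposal is correct and aligns with the paper's approach, but you should know that the paper does not actually give a formal proof of this lemma; it only offers the one-sentence sketch in the paragraph immediately preceding the statement (``Conversely, an agreement forest of $T_1$ and $T_2$ is an agreement forest of $F_1$ and $F_2$ if it is a forest of $F_2$ and there are no two leaves $a$ and $b$ such that $a \reach[F_2] b$ but $a \noreach[F_1] b$''). Your write-up is effectively a fleshed-out version of that sketch: the forward direction is immediate, and for the reverse direction you use exactly the reachability condition the paper names, then go further and justify why the \emph{topology} within each block is preserved (because $\dot T_i = \induced[T_1]{X_i}$, so the triples displayed by $T_1$ and by $F_1$ on each $X_i$ coincide). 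That extra care is warranted and the paper simply takes it for granted.

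One terminological slip to fix: you write ``$F_1$ is obtained from $T_1$ only by edge contractions between the blocks,'' but $F_1$ is a forest of $T_1$, meaning it is obtained by \emph{cutting} edges (followed by suppressing degree-two vertices), not by contracting them. Your intended meaning --- that the cuts separating the $\dot T_i$ do not alter the induced topology on any single $X_i$ --- is correct; just use the right word. Also, your claim that ``$F$ is automatically a forest of $F_2$'' deserves one more line: strictly, you need $F$ to be a forest of a \emph{binary} resolution of $F_2$, which follows because any AF is itself binary, so any binary resolution $F_2''$ of $F_2'$ satisfies $F_2'' \div E = F$.
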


To use Lemma~\ref{lem:edge-shift} to prove structural properties of agreement
forests, which are defined in terms of resolutions of forests, we also need the
following lemma, which specifies when an expansion does not change the SPR
distance.
Its proof is provided in Section~\ref{sec:lem:expand:proof} in the supplementary
material.

\begin{lemma}
  \label{lem:expand}
  Let $F_1$ and $F_2$ be resolutions of forests of rooted $X$-trees $T_1$ and
  $T_2$, and let $F \div E$ be a maximum agreement forest of $F_1$ and $F_2$,
  where $F$ is a binary resolution of $F_2$.
  Let $a_1, a_2, \ldots, a_p, a_{p+1}, \ldots, a_m$ be the children of a node in
  $F_2$ and let $F_2'$ be the result of expanding
  $\set{a_{p+1}, a_{p+2}, \ldots, a_m}$ in $F_2$.
  If $a_i' \noreach[F \div E] a_j'$, for all $1 \le i \le p$, $p+1 \le j \le m$,
  and all leaves $a_i' \in F_2^{a_i}$ and $a_j' \in F_2^{a_j}$, then
  $\ecut{F_1, F_2, F_2} = \ecut{F_1, F_2, F_2'}$.
\end{lemma}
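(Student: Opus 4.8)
The plan is to establish the two inequalities $\ecut{F_1,F_2,F_2} \le \ecut{F_1,F_2,F_2'}$ and $\ecut{F_1,F_2,F_2'} \le \ecut{F_1,F_2,F_2}$ separately. The first is immediate: an expansion is the inverse of a contraction, so $F_2'$ is a resolution of $F_2$, and therefore every binary resolution of $F_2'$ is also a binary resolution of $F_2$. Since $\ecut{F_1,F_2,F}$ is obtained by minimizing $\abs{E}$ over all binary resolutions $F'$ of its third argument and all edge sets $E$ with $F' \div E$ an AF of $F_1$ and $F_2$, replacing $F_2'$ by $F_2$ only enlarges the feasible set and hence cannot increase the optimum.

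For the reverse inequality I would work with the given MAF $F \div E$, where $F$ is a binary resolution of $F_2$; because $F \div E$ is a maximum agreement forest, its edge set may be taken of minimum size, so $\abs{E} = \ecut{F_1,F_2,F_2}$. The goal is to exhibit a binary resolution $F^\ast$ of $F_2'$ and an edge set $E^\ast$ with $F^\ast \div E^\ast = F \div E$ and $\abs{E^\ast} = \abs{E}$, which yields $\ecut{F_1,F_2,F_2'} \le \abs{E^\ast} = \ecut{F_1,F_2,F_2}$. The hypothesis is exactly what makes this possible: since $a_i' \noreach[F\div E] a_j'$ for every leaf $a_i'$ below a child $a_i$ with $i \le p$ and every leaf $a_j'$ below a child $a_j$ with $j > p$, no component of $F \div E$ contains both a leaf of the left group $\set{a_1,\dots,a_p}$ and a leaf of the right group $\set{a_{p+1},\dots,a_m}$. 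Thus $F \div E$ already respects the grouping that the expansion to $F_2'$ enforces.

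The construction concentrates on the subtree $B$ by which $F$ resolves the multifurcation at the node $v$ whose children are $a_1,\dots,a_m$; outside $B$, and inside each clade $F^{a_i}$, the trees $F$ and $F^\ast$ will be identical, so the edges of $E$ in those regions are copied verbatim into $E^\ast$. Within $B$, the cut edges $E \cap B$ partition the clade roots $r_1,\dots,r_m$ into blocks, and the separation above forces every block to be entirely left or entirely right. I would then re-resolve $v$ as required by $F_2'$, namely with the right roots grouped under the new node $v_2$ and the left roots together with $v_2$ under $v$, choosing a binary tree $B^\ast$ and a cut set realizing the same partition of $\set{r_1,\dots,r_m}$ and the same attachment to $v$'s parent; because the partition is left/right-separated, this can be done with no more cuts than $\abs{E \cap B}$, cutting the edge $v v_2$ precisely when the whole right group must be detached. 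Lemma~\ref{lem:edge-shift} is the tool that certifies, one edge at a time, that relocating a cut in this way leaves the yielded forest unchanged, its condition~(2) being supplied by the non-reachability hypothesis; Lemma~\ref{lem:forest_ecut} then lets me read $F^\ast \div E^\ast$ as an agreement forest of $F_1$ and $F_2$.

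The main obstacle is the bookkeeping for this re-resolution: I must set up a cut-preserving bijection between the relevant edges of $B$ and those of $B^\ast$ and verify the reachability hypotheses of Lemma~\ref{lem:edge-shift} at each step — in particular condition~(2), that after inserting the replacement edge the shifted endpoint reaches no label, which is exactly the guarantee that the detached block is a full left or full right block. A secondary point to check is that re-resolving the unlabelled node $v$ does not alter the induced topology of any component of the yielded forest, so that $F^\ast \div E^\ast$ and $F \div E$ coincide as agreement forests and the cut count is genuinely preserved.
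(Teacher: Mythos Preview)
Your first inequality and the paper's coincide. For the reverse inequality, however, the paper takes a much shorter route than your constructive re-resolution: it simply argues that the given MAF $F \div E$ is already an agreement forest of $F_1$ and $F_2'$. Indeed, $F \div E$ is a forest of a binary resolution of $F_1$ by assumption; if it failed to be a forest of any binary resolution of $F_2'$, then---since $F_2$ and $F_2'$ differ only by the expansion of $\{a_{p+1},\dots,a_m\}$---some component of $F \div E$ would have to contain a leaf $a_i'\in F_2^{a_i}$ with $i\le p$ and a leaf $a_j'\in F_2^{a_j}$ with $j>p$, contradicting the hypothesis. That is the entire argument; the existence of a binary resolution $F^\ast$ of $F_2'$ and an edge set $E^\ast$ with $|E^\ast|=|E|$ then follows from the standard fact that the minimal cut set yielding a fixed forest has size equal to the difference in component counts.

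Your explicit construction is not wrong in spirit, but it is more work and has a soft spot. The assertion that ``every block is entirely left or entirely right'' does not follow directly from the hypothesis: the hypothesis separates \emph{leaves} in $F\div E$, not the clade roots $a_i$ in $B-(E\cap B)$. A mixed block can occur when every leaf below some $a_i$ has already been severed from $a_i$ by cuts inside $F^{a_i}$. You would first need to normalize $E$ (within $F$, using Lemma~\ref{lem:edge-shift}) to pull such cuts up to $e_{a_i}$ before the block claim holds. Note also that Lemma~\ref{lem:edge-shift} as stated moves one edge to another \emph{within the same forest}; it does not directly justify transporting cuts from $B$ in $F$ to $B^\ast$ in a different resolution $F^\ast$, so your invocation of it for that purpose is not quite apt. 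Finally, Lemma~\ref{lem:forest_ecut} is unnecessary here: once $F^\ast\div E^\ast = F\div E$, the right-hand side is already an AF of $F_1$ and $F_2$ by hypothesis.
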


\begin{figure}[tb]
  \centering
  \includegraphics{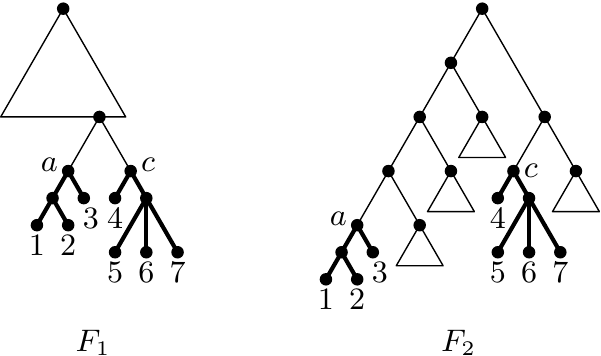}
  \caption{A sibling pair $\set{a, c}$ of two forests $F_1$ and $F_2$:
    $a$~and $c$ have a common parent in $F_1$, and both subtrees $F_1^a$ and
    $F_1^c$ exist also in $F_2$.}
  \label{fig:sibling-pair}
\end{figure}

A \emph{triple} $\triple{ab}{c}$ of a rooted forest $F$ is defined by a set
$\set{a, b, c}$ of three leaves in the same component of $F$ and such that the
path from $a$ to $b$ in $F$ is disjoint from the path from $c$ to the root of
the component.
Multifurcating trees also allow for triples $\triple[a]{b}{c}$ where
$a$, $b$, and $c$ share the same lowest common ancestor (LCA).
A triple $\triple{ab}{c}$ of a forest $F_1$ is \emph{compatible} with a forest
$F_2$ if it is also a triple of $F_2$ or $F_2$ contains the triple
$\triple[a]{b}{c}$; otherwise it is \emph{incompatible} with $F_2$.

An agreement forest of two forests $F_1$ and $F_2$ cannot contain a triple
incompatible with either of the two forests.
Thus, we have the following observation.

\begin{observation}
Let $F_1$ and $F_2$ be forests of rooted $X$-trees $T_1$ and $T_2$,
    and let $F$ be an agreement forest of $F_1$ and $F_2$.
    If $\triple{ab}{c}$ is a
    triple of $F_1$ incompatible with $F_2$, then $a \noreach[F] b$ or
    $a \noreach[F] c$.
    \label{obs:incompatible-triple}
\end{observation}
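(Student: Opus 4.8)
The plan is to argue by contradiction on the conclusion. Suppose, contrary to the claim, that $a \reach[F] b$ and $a \reach[F] c$. Since $F$ is a forest, being connected by a path is an equivalence relation, so these two assumptions place $a$, $b$, and $c$ together in a single component of $F$, where they induce a triple. I will show that this forces $\triple{ab}{c}$ to be compatible with $F_2$, contradicting the hypothesis.

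First I would identify the triple that $a$, $b$, $c$ induce in $F$. As $F$ is an agreement forest of $F_1$ and $F_2$, we may write $F = F_1' \div E_1$, where $F_1'$ is a binary resolution of $F_1$ and $E_1$ is an edge set. Because $F_1$ contains $\triple{ab}{c}$, the lowest common ancestor of $a$ and $b$ is a proper descendant of the lowest common ancestor of $a$, $b$, and $c$; a resolution only refines the tree and cannot disturb this ancestry, so $F_1'$ also contains $\triple{ab}{c}$. Passing from $F_1'$ to the yielded forest $F = F_1' \div E_1$ preserves the relative arrangement of any leaves that remain in a common component, so the triple induced by $a$, $b$, $c$ in $F$ is exactly $\triple{ab}{c}$.

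Next I would push this triple back onto $F_2$. Again using that $F$ is an agreement forest, write $F = F_2' \div E_2$ with $F_2'$ a binary resolution of $F_2$. By the same yield argument as above, $a$, $b$, $c$ induce the triple $\triple{ab}{c}$ in $F_2'$ as well. Since $F_2'$ is a binary resolution of $F_2$ and a resolution cannot turn $\triple{ac}{b}$ or $\triple{bc}{a}$ into $\triple{ab}{c}$, the only possibilities are that $F_2$ already contains $\triple{ab}{c}$ or that $F_2$ contains the multifurcating triple $\triple[a]{b}{c}$ which the resolution grouped into $\triple{ab}{c}$. In either case $\triple{ab}{c}$ is compatible with $F_2$ by definition, the desired contradiction; hence $a \noreach[F] b$ or $a \noreach[F] c$.

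The step I expect to require the most care is the pair of \emph{yield} claims, namely that cutting the edges in $E_i$ and contracting leaves the triple among $a$, $b$, $c$ intact whenever they stay in one component. Making this precise amounts to observing that none of the cut edges can lie on the path joining the cluster $\set{a, b}$ to $c$ (such a cut would separate $c$ and violate $a \reach[F] c$) nor on the pendant paths to $a$, $b$, or $c$ themselves; the remaining cuts lie either above the common ancestor or in unrelated parts of the component and therefore do not alter the triple. Everything else is a direct chain of implications through the definitions of resolution, yield, and compatibility.
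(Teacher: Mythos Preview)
Your proposal is correct and simply makes explicit what the paper leaves implicit: the paper states this result as an observation without proof, deriving it in one line from the remark that an agreement forest cannot contain a triple incompatible with either input forest. Your contradiction argument, pushing the triple $\triple{ab}{c}$ from $F_1$ through $F$ and back to $F_2$ via binary resolutions, is exactly the unpacking of that remark.
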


For two forests $F_1$ and $F_2$ with the same label set, two components $C_1$
and $C_2$ of $F_2$ are said to \emph{overlap} in $F_1$ if there exist leaves
$a, b \in C_1$ and $c, d \in C_2$ such that the paths from $a$ to $b$ and from
$c$ to $d$ in $F_1$ exist and are not edge-disjoint.
The following lemma is an easy extension of a lemma of \cite{bordewich08}, which
states the same result for binary trees instead of binary forests.

\begin{lemma}
  \label{lem:forest-condition}
  Let $F_1$ and $F_2$ be binary resolutions of forests of two \mbox{$X$-trees}
  $T_1$ and $T_2$, and denote the label sets of the components of $F_1$ by
  $X_1, X_2, \dots, X_k$ and the label sets of the components of $F_2$ by
  $Y_1, Y_2, \dots, Y_l$.
  $F_2$~is a forest of $F_1$ if and only if (1) for every $Y_j$, there exists an
  $X_i$ such that $Y_j \subseteq X_i$, (2) no two components of $F_2$ overlap in
  $F_1$, and (3) no triple of $F_2$ is incompatible with $F_1$.
\end{lemma}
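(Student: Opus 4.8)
The plan is to prove Lemma~\ref{lem:forest-condition} as a biconditional, establishing the two directions separately. Both directions characterize when $F_2$ is a forest of $F_1$, meaning $F_2 = F_1 \div E$ for some edge set $E$, so the proof must translate the combinatorial picture of deleting edges and contracting into the three stated conditions on label sets, overlaps, and triples.

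For the forward direction, I would assume $F_2$ is a forest of $F_1$, so $F_2 = F_1 \div E$ for some edge set $E$. First I would verify condition~(1): since each component of $F_2$ arises from $F_1 - E$ by contraction, every label set $Y_j$ of $F_2$ is the label set of a connected piece of $F_1 - E$, which is contained in some component of $F_1$ with label set $X_i$; hence $Y_j \subseteq X_i$. Next, for condition~(2), I would argue that two components $C_1, C_2$ of $F_2$ correspond to two disjoint connected subgraphs of $F_1 - E$; any two paths, one internal to each $C_i$, live in vertex-disjoint (hence edge-disjoint) parts of $F_1$, so no overlap can occur in $F_1$. Finally, condition~(3) follows from Observation~\ref{obs:incompatible-triple} applied with $F_2$ playing the role of the agreement forest of $F_1$ and $F_2$ (which it trivially is, being a forest of itself and of $F_1$): any triple $\triple{ab}{c}$ of $F_2$ with $a \reach[F_2] b$ and $a \reach[F_2] c$ cannot be incompatible with $F_1$, since otherwise the triple relation would be broken in $F_2$.

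For the converse, I would assume conditions~(1)--(3) hold and construct the edge set $E$ witnessing that $F_2$ is a forest of $F_1$. The natural candidate is to take $E$ to be the edges of $F_1$ that do not lie on any path between two leaves sharing a component of $F_2$; equivalently, $E$ separates $F_1$ into the pieces dictated by the partition $Y_1, \dots, Y_l$. I would then show $F_1 \div E = F_2$ by checking that each component of $F_1 - E$, after contraction, induces exactly the tree $\induced[F_1]{Y_j}$, and that condition~(3) forces $\induced[F_1]{Y_j}$ to equal the corresponding component of $F_2$. Condition~(1) guarantees the $Y_j$ refine the $X_i$ so that such a separating $E$ exists within components, and condition~(2) guarantees that the pieces are genuinely separable by edge deletions rather than entangled, so that cutting $E$ does not accidentally disconnect leaves that should stay together.

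The main obstacle I anticipate is the converse direction, specifically showing that the three conditions are jointly \emph{sufficient} to reconstruct $F_2$ exactly as a contracted subforest of $F_1$. The delicate point is that overlap-freeness (condition~(2)) ensures the leaf-sets of distinct $F_2$-components occupy edge-disjoint regions of $F_1$, so a clean cut exists, while triple-compatibility (condition~(3)) is precisely what forces the \emph{shape} of each induced subtree $\induced[F_1]{Y_j}$ to match the shape of the corresponding $F_2$-component rather than merely its label set; a mismatch in branching structure would manifest as an incompatible triple. Marrying these two conditions so that the constructed $E$ simultaneously isolates the right label sets and preserves the right topology is where the argument requires care. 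I would invoke Lemma~\ref{lem:edge-shift} if needed to argue that the specific choice of $E$ can be normalized (shifting deleted edges toward the cut boundary) without changing $F_1 \div E$, thereby reducing to the canonical separating set described above.
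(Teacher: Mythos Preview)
Your proposal is sound and follows the standard argument for this characterization. The paper itself does not give a proof of Lemma~\ref{lem:forest-condition}; it simply states that the lemma is ``an easy extension of a lemma of~\cite{bordewich08}, which states the same result for binary trees instead of binary forests.'' So there is no paper proof to compare against beyond that citation.

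A few remarks on your write-up. For the forward direction, invoking Observation~\ref{obs:incompatible-triple} for condition~(3) is unnecessarily indirect: since $F_2 = F_1 \div E$, each component of $F_2$ equals $\induced[F_1]{Y_j}$, and any rooted triple of $\induced[F_1]{Y_j}$ is automatically a triple of $F_1$; that is the whole argument. For the converse, your plan is correct, but the step ``condition~(3) forces $\induced[F_1]{Y_j}$ to equal the corresponding component of $F_2$'' deserves one more sentence: both are rooted binary trees on the leaf set $Y_j$, each such tree has exactly $\binom{|Y_j|}{3}$ resolved triples (one per leaf triple), and condition~(3) gives a containment of triple sets, hence equality, hence the trees coincide. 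Finally, Lemma~\ref{lem:edge-shift} is not needed here; your canonical choice of $E$ (all edges of $F_1$ not lying on any $\subtree[F_1]{Y_j}$) already works directly, so you can drop that remark.
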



\section{The Structure of Multifurcating\\Agreement Forests}

\label{sec:structure}

This section presents the structural results that provide the intuition and
formal basis for the algorithms presented in Section~\ref{sec:fpt}.
All these algorithms start with a pair of trees ($T_1$, $T_2$) and then cut
edges, expand sets of nodes, remove agreeing components from consideration, and
merge sibling pairs until the resulting forests are identical.
The intermediate state is that $T_1$ and $T_2$ have been resolved and reduced to
forests $F_1$ and $F_2$, respectively.
$F_1$ consists of a tree $\dot{T}_1$ and a set of components $F_0$ that exist in
$F_2$.
$F_2$ has two sets of components.
One is $F_0$.
The other, $\dot{F}_2$, has the same label set as $\dot{T}_1$ but may not agree
with $\dot{T}_1$.
The key in each iteration is deciding which edges in $\dot{F}_2$ to cut next or
which nodes to expand, in order to make progress towards an MAF of $T_1$
and~$T_2$.
The results in this section identify small edge sets in $\dot{F}_2$ such that at
least one edge in each of these sets has the property that cutting it reduces
$\ecut{T_1, T_2, F_2}$ by one.
Some of these edges are introduced by expanding nodes.
The approximation algorithm cuts all edges in
the identified set, and the size of the set gives the approximation ratio of the
algorithm.
The FPT algorithm tries each edge in the set in turn, so that the size of the
set gives the branching factor for a depth-bounded search algorithm.

\begin{figure*}[t]
  \centering
  \includegraphics{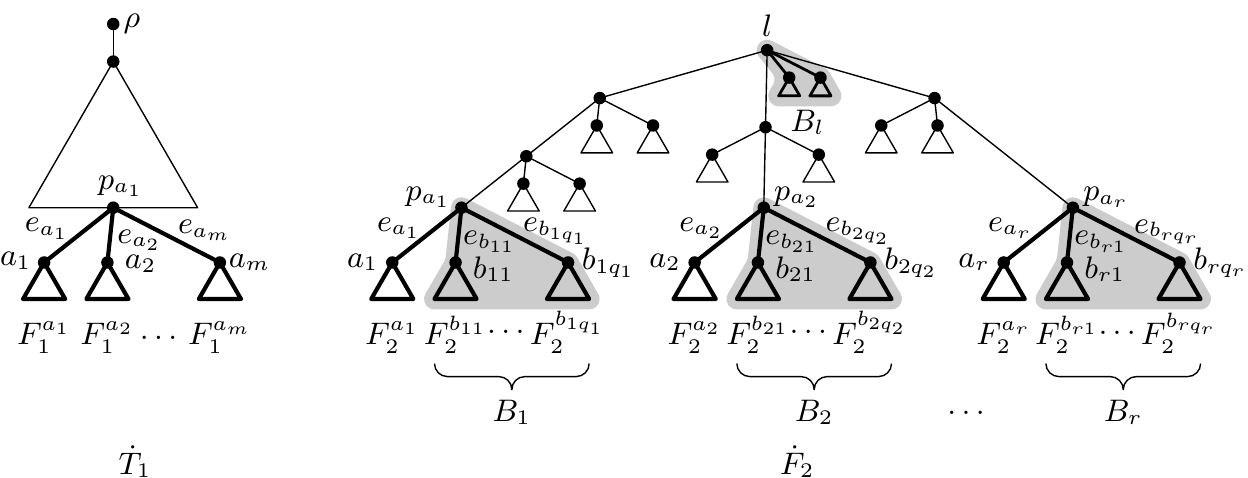}
  \caption{Tree labels for a sibling group $\set{a_1, a_2, \ldots, a_m}$
    such that $a_1, a_2, \ldots, a_r$ share a minimal LCA $l$.}
  \label{fig:non-sibling-a-definition}
\end{figure*}

Let $\set{a_1, a_2, \ldots, a_m}$ be a sibling group of $\dot{T}_1$.
If there exist indices $i \ne j$ such that $a_i$ and $a_j$ are also siblings
in~$F_2$, we can expand this sibling pair $\set{a_i, a_j}$ and replace $a_i$ and
$a_j$ with their parent node $(a_i, a_j)$ in the sibling group.
If there exists an index $i$ such that $F_2^{a_i}$ is a component of~$F_2$,
then we can cut $a_i$'s parent edge in~$F_1$, thereby removing $a_i$ from the
sibling group.
Thus, we can assume $a_i$ and $a_j$ are not siblings in~$F_2$, for all
$1 \le i < j \le m$, and $F_2^{a_i}$ is not a component of $F_2$,
for all $1 \le i \le m$.
We have $a_i \in \dot{F}_2$, for all $1 \le i \le m$, because $\dot{T}_1$ and
$\dot{F}_2$ have the same label set.
Let $B_i = \set{b_{i1}, b_{i2}, \ldots b_{iq_i}}$ be
the siblings of $a_i$ in $F_2$, for $1 \le i \le m$.
We use $\edge{x}$ to denote the edge connecting a node $x$ to its
parent~$\parent{x}$, $\edge{B_i}$ to denote the edge introduced by
expanding $B_i$, and $\parent{B_i}$ to denote the common parent of the nodes
in~$B_i$.
$F_2 - \set{\edge{B_i}}$ denotes the forest obtained from $F_2$ by expanding
$B_i$ and then cutting $\edge{B_i}$, and we use $F_2^{B_i}$ to denote the
subforest of $F_2$ comprised of the subtrees
$F_2^{b_{i1}}, F_2^{b_{i2}}, \dots, F_2^{b_{iq_i}}$.

Consider a subset $\set{a_{i_1}, a_{i_2}, \ldots, a_{i_r}}$ of a sibling group
$\set{a_1, a_2, \dots, a_m}$.
We say $a_{i_1}, a_{i_2}, \ldots, a_{i_r}$ \emph{share their LCA} $l$ if
$l = \lca[F_2]{a_i,a_j}$, for all $i,j \in \set{i_1, i_2, \ldots, i_r}$,
$i \ne j$.
If, in addition, $\lca[F_2]{a_i,a_j}$ is not a proper descendant of $l$, for all
$1 \le i < j \le m$, we say that $a_{i_1}, a_{i_2}, \ldots,
a_{i_r}$ \emph{share a minimal LCA} $l$.
For simplicity, we always order the elements of the group so that
$\set{a_{i_1}, a_{i_2}, \dots, a_{i_r}} = \set{a_1, a_2, \ldots, a_r}$ and
assume the subset that shares $l$ is maximal, that is, $a_i$ is not a descendant
of $l$, for all $r < i \le m$.
We use $B_l$ to denote the set of children of $l$ that do not have any member
$a_i$ of the sibling group as a descendant.
Note that $B_l \subseteq B_i$ when $a_i$ is a child of~$l$.
These labels are illustrated
in Figure~\ref{fig:non-sibling-a-definition}.

Our first result shows that at least one of the edges
$\edge{a_1}$, $\edge{a_2}$, $\edge{B_1}$, and $\edge{B_2}$ has the property
that cutting it reduces $\ecut{T_1, T_2, F_2}$ by one.
This implies that cutting $\edge{a_1}$, $\edge{a_2}$,
$\edge{\parent{a_1}}$, and
$\edge{\parent{a_2}}$ reduces $\ecut{T_1, T_2, F_2}$ by at least 1.

\begin{theorem}
  \label{thm:spr:fourway}
  Let $F_1$ and $F_2$ be forests of rooted $X$-trees $T_1$ and $T_2$,
  respectively, and assume $F_1$ consists of a tree $\dot{T}_1$ and a set of
  components that exist in $F_2$.
  Let $\set{a_1, a_2, \ldots, a_m}$ be a sibling group of $\dot T_1$ such that
  either $a_1, a_2, \ldots, a_r$ share a minimal LCA $l$ in $F_2$ or
  $a_2, a_3, \ldots, a_r$ share a minimal LCA $l$ in $F_2$ and
  $a_1 \noreach[F_2] a_i$, for all $2 \le i \le m$;
  $a_1$ is not a child of~$l$; $a_2$ is not a child of $l$ unless $r=2$;
  $a_i$ and $a_j$ are not siblings in~$F_2$, for all $1 \le i < j \le m$;
  and $F_2^{a_i}$ is not a component of~$F_2$, for all $1 \le i \le m$.
  Then
  \begin{enumerate}[label=(\roman*)]
  \item $\ecut{T_1, T_2, F_2 - \set{\edge{x}}} = \ecut{T_1, T_2, F_2} - 1$, for
    some $x \in \set{a_1, a_2, B_1, B_2}$.
  \item $\ecut{T_1, T_2, F_2 - \set{\edge{a_1}, \edge{a_2}, \edge{p_{a_1}},
	\edge{p_{a_2}}}} \le \ecut{T_1, T_2, F_2} - 1$.
  \end{enumerate}
\end{theorem}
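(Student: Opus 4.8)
The plan is to establish (i) by an exchange argument on a minimum edge set and then obtain (ii) as a short corollary. For (i), I would fix a binary resolution $F'$ of $F_2$ (with $B_1$ and $B_2$ expanded, so that $\edge{B_1}$ and $\edge{B_2}$ are available) together with a minimum edge set $E$ such that $F' \div E$ is a maximum agreement forest and $\size{E} = \ecut{T_1, T_2, F_2} =: k$; by Lemma~\ref{lem:forest_ecut} this is simultaneously an MAF of $T_1$ and $T_2$. It then suffices to show that $E$ can be chosen so that $\edge{x} \in E$ for some $x \in \set{a_1, a_2, B_1, B_2}$, since cutting $\edge{x}$ first witnesses $\ecut{T_1, T_2, F_2 - \set{\edge{x}}} \le k - 1$, while the reverse inequality holds automatically because deleting a single edge lowers the \ecut\ value by at most one.

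The core is a case analysis on whether $a_1 \reach[F' \div E] a_2$. The driving fact is that, for any leaf $\gamma$ below a node of $B_1$ and any leaves $\alpha$ below $a_1$ and $\beta$ below $a_2$, the triple $\triple{\alpha\beta}{\gamma}$ is a triple of $\dot{T}_1$ (where $a_1, a_2$ are siblings) but is incompatible with $F_2$ (where $\alpha$ and $\gamma$ meet at $\parent{a_1}$, strictly below $l$); Observation~\ref{obs:incompatible-triple} then forces $\alpha \noreach \beta$ or $\alpha \noreach \gamma$ in the MAF, and symmetrically for $B_2$. If $a_1 \noreach[F' \div E] a_2$, then $E$ cuts the path running from $a_1$ up through $\parent{a_1}$ to $l$ and back down to $a_2$, and I would use the edge-shift Lemma~\ref{lem:edge-shift} to slide a separating cut down onto $\edge{a_1}$ or $\edge{a_2}$. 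If instead $a_1 \reach[F' \div E] a_2$, then the incompatible triples force every leaf below $B_1$ (resp.\ $B_2$) to be separated from the common $\set{a_1, a_2}$-component, and I would expand $B_1$ (resp.\ $B_2$) via Lemma~\ref{lem:expand} and shift these cuts onto the single edge $\edge{B_1}$ (resp.\ $\edge{B_2}$). Either way $E$ acquires one of the four designated edges.

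The delicate step, and the main obstacle, is verifying the reachability side-conditions of Lemmas~\ref{lem:edge-shift} and~\ref{lem:expand} at each relocation, so that a necessary cut can always be driven onto one of $\edge{a_1}, \edge{a_2}, \edge{B_1}, \edge{B_2}$ rather than being stranded on an interior edge of the $a_1$--$l$--$a_2$ path or on one of the $B_l$-branches hanging off $l$. This is precisely where the hypotheses enter: ``$a_1$ (and $a_2$, when $r > 2$) is not a child of $l$'' guarantees that $B_1$ (resp.\ $B_2$) genuinely separates $\edge{a_1}$ from $l$, so the shift has somewhere to land; minimality of $l$ ensures that no closer pair of the sibling group forces a cut elsewhere first; and in the alternative hypothesis, where $a_1 \noreach[F_2] a_i$ for all $i \ge 2$, the separation of $a_1$ from $a_2$ is already present in $F_2$, so that branch collapses to the treatment of $a_2, a_3, \dots, a_r$ around $l$, handled exactly as above.

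For (ii), I would first record the local identity that cutting $\set{\edge{a_i}, \edge{\parent{a_i}}}$ in $F_2$ yields exactly the same forest as expanding $B_i$ and cutting $\set{\edge{a_i}, \edge{B_i}}$: both isolate $F_2^{a_i}$, detach the $B_i$-subtrees from everything above $\parent{a_i}$, and leave the rest unchanged. Since $a_1$ and $a_2$ lie in different child-subtrees of $l$, the two surgeries are independent, so with $S := \set{\edge{a_1}, \edge{a_2}, \edge{\parent{a_1}}, \edge{\parent{a_2}}}$ we have $F_2 - S = F_2 - \set{\edge{a_1}, \edge{a_2}, \edge{B_1}, \edge{B_2}}$ in the resolution with $B_1, B_2$ expanded. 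Part (i) supplies some $x \in \set{a_1, a_2, B_1, B_2}$ with $\ecut{T_1, T_2, F_2 - \set{\edge{x}}} = k - 1$, realized by an edge set $E_0$ of size $k-1$; since $\edge{x}$ is one of the four edges of $S$, the forest $F_2 - S$ is obtained from $F_2 - \set{\edge{x}}$ by deleting the remaining edges of $S$, and cutting edges of an agreement forest only refines it, preserving conditions (1)--(3) of Lemma~\ref{lem:forest-condition}, so the same $E_0$ still yields an AF from $F_2 - S$. Hence $\ecut{T_1, T_2, F_2 - S} \le \size{E_0} = k - 1$, which is (ii).
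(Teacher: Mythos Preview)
Your Case~2 ($a_1 \reach[F' \div E] a_2$) and your derivation of (ii) from (i) are correct and in line with the paper. The gap is in Case~1. You assert that when $a_1 \noreach[F' \div E] a_2$ you can slide a separating cut down onto $\edge{a_1}$ or $\edge{a_2}$, but side-condition~(2) of Lemma~\ref{lem:edge-shift} can fail: if some leaf $b_1' \in F_2^{B_1}$ reaches $\parent{a_1}$ in $F' \div E$, then after additionally cutting $\edge{a_1}$ that leaf still reaches the near endpoint of the old cut, so the shift to $\edge{a_1}$ is blocked; symmetrically at $a_2$. Concretely, take an MAF in which $a_1' \reach \parent{a_1} \reach b_1'$ and $a_2' \reach \parent{a_2} \reach b_2'$ for leaves $a_i' \in F_2^{a_i}$ and $b_i' \in F_2^{B_i}$. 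Then $a_1' \noreach a_2'$ (so you are in Case~1), yet neither shift to $\edge{a_1}$ nor to $\edge{a_2}$ is available, and the shifts to $\edge{B_1}$ and $\edge{B_2}$ are equally blocked because $a_1'$ and $a_2'$ reach $\parent{B_1}$ and $\parent{B_2}$.

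The paper does not split on whether $a_1 \reach a_2$. It checks, for each of the four target edges, whether the corresponding shift is obstructed; if all four are obstructed, one is in exactly the configuration above, and the proof closes with an argument you never invoke. Using the hypotheses (``$a_1$ not a child of $l$'', ``$a_2$ not a child of $l$ unless $r=2$'', and the disconnected-$a_1$ alternative) one verifies $b_1', b_2' \notin F_2^{a_i}$ for all $i$, so $F_1$ contains the triples $\triple{a_1'a_2'}{b_1'}$ and $\triple{a_1'a_2'}{b_2'}$. Hence in $F_1$ the paths from $a_1'$ to $b_1'$ and from $a_2'$ to $b_2'$ both pass through the edge above the sibling group's common parent, so the two components of $F' \div E$ containing $\set{a_1', b_1'}$ and $\set{a_2', b_2'}$ \emph{overlap} in $F_1$, contradicting condition~(2) of Lemma~\ref{lem:forest-condition}. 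This overlap contradiction, rather than Observation~\ref{obs:incompatible-triple} alone, is what eliminates the bad sub-case; without it your Case~1 does not close.
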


\begin{proof}
  (ii) follows immediately from (i) because cutting
  $\set{\edge{a_1}, \edge{a_2}, \edge{p_{a_1}}, \edge{p_{a_2}}}$ is
  equivalent to cutting
  $\set{\edge{a_1}, \edge{a_2}, \edge{B_1}, \edge{B_2}}$.
  For (i), it suffices to prove that there exist a binary resolution
  $F$ of $F_2$ and an edge set $E$ of size
  $\ecut{T_1, T_2, F_2}$ such that
  $F \div E$ is an MAF of $T_1$ and
  $T_2$ and $E \cap \set{\edge{a_1}, \edge{a_2}, \edge{B_1}, \edge{B_2}}
  \ne \emptyset$.

  So assume $F \div E$ is an MAF of $T_1$ and $T_2$ and
  $E \cap \set{\edge{a_1}, \edge{a_2}, \edge{B_1}, \edge{B_2}}
  = \emptyset$.
  By Lemma~\ref{lem:forest_ecut}, $F \div E$ is also an MAF of $F_1$ and $F_2$.
  We prove that we can replace some edge $f \in E$ with an edge in
  $\set{\edge{a_1}, \edge{a_2}, \edge{B_1}, \edge{B_2}}$
  without changing $F \div E$.

  First assume $b_1' \noreach[F \div E] x$, for all leaves $b_1'
  \in F_2^{B_1}$ and $x \notin\nobreak F_2^{B_1}$.
  By Lemma~\ref{lem:expand}, expanding $B_1$
  does not change $\ecut{F_1, F_2, F_2}$, so we can assume $F$
  contains this expansion.%
  \footnote{In fact, using the same ideas as in the proof of
    Lemma~\ref{lem:expand}, it is not difficult to see that this expansion
    never precludes obtaining the same forest $F \div E$ by cutting a different
    set of $\size{E}$ edges.
    We discuss the importance of this
    to hybridization and reticulate analysis in Section~\ref{sec:concl}.}
  Now we choose an arbitrary leaf $b_1' \in B_1$ and
  the first edge $f \in E$ on the path from $\parent{B_1}$ to~$b_1'$.
  By Lemma~\ref{lem:edge-shift}, $F \div E = F \div (E
  \setminus \set{f} \cup \set{\edge{B_1}})$.
  If $b_2' \noreach[F \div E] x$, for all leaves $b_2'
  \in\nobreak F_2^{B_2}$ and $x \notin F_2^{B_2}$, $a_1' \noreach[F \div E]
  \parent{a_1}$, for all leaves $a_1' \in F_2^{a_1}$, or $a_2'
  \noreach[F \div E] \parent{a_2}$, for all leaves $a_2' \in F_2^{a_2}$, then
  the same argument shows that $F \div E = F \div (E \setminus \set{f} \cup
  \set{\edge{x}})$, for $x=B_2$, $x=a_1$, and $x =\nobreak a_2$, respectively.
  Thus, we can assume there exist leaves \mbox{$a_1' \in F_2^{a_1}$}, $a_2' \in
  F_2^{a_2}$, $b_1' \in F_2^{B_1}$, and $b_2' \in F_2^{B_2}$ such that
  $a_1' \reach[F \div E] \parent{a_1} \reach[F \div E] b_1'$ and
  $a_2' \reach[F \div E] \parent{a_2} \reach[F \div E] b_2'$.

  Now recall that either $a_1, a_2, \ldots, a_r$ share the minimal LCA $l$
  and $a_1$ is not a child of $l$ or
  $a_2, a_3, \ldots, a_r$ share the minimal LCA $l$ and $a_1 \noreach[F_2] a_i$,
  for all $2 \le i \le m$.
  In either case, $a_i \notin F_2^{B_1}$, for all $1 \le i \le m$.
  Since $a_i$ also is not an ancestor of $p_{B_1}$, this shows that
  $b_1' \notin F_2^{a_i}$, for all $1 \le i \le m$.
  Thus, $F_1$ contains the triple $a_1'a_2'|b_1'$,
  while $F_2$ contains the triple $a_1'b_1'|a_2'$ or $a_1' \noreach[F_2] a_2'$.
  By Observation~\ref{obs:incompatible-triple}, this
  implies that $a_1' \reach[F \div E] \parent{a_1} \reach[F \div E] b_1'
  \noreach[F \div E] a_2' \reach[F \div E] \parent{a_2} \reach[F \div E] b_2'$
  and, hence, $b_2' \notin F_2^{a_1}$.
  Since $a_2$ is not a child of $l$ unless $r = 2$, we also have
  $b_2' \notin F_2^{a_i}$, for all $2 \le i \le m$.
  Thus, $F_1$ also contains the triple $a_1'a_2'|b_2'$, which implies that
  the components of $F \div E$ containing $a_1', b_1'$ and $a_2', b_2'$
  overlap in $F_1$, a contradiction.
\end{proof}

Theorem~\ref{thm:spr:fourway} covers every case where
some minimal LCA $l$ exists.
If there is no such minimal LCA, then
each $a_i$ must be in a separate component of $F_2$.
In the following
lemma we show the stronger result that cutting $\edge{a_1}$ or $\edge{a_2}$
reduces $\ecut{T_1, T_2, F_2}$ by one in this case (which immediately
implies that claims (i) and (ii) of Theorem~\ref{thm:spr:fourway} also hold
in this case).

\begin{lemma}[Isolated Siblings]
  \label{lem:spr:sc}
  If $a_1 \noreach[F_2] a_i$, for all $i \ne 1$,
  $a_2 \noreach[F_2] a_j$, for all $j \ne 2$, and
  $F_2^{a_i}$ is not a component of $F_2$, for all $1 \le i \le m$,
  then there exist a resolution $F$ of $F_2$ and an edge set $E$ of size
  $\ecut{T_1, T_2, F_2}$ such that
  $F \div E$ is an AF of $T_1$ and $T_2$ and $E \cap
  \set{\edge{a_1}, \edge{a_2}} \ne \emptyset$.
\end{lemma}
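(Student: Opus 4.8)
The plan is to follow the edge-shifting template from the proof of Theorem~\ref{thm:spr:fourway}, but with only the two candidate edges $\edge{a_1}$ and $\edge{a_2}$, and to replace the triple/LCA bookkeeping of that proof with a direct overlap argument that is available precisely because $a_1$ and $a_2$ are isolated in $F_2$. Suppose $F \div E$ is an MAF of $T_1$ and $T_2$ with $\size{E} = \ecut{T_1, T_2, F_2}$ and $E \cap \set{\edge{a_1}, \edge{a_2}} = \emptyset$; by Lemma~\ref{lem:forest_ecut} it is also an MAF of $F_1$ and $F_2$. I would show that either $E$ can be traded for a set of the same size that cuts $\edge{a_1}$ or $\edge{a_2}$ (yielding the claimed witness), or a contradiction arises.

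First I would dispose of the \emph{separable} cases as in Theorem~\ref{thm:spr:fourway}. Since $F_2^{a_1}$ is attached to the rest of its component of $F_2$ only through the edge $\edge{a_1} \notin E$, if no leaf of $F_2^{a_1}$ reaches a leaf outside $F_2^{a_1}$ in $F \div E$, then an application of Lemma~\ref{lem:edge-shift} trades a suitable $f \in E$ for $\edge{a_1}$ without changing $F \div E$, producing the required witness; the case of $a_2$ is symmetric. I may therefore assume the non-separable case: there are leaves $a_1' \in F_2^{a_1}$ and $a_2' \in F_2^{a_2}$ whose components in $F \div E$ contain leaves $b_1' \notin F_2^{a_1}$ and $b_2' \notin F_2^{a_2}$, respectively.

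The heart of the argument is then to locate $b_1'$ and $b_2'$ relative to the sibling group. Since $F \div E$ is a forest of $F_2$, each of $a_1'$ and $a_2'$ lies in the same component of $F_2$ as $b_1'$ and $b_2'$, respectively. The hypotheses $a_1 \noreach[F_2] a_i$ (for $i \ne 1$) and $a_2 \noreach[F_2] a_j$ (for $j \ne 2$) place $a_1$ and $a_2$ in components of $F_2$ disjoint from every other $F_2^{a_i}$, so together with $b_1' \notin F_2^{a_1}$ and $b_2' \notin F_2^{a_2}$ this gives $b_1', b_2' \notin F_2^{a_i}$ for all $1 \le i \le m$. Because $\set{a_1, \ldots, a_m}$ is a sibling group of $\dot{T}_1$ with common parent $p$ and no sibling outside the group, the leaves below $p$ in $F_1$ are exactly those of $\bigcup_i F_2^{a_i}$; hence $b_1'$ and $b_2'$ lie outside the subtree rooted at $p$, while $a_1'$ and $a_2'$ lie inside it. Consequently the paths from $a_1'$ to $b_1'$ and from $a_2'$ to $b_2'$ in $F_1$ both cross the edge above $p$, and they still do so in every binary resolution of $F_1$, since that edge is never subdivided by a resolution.

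To finish, I would use $a_1 \noreach[F_2] a_2$ to conclude $a_1' \noreach[F \div E] a_2'$, so that $a_1', b_1'$ and $a_2', b_2'$ lie in two distinct components of $F \div E$. These two components overlap in $F_1$, as their defining paths share the edge above $p$, contradicting part~(2) of Lemma~\ref{lem:forest-condition} applied to the binary resolution of $F_1$ of which $F \div E$ is a forest. I expect the main obstacle to be the separable-case step: verifying the reachability hypotheses of Lemma~\ref{lem:edge-shift} carefully enough to trade a genuine cut edge for $\edge{a_1}$ or $\edge{a_2}$, and confirming that ``not separable'' really does furnish a reached leaf strictly outside $F_2^{a_1}$ (resp.\ $F_2^{a_2}$). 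Once that is in place, the remaining steps are forced by the isolation hypotheses, which is exactly what makes the stronger two-edge conclusion possible in the absence of a minimal LCA.
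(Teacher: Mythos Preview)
Your proposal is correct and follows essentially the same approach as the paper. Both arguments combine an edge-shifting step via Lemma~\ref{lem:edge-shift} with an overlap argument in $F_1$ that exploits the isolation hypotheses; you present these in the opposite order (handle the separable case first, then derive an overlap contradiction when both $a_1$ and $a_2$ reach outside leaves), whereas the paper first asserts that one of $a_1',a_2'$ must fail to reach any outside leaf and then applies the edge shift, leaving the overlap reasoning implicit in the phrase ``we must have.'' Your version is thus a slightly more explicit rendition of the same proof.
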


\begin{proof}
  Consider an edge set $E$ of size $\ecut{T_1, T_2, F_2}$ and such
  that $F \div E$ is an AF of $F_1$ and $F_2$, and assume $E$
  is chosen so that $\size{E \cap \set{\edge{a_1}, \edge{a_2}}}$ is
  maximized.
  Assume for the sake of contradiction that $E \cap \set{\edge{a_1}, \edge{a_2}}
  = \emptyset$.
  Then, by the same arguments as in the proof of Theorem~\ref{thm:spr:fourway},
  there exist leaves $a_1' \in F_2^{a_1}$ and $a_2' \in F_2^{a_2}$ such that
  $a_1' \reach[F \div E] a_1$ and $a_2' \reach[F \div E] a_2$.
  Since $\set{a_1, a_2, \ldots, a_m}$ is a
  sibling group of $F_1$ but $a_1 \noreach[F_2] a_i$, for all $i \ne 1$, and
  $a_2 \noreach[F_2] a_j$, for all $j \ne 2$,
  we must have $a_1' \noreach[F - E] x$, for all
  leaves $x \notin F_2^{a_1}$, or $a_2' \noreach[F - E] x$,
  for all leaves $x \notin F_2^{a_2}$.
  W.l.o.g.\ assume the former.
  Since $F_2^{a_1}$ is not a component of~$F_2$, there exists a leaf $y
  \notin F_2^{a_1}$ such that $a_1 \reach[F_2] y$ and, hence,
  $a_1' \reach[F_2] y$.
  For each such leaf $y$, the path from $a_1'$ to $y$ in $F$ contains an edge
  in $E$ because $a_1' \noreach[F \div E] y$, and this edge does not belong to
  $F_2^{a_1}$ because $a_1' \reach[F \div E] a_1$.
  We pick an arbitrary such leaf $y$, and let $f$
  be the first edge in $E$ on the path from $a_1'$ to $y$.
  The edges $\edge{a_1}$ and $f$ satisfy the conditions of
  Lemma~\ref{lem:edge-shift}, that is, $F \div E = F \div (E
  \setminus \set{f} \cup \set{\edge{a_1}})$.
  This contradicts the choice of $E$.
\end{proof}

Theorem~\ref{thm:spr:fourway} and Lemma~\ref{lem:spr:sc} are all
that is needed to obtain a linear-time 4-approximation algorithm and
an FPT algorithm with running time $\Oh{4^k n}$ for computing rooted MAFs,
an observation made independently in~\cite{vaniersel}.
To improve on this in our algorithms in Section~\ref{sec:fpt}, we exploit
a useful observation from the proof of Theorem~\ref{thm:spr:fourway}:
if there exists an MAF $F \div E$ and leaves $a_1' \in F_2^{a_1}$ and
$b_1' \in F_2^{B_1}$ such that $a_1' \reach[F \div E] b_1'$, then $a_2'
\noreach[F \div E] b_2'$, for all $a_2' \in F_2^{a_2}$ and $b_2' \in F_2^{B_2}$.
This implies that, if we choose to cut $\edge{a_2}$
or $\edge{B_2}$ and keep both $\edge{a_1}$ and $\edge{B_1}$ in a branch of
our FPT algorithm, then we need only decide which edge, $\edge{a_j}$ or
$\edge{B_j}$, to cut in each pair $\set{\edge{a_j}, \edge{B_j}}$, for
$3 \le j \le r$, in subsequent steps of this branch.
This allows us to follow each 4-way branch in the algorithm (where we decide
whether to cut $\edge{a_1}$, $\edge{B_1}$, $\edge{a_2}$ or $\edge{B_2}$)
by a series of 2-way branches.
We cannot use this idea when a sibling group consists of
only two nodes.
The following lemma addresses this case.
Its proof is provided in Section~\ref{sec:lem:spr:twosiblings:proof}
of the supplementary material.

\begin{lemma}
  \label{lem:spr:twosiblings}
  Let $T_1$ and $T_2$ be rooted $X$-trees, and let $F_1$ be a forest of
  $T_1$ and $F_2$ a forest of $T_2$.
  Suppose $F_1$ consists of a tree
  $\dot{T}_1$ and a set of components that exist in $F_2$.
  Let $\set{a_1, a_2}$ be a sibling group of $\dot T_1$ such that neither
  $F_2^{a_1}$ nor $F_2^{a_2}$ is a component of $F_2$ and, if $a_1$ and
  $a_2$ share a minimal LCA $l$, then $a_1$ is not a child of $l$.
  In particular, $a_1$ and $a_2$ are not siblings in~$F_2$.
  Then
  \begin{enumerate}[label=(\roman*)]
  \item $\ecut{T_1, T_2, F_2 - \set{\edge{x}}} = \ecut{T_1, T_2, F_2} - 1$, for
    some $x \in \set{a_1, a_2, B_1}$.
  \item $\ecut{T_1, T_2, F_2 - \set{\edge{a_1}, \edge{a_2}, \edge{p_{a_1}}
      }} \le \ecut{T_1, T_2, F_2} - 1$.
  \end{enumerate}
\end{lemma}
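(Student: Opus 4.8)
The plan is to reduce part (ii) to part (i) and then prove (i) by running the argument of Theorem~\ref{thm:spr:fourway}, but replacing its final appeal to $\edge{B_2}$ by a direct appeal to $\edge{a_2}$; this is exactly what lets a sibling group of size two get by with three edges instead of four. As in Theorem~\ref{thm:spr:fourway}, (ii) follows from (i) because cutting $\set{\edge{a_1}, \edge{a_2}, \edge{p_{a_1}}}$ yields the same forest as cutting $\set{\edge{a_1}, \edge{a_2}, \edge{B_1}}$: once $\edge{a_1}$ is cut, the node $\parent{a_1}$ retains exactly the children in $B_1$, so severing it from above ($\edge{p_{a_1}}$) or from below ($\edge{B_1}$) isolates the same subforest $F_2^{B_1}$ up to contraction. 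For (i), by Lemma~\ref{lem:forest_ecut} it suffices to produce a resolution $F$ of $F_2$ and an edge set $E$ of size $\ecut{T_1, T_2, F_2}$ with $F \div E$ an MAF of $F_1$ and $F_2$ and $E \cap \set{\edge{a_1}, \edge{a_2}, \edge{B_1}} \ne \emptyset$. If $a_1 \noreach[F_2] a_2$, then, since the group has only two members, the hypotheses of Lemma~\ref{lem:spr:sc} are met and it already gives $E \cap \set{\edge{a_1}, \edge{a_2}} \ne \emptyset$. So I would focus on the remaining case, where $a_1$ and $a_2$ share a minimal LCA $l$ and, by hypothesis, $a_1$ is not a child of $l$.

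In this case I would begin with an MAF $F \div E$ whose cut set $E$ is disjoint from the three edges and try to shift some $f \in E$ onto one of them, mirroring Theorem~\ref{thm:spr:fourway}. Using Lemmas~\ref{lem:expand} and~\ref{lem:edge-shift}, the shift onto $\edge{B_1}$ or $\edge{a_1}$ succeeds unless there are leaves $a_1' \in F_2^{a_1}$ and $b_1' \in F_2^{B_1}$ with $a_1' \reach[F \div E] \parent{a_1} \reach[F \div E] b_1'$, and the shift onto $\edge{a_2}$ succeeds unless there is a leaf $a_2' \in F_2^{a_2}$ with $a_2' \reach[F \div E] \parent{a_2}$. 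Because $a_1$ is a proper descendant of $l$, its siblings $B_1$ sit strictly below $l$, so $b_1' \notin F_2^{a_1} \cup F_2^{a_2}$; hence $F_1$ contains the triple $\triple{a_1'a_2'}{b_1'}$ while $F_2$ contains $\triple[a_1']{b_1'}{a_2'}$ or separates $a_1', a_2'$ into distinct components. By Observation~\ref{obs:incompatible-triple} this forces $a_1' \noreach[F \div E] a_2'$.

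The crux is a dichotomy on $\parent{a_2}$ that avoids $\edge{B_2}$. If some leaf outside $F_2^{a_2}$ is reachable from $\parent{a_2}$ in $F \div E$, I would call it $b_2'$ and argue as in Theorem~\ref{thm:spr:fourway}: $b_2' \notin F_2^{a_1}$ (otherwise the path from $b_2'$ to $\parent{a_2}$ would pass through $\parent{a_1}$, giving $b_2' \reach[F \div E] a_1'$ and contradicting $a_1' \noreach[F \div E] a_2'$), so $F_1$ also contains $\triple{a_1'a_2'}{b_2'}$, and the components holding $\set{a_1', b_1'}$ and $\set{a_2', b_2'}$ overlap in $F_1$, contradicting Lemma~\ref{lem:forest-condition}. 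If instead every leaf reachable from $\parent{a_2}$ lies in $F_2^{a_2}$, then since $F_2^{a_2}$ is not a component of $F_2$ there is a leaf $y \notin F_2^{a_2}$ with $\parent{a_2} \reach[F_2] y$ but $\parent{a_2} \noreach[F \div E] y$; letting $f$ be the first edge of $E$ on the path from $\parent{a_2}$ to $y$, the pair $(\edge{a_2}, f)$ meets the conditions of Lemma~\ref{lem:edge-shift}, because cutting $\edge{a_2}$ strips $\parent{a_2}$ of its only leaf-bearing direction, so $F \div E = F \div (E \setminus \set{f} \cup \set{\edge{a_2}})$ and $\edge{a_2}$ enters the cut set. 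In both subcases a valid shift exists, which establishes (i).

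I expect the main obstacle to be precisely this second subcase: checking that when $\parent{a_2}$ reaches only into $F_2^{a_2}$ one may still charge the cut to $\edge{a_2}$ through Lemma~\ref{lem:edge-shift}, thereby dispensing with $\edge{B_2}$. A secondary point requiring care is confirming that $b_1'$ really lies outside $F_1^{q}$, where $q$ is the common parent of $a_1$ and $a_2$ in $\dot T_1$: since $B_1$ is nonempty and its leaves avoid $F_2^{a_1} \cup F_2^{a_2}$, the leaf $b_1'$ witnesses that $q$ is not the root of $\dot T_1$, which is what makes the two paths in the overlap argument share the edge immediately above $q$.
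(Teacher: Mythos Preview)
Your proposal is correct and follows essentially the same route as the paper: reduce (ii) to (i), then assume $E$ misses $\set{\edge{a_1},\edge{a_2},\edge{B_1}}$, run the Theorem~\ref{thm:spr:fourway} edge-shifting argument until you are left with $a_1' \reach[F\div E] \parent{a_1} \reach[F\div E] b_1'$ and $a_2' \reach[F\div E] \parent{a_2}$, deduce $a_1' \noreach[F\div E] a_2'$ from the incompatible triple, rule out any leaf outside $F_2^{a_2}$ reaching $\parent{a_2}$ via the overlap argument, and finally shift some $f\in E$ onto $\edge{a_2}$ via Lemma~\ref{lem:edge-shift}. The only cosmetic differences are that you organize the last step as an explicit dichotomy and separately invoke Lemma~\ref{lem:spr:sc} for the $a_1 \noreach[F_2] a_2$ case, and note that in the connected case $F_2$ actually contains $\triple{a_1'b_1'}{a_2'}$ rather than the unresolved triple $\triple[a_1']{b_1'}{a_2'}$, though this does not affect the incompatibility conclusion.
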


Next we examine the structure of a sibling group more closely as a basis
for a refined analysis that leads to our final FPT algorithm with
running time $\Oh{2.42^k n}$.
First we require the notion of \emph{pendant subtrees} that
we will be able to cut in unison.
Let $a_1, a_2, \ldots, a_r$ be the members
of a sibling group $\set{a_1, a_2, \dots, a_m}$ that share
a minimal LCA $l$ in $F_2$, and consider the path from $a_i$ to~$l$,
for any $1 \le i \le r$.
Let $x_1, x_2, \dots, x_{s_i}$ be the nodes on this path, excluding $a_i$
and $l$.
For each $x_j$, let $B_{ij}$ be the set of children of $x_j$, excluding the
child that is an ancestor of~$a_i$, and let $F_2^{B_{ij}}$ be the subforest
of $F_2$ consisting of all subtrees $F_2^b$, $b \in B_{ij}$.
Note that $B_{i1} = B_i$, and $F_2^{B_{i1}} = F_2^{B_i}$, if $s_i > 0$.
Analogously to the definition of $\edge{B_i}$,
we use~$\edge{B_{ij}}$, for $1 \le j \le s_i$, to
denote the edge introduced by expanding the nodes in $B_{ij}$ in $F_2$ and
$F_2 - \set{\edge{B_{ij}}}$ to denote the forest obtained by expanding $B_{ij}$
and cutting edge~$\edge{B_{ij}}$.
Note that expanding $B_{ij}$ turns the forest $F_2^{B_{ij}}$ into a single
\emph{pendant subtree} attached to $x_j$.
We distinguish five cases for the structure of the subtree of $F_2$ induced
by the paths between $a_1, a_2, \dots, a_r$ and $l$:
\begin{description}
\item[Isolated Siblings:] $a_1 \noreach[F_2] a_i$, for all $i \ne 1$, and
  $a_2 \noreach[F_2] a_j$, for all $j \ne 2$.
\item[At Most One Pendant Subtree:] $a_1, a_2, \ldots, a_r$ share a
  minimal LCA $l$ in $F_2$, $s_i = 1$,  for all $1 \le i < r$, and
  $a_r$ is a child of $l$.
\item[One Pendant Subtree:] $a_1, a_2, \ldots, a_r$ share a
  minimal LCA $l$ in $F_2$ and $s_i = 1$,  for all $1 \le i \le r$.
\item[\boldmath Multiple Pendant Subtrees, $m=2$:]
  $a_1$ and $a_2$ share a minimal LCA $l$ in $F_2$ and $s_1 + s_2 \ge 2$.
\item[\boldmath Multiple Pendant Subtrees, $m>2$:]
  $a_1, a_2, \ldots, a_r$ share a minimal LCA $l$ in $F_2$ and $s_1 \ge 2$.
\end{description}

Since we assume $F_2^{a_i}$ is not a component of $F_2$, for all
$1 \le i \le m$, and no two nodes $a_i$ and $a_j$ in the sibling
group are siblings in $F_2$,
we need only consider cases where at most one $a_i$ is a
child of some minimal LCA~$l$, and we always label it $a_r$.
Hence, $s_i > 0$, for all \mbox{$i < r$} such that $a_i \reach[F_2] l$.
Thus, the five cases above cover every possible configuration of a sibling
group where we must cut an edge of $F_2$.

The following four lemmas provide stronger statements than
Theorem~\ref{thm:spr:fourway} about subsets of edges of
a resolution $F$ of $F_2$ that need to be cut in each of the last four
cases above in order to make progress towards an AF of $T_1$ and $T_2$.
All four lemmas consider a sibling group $\set{a_1, a_2, \ldots, a_m}$ of
$\dot{T}_1$ as in Theorem~\ref{thm:spr:fourway} and assume
$F_2^{a_i}$ is not a component of $F_2$, for all $1 \le i \le m$.
Lemma~\ref{lem:spr:sc} above covers the first of the five cases.

\begin{lemma}[At Most One Pendant Subtree]
  \label{lem:spr:amops}
  If $a_1, a_2, \ldots, a_r$ share a
  minimal LCA $l$ in $F_2$, $s_i=1$, for $1 \le i < r$, and $a_r$ is a child of
  $l$, then there exist a binary resolution $F$ of $F_2$ and an edge set $E$ of
  size $\ecut{T_1, T_2, F_2}$ such that $F \div E$ is an AF of $T_1$ and $T_2$
  and either $\set{\edge{B_1}, \edge{B_2}, \ldots, \edge{B_{r-1}}} \subseteq E$
  or $\set{\edge{B_1}, \edge{B_2}, \dots, \edge{B_{i-1}}, \edge{B_{i+1}},
    \edge{B_{i+2}}, \dots, \edge{B_{r-1}}} \subseteq E$ and
  $a_i' \reach[F \div E] b_i'$, for some $1 \le i \le r - 1$ and two leaves
  $a_i' \in F_2^{a_i}$ and $b_i' \in F_2^{B_i}$.
\end{lemma}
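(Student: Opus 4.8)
The plan is to fix an arbitrary maximum agreement forest and argue that it can be reshaped, without increasing the number of cut edges, into one that cuts all but at most one of the edges $\edge{B_1}, \ldots, \edge{B_{r-1}}$. By Lemma~\ref{lem:forest_ecut} I may work with an MAF $F \div E$ of $F_1$ and $F_2$, where $F$ is a binary resolution of $F_2$ and $\size{E} = \ecut{T_1, T_2, F_2}$. For each $i < r$ the decisive quantity is whether $a_i$ stays attached to its siblings $B_i$, i.e.\ whether there exist leaves $a_i' \in F_2^{a_i}$ and $b_i' \in F_2^{B_i}$ with $a_i' \reach[F \div E] b_i'$.

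First I would prove the \emph{at-most-one} property: for any two indices $i \ne j$ in $\set{1, \ldots, r-1}$, the MAF cannot simultaneously satisfy $a_i' \reach[F \div E] b_i'$ and $a_j' \reach[F \div E] b_j'$. This is the pairwise argument already used in the proof of Theorem~\ref{thm:spr:fourway}, now applied to an arbitrary pair rather than to $a_1, a_2$. Since $s_i = s_j = 1$, the parents $\parent{a_i}$ and $\parent{a_j}$ are distinct children of $l$, so in $F_2$ the leaf $b_i'$ meets $a_i'$ at $\parent{a_i}$, strictly below $l = \lca[F_2]{a_i, a_j}$, giving the triple $\triple{a_i' b_i'}{a_j'}$; because the sibling group exhausts the children of their common parent $p$ in $\dot T_1$, every such $b_i'$ lies outside $F_1^p$, so $\dot T_1$ contains the incompatible triple $\triple{a_i' a_j'}{b_i'}$. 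Observation~\ref{obs:incompatible-triple} then forces $a_i' \noreach[F \div E] a_j'$, and a second triple $\triple{a_i' a_j'}{b_j'}$ exhibits an overlap of the components of $a_i', b_i'$ and $a_j', b_j'$ in $\dot T_1$, contradicting that $F \div E$ is an agreement forest.

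Next I would realize the cuts. For every $i < r$ with $a_i' \noreach[F \div E] b_i'$ for all such leaves, the hypothesis of Lemma~\ref{lem:expand} is met at $\parent{a_i}$, so expanding $B_i$ is free and I may assume $F$ contains this expansion, creating the edge $\edge{B_i}$. If $F_2^{B_i}$ is otherwise isolated in $F \div E$, a single application of Lemma~\ref{lem:edge-shift}, exactly as in Theorem~\ref{thm:spr:fourway}, replaces some $f \in E$ by $\edge{B_i}$ without changing $F \div E$. The remaining case is that $F_2^{B_i}$ connects upward through $\parent{a_i}$ to some leaf outside $F_2^{a_i} \cup F_2^{B_i}$; here I would instead detach $F_2^{B_i}$ (cut $\edge{B_i}$) and reattach $F_2^{a_i}$ at $\parent{a_i}$ in its place, obtaining an edge set $E'$ with $\size{E'} = \size{E}$ and $\edge{B_i} \in E'$. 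Performing these modifications over all such $i$ is consistent because the subtrees $F_2^{B_i}$ sit below the pairwise-distinct children $\parent{a_i}$ of $l$ and are therefore vertex-disjoint. If no index survives the at-most-one step, all of $\edge{B_1}, \ldots, \edge{B_{r-1}}$ are now cut (first alternative); if a single index $i$ survives, the remaining edges are cut and the witnessing leaves $a_i', b_i'$ give the second alternative.

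The main obstacle is the upward-connection subcase of the realization step: showing that the swap at $\parent{a_i}$ yields an agreement forest of the same size. Since $F_2^{a_i}$ and $F_2^{B_i}$ hang from the common point $\parent{a_i}$ in $F_2$, the new component is a legitimate subtree of $F_2$ and its internal $F_2$-triples are unchanged; the delicate point is $F_1$-compatibility, because in $\dot T_1$ the leaves of $F_2^{a_i}$ sit below $p$ whereas those of $F_2^{B_i}$ lie outside $F_1^p$. I expect to resolve this by comparing, for each pair of surviving leaves $w_1, w_2$ of the old component, the triples that $a_i'$ and $b_i'$ form with $w_1, w_2$: their $F_2$-relations with $w_1, w_2$ coincide because both $a_i'$ and $b_i'$ attach at $\parent{a_i}$, and the minimal-LCA structure together with the maximality of the sibling group constrains the upward leaves $w_1, w_2$ enough to force the corresponding $\dot T_1$-relations to agree as well, so that no incompatible triple or overlap is introduced.
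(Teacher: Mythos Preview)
Your at-most-one argument and the edge-shift for the ``isolated'' subcase are both fine and match the paper's first step.  The gap is in the ``upward-connection'' subcase.  The swap you propose---remove $\edge{a_i}$ from $E$ and insert $\edge{B_i}$---genuinely changes the agreement forest, and your triple-comparison argument does not establish $F_1$-compatibility of the result.  The point you rely on, that ``the $F_2$-relations of $a_i'$ and $b_i'$ with $w_1,w_2$ coincide,'' is correct, but it says nothing about their $F_1$-relations: $a_i'$ sits inside $F_1^p$ while $b_i'$ sits outside, so for upward leaves $w_1,w_2\notin F_1^p$ the triples $\triple{a_i'w_1}{w_2}$ and $\triple{b_i'w_1}{w_2}$ in $F_1$ are governed by completely different LCAs.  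A small instance already breaks the swap: take $m=r=2$, leaves $a_1,a_2,b,w_1,w_2$, with $\dot T_1 = ((b,w_1),((a_1,a_2),w_2))$ and $F_2 = (w_2,(w_1,(a_2,(a_1,b))))$ (so $B_1=\{b\}$, $s_1=1$, $a_2$ a child of $l$).  Then $E=\{\edge{a_1},\edge{a_2}\}$ yields an MAF in which $b$ reaches $w_1,w_2$ upward; your swap gives $E'=\{\edge{b},\edge{a_2}\}$, whose large component contains $a_1,w_1,w_2$ and realises $\triple{a_1w_1}{w_2}$ in $F_2$ but $\triple{a_1w_2}{w_1}$ in $F_1$---not an agreement forest.

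The paper does \emph{not} attempt a local swap at each $i$.  Instead it performs one global reconstruction: it collects the indices $I'=\{i:\edge{a_i}\in E\}$, builds a new resolution $F'$ in which $a_r$ and all the $p_{a_i}$ with $i\in I'$ are grouped together under a common node $l''$, and replaces the cuts $\edge{a_i}$ by $\edge{B_i}$ (adding $\edge{l''}$ if extra cuts were available).  The verification that $F'\div E'$ is an AF then hinges on $a_r$: any putative bad triple involving some $a_i'$ with $i\in I'$ would, because $a_i$ and $a_r$ are now joined at $l''$, force a corresponding bad triple involving $a_r'$, and that triple already existed in $F\div E$.  In the instance above this construction gives exactly $E'=\{\edge{b},\edge{l}\}$, which \emph{is} an MAF.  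The use of $a_r$ as a pivot is the idea your local argument is missing.
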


\begin{proof}
  Let $F$ be a binary resolution of $F_2$, and $E$ an edge set of size
  $\ecut{T_1, T_2, F_2}$ such that $F \div E$ is an AF of $F_1$ and $F_2$, and
  assume $F$ and $E$ are chosen so that $\size{E \cap \set{\edge{a_1},
      \edge{a_2}, \dots, \edge{a_r}, \edge{B_1}, \edge{B_2}, \dots,
      \edge{B_{r-1}}}}$ is maximized.
  Let $I := \set{i \mid 1 \le i \le r-1 \text{ and } E \cap \set{\edge{a_i},
      \edge{B_i}} \ne \emptyset}$ and
  $I' := \set{i \mid 1 \le i \le r-1 \text{ and } \edge{a_i} \in E}$.

  First observe that $\size{I} \ge r-2$.
  Otherwise there would exist two indices $1 \le i < j \le r-1$ such
  that $E \cap \set{\edge{a_i}, \edge{B_i}, \edge{a_j}, \edge{B_j}} = \nobreak
  \emptyset$.
  By the choice of $E$ and Lemma~\ref{lem:edge-shift}, this would imply that
  there exist leaves $a_i' \in F_2^{a_i}$, $b_i' \in F_2^{B_i}$, $a_j' \in
  F_2^{a_j}$, and $b_j' \in F_2^{B_j}$ such that $a_i' \reach[F \div E] b_i'$
  and $a_j' \reach[F \div E] b_j'$.
  If $a_i' \reach[F \div E] a_j'$, then $\triple{a_i'b_i'}{a_j'}$ would be a
  triple of $F \div E$ incompatible with $F_1$.
  If $a_i' \noreach[F \div E] a_j'$, the components of $F \div E$ containing
  $a_i'$ and $a_j'$ would overlap in $F_1$.
  In both cases, $F \div E$ would not be an AF of $F_1$ and $F_2$, a
  contradiction.

  Since $\size{I} \ge r-2$ and $i \notin I$ implies that
  there are two leaves $a_i' \in F_2^{a_i}$ and $b_i' \in F_2^{B_i}$ such that
  $a_i' \reach[F \div E] b_i'$, the lemma follows if we can show that there
  exist a resolution $F'$ of $F_2$ and an edge set $E'$ of size $\size{E'} \le
  \size{E}$ such that (i) $F' \div E'$ is an AF of $F_1$ and $F_2$,
  (ii) $\set{\edge{B_i} \mid i \in I} \subseteq E'$, and
  (iii) for any $1 \le i \le r-1$, there exist leaves $a_i' \in F_2^{a_i}$
  and $b_i' \in F_2^{B_i}$ such that $a_i' \reach[F' \div E'] b_i'$ if and
  only if there exist leaves $a_i'' \in F_2^{a_i}$ and $b_i'' \in F_2^{B_i}$
  such that $a_i'' \reach[F \div E] b_i''$.

  Let $Y$ be the set of leaves in all trees $F_2^{a_i}$, $i \in I' \cup
  \set{r}$, and $F_2^{B_i}$, $i \in I'$, let $Z := X \setminus Y \cup
  \set{a_r'}$, for an arbitrary leaf $a_r' \in F_2^{a_r}$,
  let $l'$ be the LCA in $F$ of all nodes in $Y$, and let $E''$ be the set of
  edges in $E$ that belong to the paths between $l'$ and the nodes
  in $\set{a_i \mid i \in I' \cup \set{r}}$.
  Since $\edge{a_i} \in\nobreak E$, for all $i \in I'$, we have $\size{E''} \ge
  \size{I'}$.
  We construct $F'$ from $F_2$ by resolving every node set $B_i$ where
  $i \in I'$, resolving the set $\set{a_r} \cup \set{\parent{a_i} \mid
    i \in\nobreak I'}$, and resolving all remaining multifurcations so that
  $\induced[F]{Y} = \induced[F']{Y}$ and $\induced[F]{Z} = \induced[F']{Z}$.
  We define the set $E'$ to be $E' := E \setminus E'' \cup \set{\edge{B_i} \mid
    i \in I'}$ if $\size{E''} = \size{I'}$;
  otherwise $E' := E \setminus E'' \cup \set{\edge{B_i} \mid i \in I'} \cup
  \set{\edge{l''}}$, where $l''$ is the LCA in $F'$ of all nodes in~$Y$.
  It is easily verified that $F'$ and $E'$ satisfy properties (ii) and (iii)
  and that $\size{E'} \le \size{E}$.
  Thus, it remains to prove that $F' \div E'$ is an AF of $F_1$ and $F_2$.

  Any triple of $F' \div E'$ incompatible with $F_1$ has to involve exactly
  one leaf $a_i' \in F_2^{a_i}$, for some $i \in I'$, because any other triple
  exists either in $F \div E$ or in $F_1$ and, thus, is compatible with $F_1$.
  Thus, any triple $\triple{a_i'}{xy}$ or $\triple{a_i'x}{y}$ of $F' \div E'$
  incompatible with $F_1$ must satisfy $x,y \notin (F')^{l''}$ because
  $\edge{B_i} \in E'$, for all $i \in I'$.
  If $\size{E''} > \size{I'}$, no such triple exists because
  $\edge{l''} \in E'$.
  If $\size{E''} = \size{I'}$, observe that $x, y \notin (F')^{l''}$ implies
  that $\triple{a_r'}{xy}$ or $\triple{a_r'x}{y}$ is also a triple of
  $F' \div E'$ incompatible with $F_1$.
  By the construction of $F'$, this triple is also a triple of $F$, and since
  $E \setminus E' = \set{\edge{a_i} \mid i \in I'}$ and $x, y \notin F_2^{a_i}$,
  for all $i \in I'$, it is also a triple of $F \div E$ incompatible with $F_1$,
  a contradiction.

  If two components of $F' \div E'$ overlap in $F_1$, let $u, v, x, y$ be four
  leaves such that $u \reach[F' \div E'] v \noreach[F' \div E'] x
  \reach[F' \div E'] y$ and the two paths $P_{uv}$ and $P_{xy}$
  between $u$ and $v$ and between $x$ and $y$, respectively, share an edge.
  Let $Y'$ be the set of leaves in the subtrees $F_2^{a_i}$, $i \in I' \cup
  \set{r}$, and assume $P_{uv}$ has no fewer endpoints in $Y'$ than $P_{xy}$.

  If both endpoints of $P_{uv}$ are in $Y'$, the two paths cannot overlap
  because $a_1, a_2, \dots, a_r$ are siblings in $F_1$ and our choices of $E$
  and $E'$ ensure that all leaves in the same subtree $F_2^{a_i}$ belong to the
  same component of $F' \div E'$.

  If both $P_{uv}$ and $P_{xy}$ have one leaf in $Y'$, their corresponding paths
  in $F' \div E'$ include $l''$.
  Thus, $u \reach[F' \div E'] x$.

  If neither $P_{uv}$ nor $P_{xy}$ has an endpoint in $Y'$, then
  $u \reach[F \div E] v$ and $x \reach[F \div E] y$.
  If $P_{uv}$ has one endpoint in $Y'$, say $u \in Y'$, and $P_{xy}$
  does not have an endpoint in $Y'$, then $x \reach[F \div E] y$ and, by the
  same arguments we used to show that no triple of $F' \div E'$ is incompatible
  with $F_1$, there exists a leaf $a_r' \in F_2^{a_r}$ such that
  $a_r' \reach[F \div E] v$ and the path from $a_r'$ to $v$ in $F_1$ overlaps
  $P_{xy}$.
  Thus, in both cases we have two paths $P_{u'v}$, $u' \in \set{u, a_r'}$, and
  $P_{xy}$ in $F_1$ such that $u' \reach[F \div E] v$ and $x \reach[F \div E] y$
  and the two paths overlap.
  Since no two components of $F \div E$ overlap in $F_1$, these two paths belong
  to the same component of $F \div E$ and w.l.o.g.\ form the quartet
  $\quartet{u'x}{vy}$, while $u' \reach[F' \div E'] v \noreach[F' \div E'] x
  \reach[F' \div E'] y$.
  Since $E' \setminus E \subseteq \set{\edge{B_i} \mid i \in I'} \cup
  \set{\edge{l''}}$, we either have $x, y \in F_2^{B_i}$ and $u', v \notin
  F_2^{a_i} \cup F_2^{B_i}$, for some $i \in I'$, or $u', v \in (F')^{l''}$,
  $u', v \notin F_2^{B_i}$, for all $i \in I'$, and $x, y \notin (F')^{l''}$.
  In the former case, these four leaves form the quartet $\quartet{u'v}{xy}$
  in $F \div E$, a contradiction.
  In the latter case, we have $u', v \in Y'$, and we already argued that this
  case is impossible.
\end{proof}

\begin{lemma}[One Pendant Subtree]
  \label{lem:spr:ops}
  If $a_1, a_2, \ldots, a_r$ share a
  minimal LCA $l$ in $F_2$, $m>2$, and \mbox{$s_i=1$}, for all $1 \le i \le r$,
  then there exist a resolution $F$ of $F_2$ and an edge set
  $E$ of size $\ecut{T_1, T_2, F_2}$ such that
  $F \div E$ is an AF of $T_1$ and $T_2$ and either
  $\set{\edge{a_1}, \edge{a_2}, \dots, \edge{a_r}} \subseteq E$,
  $\set{\edge{B_1}, \edge{B_2}, \dots, \edge{B_r}} \subseteq E$ or
  there exists an index $1 \le i \le r$ and two leaves $a_i' \in F_2^{a_i}$
  and $b_i' \in F_2^{B_i}$ such that
  $a_i' \reach[F \div E] b_i'$ and
  either $\set{\edge{a_1}, \edge{a_2}, \dots, \edge{a_{i-1}},
    \edge{a_{i+1}}, \edge{a_{i+2}}, \dots, \edge{a_r}} \subseteq E$
  or $\set{\edge{B_1}, \edge{B_2}, \dots, \edge{B_{i-1}},
    \edge{B_{i+1}}, \edge{B_{i+2}}, \dots, \edge{B_r}} \subseteq E$.
\end{lemma}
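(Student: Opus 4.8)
The plan is to adapt the proof of Lemma~\ref{lem:spr:amops}, exploiting the fact that in this case all of $a_1, \dots, a_r$ play symmetric roles: each has $s_i = 1$ and hence a single pendant subforest $F_2^{B_i}$, so the conclusion becomes symmetric in the edges $\edge{a_i}$ and $\edge{B_i}$. First I would invoke Lemma~\ref{lem:forest_ecut} to work with an agreement forest $F \div E$ of $F_1$ and $F_2$ of size $\size{E} = \ecut{T_1, T_2, F_2}$, choosing $F$ and $E$ so as to maximize $\size{E \cap \set{\edge{a_1}, \dots, \edge{a_r}, \edge{B_1}, \dots, \edge{B_r}}}$. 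I will call an index $i$ \emph{live} if $a_i' \reach[F \div E] b_i'$ for some leaves $a_i' \in F_2^{a_i}$ and $b_i' \in F_2^{B_i}$.

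The first key step is a \emph{singleton principle}: at most one index is live. This is the incompatible-triple/overlap argument already used in the proof of Theorem~\ref{thm:spr:fourway} and in the ``$\size{I} \ge r-2$'' step of Lemma~\ref{lem:spr:amops}. Indeed, if $i \ne j$ were both live, then $a_i' \reach[F \div E] b_i'$ and $a_j' \reach[F \div E] b_j'$, and since $a_i, a_j$ are siblings in $F_1$ while $b_i', b_j'$ lie outside their sibling subtree, either $a_i' \reach[F \div E] a_j'$, giving the triple $\triple{a_i'b_i'}{a_j'}$ of $F \div E$ incompatible with $F_1$, or $a_i' \noreach[F \div E] a_j'$, giving two components of $F \div E$ that overlap in $F_1$; by Lemma~\ref{lem:forest-condition}, either outcome contradicts that $F \div E$ is an agreement forest. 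A second consequence of maximality, via Lemma~\ref{lem:edge-shift}, is that every non-live index $i$ satisfies $E \cap \set{\edge{a_i}, \edge{B_i}} \ne \emptyset$: otherwise $a_i$ or $B_i$ is separated from $\parent{a_i}$ by a cut edge that could be shifted into $\edge{a_i}$ or $\edge{B_i}$, raising the maximized count.

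Next I would split on whether a live index exists. If none does, then by the previous paragraph every index has one of its two edges in $E$, and I would run the analogue of the resolution-and-replacement construction of Lemma~\ref{lem:spr:amops}: since $a_1, \dots, a_r$ are siblings in $F_1$ and meet only at $l$ in $F_2$, I can resolve $l$ to gather $\parent{a_1}, \dots, \parent{a_r}$ into one clade and replace each cut edge $\edge{a_i} \in E$ by $\edge{B_i}$, obtaining a pair $(F', E')$ with $\size{E'} \le \size{E}$ and $\set{\edge{B_1}, \dots, \edge{B_r}} \subseteq E'$; this yields the second disjunct, while the mirror construction, clustering the pendant subtrees instead, yields the first. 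If instead some index $i_0$ is live, I would show that for each $j \ne i_0$ the node $a_j$ cannot reach $a_{i_0}$ in $F \div E$: because $a_{i_0}'$ and $b_{i_0}'$ share a clade below $l$ while $a_j$ meets them only at $l$, a surviving path $a_j' \reach[F \div E] a_{i_0}'$ would create the triple $\triple{a_{i_0}'b_{i_0}'}{a_j'}$ of $F \div E$, incompatible with the triple $\triple{a_{i_0}'a_j'}{b_{i_0}'}$ of $F_1$. Since the $a$'s meet pairwise only at $l$, this likewise forbids $a_j' \reach[F \div E] a_k'$ for distinct $j, k \ne i_0$, so each such $a_j$ is isolated from the group, and an edge-shift (Lemma~\ref{lem:edge-shift}) lets me assume $\edge{a_j} \in E$ for all $j \ne i_0$, yielding the third disjunct with free index $i_0$.

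I expect the main obstacle to be the bookkeeping in the no-live case, exactly as in Lemma~\ref{lem:spr:amops}: constructing the resolution $F'$ and the edge set $E'$ explicitly and then verifying, through the full triple-, overlap-, and quartet-based case analysis, that gathering the siblings introduces no triple incompatible with $F_1$ and no overlapping components, all while guaranteeing $\size{E'} \le \size{E}$. The live-index case is comparatively clean, reducing essentially to the single incompatible-triple observation above.
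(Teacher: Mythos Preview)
Your framework---maximizing $\size{E \cap \set{\edge{a_i}, \edge{B_i}}}$, establishing the singleton principle, and splitting on whether a live index exists---is sound. But the live-index branch has a real gap, and the split you chose forces you into more work than necessary.

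The gap is your claim that the $a$'s meeting pairwise only at $l$ ``likewise forbids $a_j' \reach[F \div E] a_k'$ for distinct $j, k \ne i_0$.'' It does not. Suppose $i_0$ is live and $a_j' \reach[F \div E] a_k'$ for some $j, k \ne i_0$. In $F_1$ the path from $a_j'$ to $a_k'$ uses only the two edges $\edge{a_j}$ and $\edge{a_k}$ incident to the common parent $p$ of the sibling group, whereas the path from $a_{i_0}'$ to $b_{i_0}'$ uses $\edge{a_{i_0}}$ and the edge above~$p$. These two paths share the vertex $p$ but no edge, so there is no overlap in $F_1$, and no incompatible triple arises either. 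Hence $a_j' \reach[F \div E] a_k'$ cannot be excluded. Once that connection survives, your edge-shift to force $\edge{a_j} \in E$ fails: with $\edge{B_j} \in E$ and $a_j$ still connected through $l$ to $a_k$, swapping $\edge{B_j}$ for $\edge{a_j}$ changes $F \div E$ rather than preserving it, so Lemma~\ref{lem:edge-shift} does not apply. You are left with a genuine mix of $\edge{a_j}$'s and $\edge{B_j}$'s among the non-live indices, matching none of the lemma's disjuncts; untangling it would require the same resolution-and-replacement machinery you reserved for the no-live case.

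The paper sidesteps all of this with a different case split: instead of asking whether a live index exists, it asks whether $E$ already contains some $\edge{B_j}$. If so (say $\edge{B_r} \in E$), then after cutting $\edge{B_r}$ the node $a_r$ becomes a child of $l$, and $F_2 \div \set{\edge{B_r}}$ lands exactly in the hypothesis of Lemma~\ref{lem:spr:amops}; invoking that lemma on the smaller instance yields the $\edge{B_i}$-type disjuncts directly, with no new construction. If instead $E \cap \set{\edge{B_1}, \dots, \edge{B_r}} = \emptyset$, then every non-live index $j$ is forced to have $\edge{a_j} \in E$ (no mix is possible), and the singleton principle immediately gives either $\set{\edge{a_1}, \dots, \edge{a_r}} \subseteq E$ or the third disjunct with $\edge{a_j} \in E$ for all $j \ne i_0$. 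The heavy lifting you anticipated is thus entirely delegated to the already-proved Lemma~\ref{lem:spr:amops}, and what remains is a two-line observation.
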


\begin{proof}
  Let $F$ be a resolution of $F_2$, and $E$ an edge set of size
  $\ecut{T_1, T_2, F_2}$ such that $F \div E$ is an AF of $F_1$ and $F_2$,
  and assume $F$ and $E$ are chosen so that $\size{E \cap
    \set{\edge{a_1}, \edge{a_2}, \dots, \edge{a_r},
      \edge{B_1}, \edge{B_2}, \dots, \edge{B_r}}}$ is maximized.

  If there exists an index $1 \le j \le r$ such that
  $\edge{B_j} \in\nobreak E$, then assume w.l.o.g.\ that $j = r$.
  The forests $F_1$ and $F_2' := F_2 \div \set{\edge{B_r}}$ satisfy the
  conditions of Lemma~\ref{lem:spr:amops}.
  Hence, there exist a resolution $F'$ of $F_2'$ and an edge set $E'$
  of size $\ecut{T_1, T_2, F_2'} = \ecut{T_1, T_2, F_2} - 1$ such that
  $F' \div E'$ is an AF of $F_1$ and $F_2'$ and, thus, of $F_1$ and $F_2$ and
  either $\set{\edge{B_1}, \edge{B_2}, \dots, \edge{B_{r-1}}} \subseteq E'$ or
  $a_i' \reach[F' \div E'] b_i'$ and
  $\set{\edge{B_1}, \edge{B_2}, \dots, \edge{B_{i-1}}, \edge{B_{i+1}},
    \edge{B_{i+2}}, \dots, \edge{B_{r-1}}} \subseteq E'$, for some index
  $1 \le i \le r-1$ and leaves $a_i' \in F_2^{a_i}$ and $b_i' \in F_2^{B_i}$.
  Thus, the resolution $F''$ of $F_2$ such that $F'' \div \set{\edge{B_r}} =
  F'$ and the edge set $E' \cup \set{\edge{B_1}}$ satisfy the lemma.

  If $E \cap \set{\edge{B_1}, \edge{B_2}, \dots, \edge{B_r}} = \emptyset$, then
  by the same arguments as in Lemma~\ref{lem:spr:amops}, we can have at most
  one index $1 \le i \le r$ such that $a_i' \reach[F \div E] b_i'$, for two
  leaves $a_i' \in F_2^{a_i}$ and $b_i' \in F_2^{B_i}$.
  If such an index $i$ exists, then $\set{\edge{a_1}, \edge{a_2}, \dots,
    \edge{a_{i-1}}, \edge{a_{i+1}}, \edge{a_{i+2}}, \dots, \edge{a_r}}
  \subseteq E$.
  If no such index exists, then $\set{\edge{a_1}, \edge{a_2}, \dots,
    \edge{a_r}} \subseteq E$.
\end{proof}

\begin{lemma}[Multiple Pendant Subtrees, $m=2$]
  \label{lem:spr:mps2}
  If $\set{a_1, a_2}$ is a sibling group such that $a_1$ and $a_2$ share a
  minimal LCA $l$ in $F_2$ and $s_1 + s_2 \ge 2$,
  then there exist a resolution $F$ of $F_2$ and an edge set $E$ of size
  $\ecut{T_1, T_2, F_2}$ such that
  $F \div E$ is an AF of $T_1$ and $T_2$ and either $E \cap
  \set{\edge{a_1}, \edge{a_2}} \ne \emptyset$ or
  $\set{\edge{B_{11}}, \edge{B_{12}}, \ldots, \edge{B_{1s_1}}} \cup
  \set{\edge{B_{21}}, \edge{B_{22}}, \ldots, \edge{B_{2s_2}}}
  \subseteq E$.
\end{lemma}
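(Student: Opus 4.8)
The plan mirrors the proofs of Theorem~\ref{thm:spr:fourway} and Lemma~\ref{lem:spr:sc}: I would exhibit a single witnessing agreement forest rather than transform an arbitrary one. By Lemma~\ref{lem:forest_ecut} it is enough to argue about AFs of $F_1$ and $F_2$, so I fix a resolution $F$ of $F_2$ and an edge set $E$ with $|E| = \ecut{T_1, T_2, F_2}$ such that $F \div E$ is an AF of $F_1$ and $F_2$, chosen among all such pairs so as to maximize $|E \cap R|$, where $R := \set{\edge{a_1}, \edge{a_2}} \cup \set{\edge{B_{ij}} : 1 \le i \le 2,\ 1 \le j \le s_i}$. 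If $E$ meets $\set{\edge{a_1}, \edge{a_2}}$ the first alternative holds, so I run the rest of the argument under the assumption $E \cap \set{\edge{a_1}, \edge{a_2}} = \emptyset$, aiming to show $\set{\edge{B_{ij}}} \subseteq E$.

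Two preliminaries set up the main step. First, for any pendant whose subtree $F_2^{B_{ij}}$ is separated from everything outside it in $F \div E$, Lemma~\ref{lem:expand} permits expanding $B_{ij}$ and Lemma~\ref{lem:edge-shift} permits moving a cut onto $\edge{B_{ij}}$ without altering $F \div E$; maximality of $|E \cap R|$ then forces $\edge{B_{ij}} \in E$ already. Hence the second alternative follows as soon as I show that \emph{no} pendant remains attached. Second, I record the fact that drives the conflict: because $\set{a_1, a_2}$ is the whole sibling group, the common parent $p$ of $a_1$ and $a_2$ in $\dot T_1$ has exactly these two children, so every leaf outside $F_2^{a_1} \cup F_2^{a_2}$ lies strictly above $p$ in $F_1$, and any $F_1$-path from a leaf of $F_2^{a_1}$ or $F_2^{a_2}$ to such a leaf runs through the edge joining $p$ to its parent. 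Moreover, maximality together with the shift used in Lemma~\ref{lem:spr:sc} guarantees that some leaf of $F_2^{a_1}$ reaches a leaf outside $F_2^{a_1}$ in $F \div E$ (otherwise $F_2^{a_1}$ could be severed by cutting $\edge{a_1}$), and symmetrically for $a_2$.

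The crux is a case split on whether some leaf $a_1' \in F_2^{a_1}$ reaches some leaf $a_2' \in F_2^{a_2}$ in $F \div E$. If it does \emph{not}, then the component $C_1$ of the leaf guaranteed above and the analogous $C_2$ are distinct, and the leaf each reaches outside its own $a_i$-subtree must lie outside $F_2^{a_1} \cup F_2^{a_2}$ (a leaf of the other subtree would connect $a_1$ and $a_2$). By the structural fact, $C_1$ and $C_2$ then each carry an $F_1$-path through the edge above $p$; these paths share that edge, so $C_1$ and $C_2$ overlap in $F_1$, contradicting that $F \div E$ is an AF (Lemma~\ref{lem:forest-condition}). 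Therefore some $a_1'$ reaches some $a_2'$, and their common component contains the entire spine from $a_1$ to $a_2$ through $l$, hence every spine node $x_j$. If a pendant were still attached, a leaf $b' \in F_2^{B_{ij}}$ would reach its $x_j$ and join this component; since $b'$ sits below $l$, we would obtain the triple $\triple{a_1'b'}{a_2'}$ in $F \div E$ while $F_1$ contains $\triple{a_1'a_2'}{b'}$, an incompatibility ruled out by Observation~\ref{obs:incompatible-triple}. So no pendant is attached, and the first preliminary completes the proof.

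I expect the overlap step to be the main obstacle. The genuine content of the lemma is that keeping both $a_1$ and $a_2$ attached forbids \emph{every} pendant from surviving, not merely all but one as in Theorem~\ref{thm:spr:fourway}; the device that makes this work is the reduction of ``$a_1$ and $a_2$ each reach outside their subtrees but are mutually disconnected'' to two distinct $F_1$-paths colliding on the single edge above $p$. The remaining parts are routine but require care: checking the hypotheses of Lemma~\ref{lem:edge-shift} for each shift (that the segment between the displaced cut and $\edge{B_{ij}}$ or $\edge{a_i}$ is leaf-free, and that the displaced cut lies outside $R$ so the shift strictly increases $|E \cap R|$), invoking Lemma~\ref{lem:expand} only when a pendant is genuinely detached, and confirming that reachability in $F \div E$ transfers to the $F_1$-paths underlying the triple and overlap arguments.
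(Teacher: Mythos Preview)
Your approach is sound and genuinely different from the paper's. The paper argues by induction on $s = s_1 + s_2$: the base case $s = 1$ is Lemma~\ref{lem:spr:twosiblings}, and for $s > 1$ it invokes Theorem~\ref{thm:spr:fourway} to place one of $\edge{a_1}, \edge{a_2}, \edge{B_{11}}, \edge{B_{21}}$ into $E$ and then applies the inductive hypothesis to $F_2$ with that edge cut. You instead give a direct argument: pick $(F,E)$ extremal in $|E \cap R|$, and use a connectivity dichotomy---if $a_1$ and $a_2$ are joined in $F \div E$ then any surviving pendant produces an incompatible triple, and if they are not joined then the two components through $a_1$ and $a_2$ overlap in $F_1$ on the edge above their common parent---to conclude that no pendant survives, after which edge-shifting forces every $\edge{B_{ij}}$ into $E$. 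Your route is self-contained (it does not rely on Lemma~\ref{lem:spr:twosiblings}) and makes the structural reason visible; the paper's route is shorter because the work is already packaged in Theorem~\ref{thm:spr:fourway} and Lemma~\ref{lem:spr:twosiblings}.

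One technical point needs more care than you give it. Your second preliminary---that some leaf of $F_2^{a_1}$ reaches a leaf outside $F_2^{a_1}$---does not follow from maximality of $|E \cap R|$ via the shift of Lemma~\ref{lem:spr:sc} as stated, because that shift displaces an edge $f$ on the path from $\parent{a_1}$ \emph{outward}, and such an $f$ may well be some $\edge{B_{ij}} \in R$, so the swap need not increase $|E \cap R|$. The repair is routine: either maximize lexicographically (first $|E \cap \set{\edge{a_1},\edge{a_2}}|$, then $|E \cap R|$), so that any shift placing $\edge{a_1}$ into $E$ contradicts the first coordinate; or note that if the component of $a_1'$ contains $\parent{a_1}$ but no outside leaf, then its portion above $a_1$ is leafless and is bounded by at least one spine edge or the edge above $l$, which lies outside $R$ and can serve as $f$. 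With this adjustment, both the overlap argument in your first case and the triple argument in your second case go through as written.
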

\begin{proof}
  We prove this by induction on $s = s_1 + s_2$.
  The base case is $s = 1$\footnote{We excluded this case from the statement of
  the lemma, in order to keep the cases covered by the different lemmas
  disjoint, but the lemma also holds for $s = 1$.
  A similar comment applies to Lemma~\ref{lem:spr:mps}.}
	and the claim follows from Lemma~\ref{lem:spr:twosiblings}.

  Having established the base case, we can assume $s > 1$ and the lemma
  holds for all \mbox{$1 \le s' < s$}.
  By Theorem~\ref{thm:spr:fourway}, there exist a resolution $F$ of
  $F_2$ and an edge set $E$ of size $\ecut{T_1, T_2, F_2}$ such that $F
  \div E$ is an AF of $F_1$ and $F_2$ and
  $E \cap \set{\edge{a_1}, \edge{a_2}, \edge{B_1}, \edge{B_2}} \ne \emptyset$.
  Assume $F$ and $E$ are chosen so that $\size{E \cap
    \set{\edge{a_1}, \edge{a_2}, \edge{B_1}, \edge{B_2}}}$ is maximized.
  If $E \cap \set{\edge{a_1}, \edge{a_2}} \ne \emptyset$, the lemma holds, so
  assume the contrary.
  If $s_2 = 0$ and
  $E \cap \set{\edge{a_1}, \edge{a_2}, \edge{B_1}, \edge{B_2}} =
  \set{\edge{B_2}}$, the choice of $F$ and $E$ and Lemma~\ref{lem:edge-shift}
  imply that there exist leaves $a_1' \in F_2^{a_1}$, $b_1' \in F_2^{B_1}$,
  $a_2' \in F_2^{a_2}$, and $x \notin F_2^{a_2}$ such that
  $a_1' \reach[F_2 \div \set{\edge{B_2}}] b_1'$ and
  $a_2' \reach[F_2 \div \set{\edge{B_2}}] x$.
  Thus, $a_1$ and $a_2$ satisfy the conditions of Lemma~\ref{lem:spr:sc} after
  cutting edge~$\edge{B_2}$, which implies that
  we can choose $F$ and $E$ so that $E \cap \set{\edge{a_1}, \edge{a_2}} \ne
  \emptyset$ in addition to $\edge{B_2} \in E$, a contradiction.
  Now assume $s_2 \ne 0$ or
  $E \cap \set{\edge{a_1}, \edge{a_2}, \edge{B_1}, \edge{B_2}} \ne
  \set{\edge{B_2}}$.
  In this case, $E \cap \set{\edge{B_{11}}, \edge{B_{21}}} \ne \emptyset$.
  Assume w.l.o.g.\ that $\edge{B_{11}} \in\nobreak E$.
  Then the inductive hypothesis shows that there exist
  a resolution $F'$ of $F_2$ and an edge set $E'$ of size
  $\ecut{T_1, T_2, F_2}$ such that $F' \div E'$ is an AF of
  $F_1$ and $F_2 \div \set{\edge{B_{11}}}$ (and, hence, of $F_1$ and~$F_2$)
  such that either $E' \cap \set{\edge{a_1}, \edge{a_2}} \ne \emptyset$ or
  $\set{\edge{B_{11}}, \edge{B_{12}}, \ldots, \edge{B_{1s_1}}} \cup
  \set{\edge{B_{21}}, \edge{B_{22}}, \ldots, \edge{B_{2s_2}}}
  \subseteq E'$.
  Thus, the lemma also holds in this case.
\end{proof}

The proofs of the following two lemmas are similar to that of Lemma 9.
They are provided in Sections~\ref{sec:lem:spr:mps:proof}
and~\ref{sec:lem:spr:mps-take3:proof} of the supplementary material.

\begin{lemma}[Multiple Pendant Subtrees, $m > 2$ and $r > 2$]
  \label{lem:spr:mps}
  If $a_1, a_2, \ldots, a_r$ share a
  minimal LCA~$l$ in $F_2$, $m>2$, $r>2$, and $s_1 \ge 2$, then
  there exist a resolution $F$ of $F_2$ and an edge set $E$ of size
  $\ecut{T_1, T_2, F_2}$ such that
  $F \div E$ is an AF of $T_1$ and $T_2$ and
  $E \cap \set{\edge{a_1}, \edge{a_2}} \ne \emptyset$,
  \mbox{$\set{\edge{B_{11}}, \edge{B_{12}}, \ldots, \edge{B_{1s_1}}}
    \subseteq E$} or
  \mbox{$\set{\edge{B_{21}}, \edge{B_{22}}, \ldots, \edge{B_{2s_2}}}
    \subseteq E$}.
\end{lemma}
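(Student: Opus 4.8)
The plan is to mirror the proof of Lemma~\ref{lem:spr:mps2}, arguing by induction on $s_1 + s_2$. The role of the hypothesis $r > 2$ is to keep us on the clean track: it forces $a_2$ not to be a child of $l$, so $s_2 \ge 1$, and the degenerate configuration $s_2 = 0$ that forced an appeal to Lemma~\ref{lem:spr:sc} in the proof of Lemma~\ref{lem:spr:mps2} never occurs. For the base case $s_1 = s_2 = 1$, Theorem~\ref{thm:spr:fourway} yields a resolution $F$ of $F_2$ and an edge set $E$ of size $\ecut{T_1, T_2, F_2}$ with $F \div E$ an AF of $F_1$ and $F_2$ and $E \cap \set{\edge{a_1}, \edge{a_2}, \edge{B_{11}}, \edge{B_{21}}} \ne \emptyset$; since $s_1 = s_2 = 1$ makes $\edge{B_{11}}$ and $\edge{B_{21}}$ the complete pendant-edge sets of the two paths, each of the four possible members of the intersection yields one of the three disjuncts. (As with Lemma~\ref{lem:spr:mps2}, this boundary configuration is excluded from the case statement only to keep the cases disjoint; the claim holds there too.)

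For the inductive step I would again invoke Theorem~\ref{thm:spr:fourway} to obtain such an $F$ and $E$ with $E \cap \set{\edge{a_1}, \edge{a_2}, \edge{B_{11}}, \edge{B_{21}}} \ne \emptyset$ (by Lemma~\ref{lem:forest_ecut} this $F \div E$ is simultaneously an AF of $T_1$ and $T_2$). If $E$ meets $\set{\edge{a_1}, \edge{a_2}}$, the first disjunct holds. Otherwise $E$ contains $\edge{B_{11}}$ or $\edge{B_{21}}$, and because the conclusion is symmetric in the two paths I would relabel $a_1$ and $a_2$, if necessary, so that $\edge{B_{11}} \in E$; the single situation in which this relabelling is unavailable is $\edge{B_{21}} \in E$ with $s_2 = 1$, and then path $2$ is already entirely cut and the third disjunct holds outright.

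Having arranged $\edge{B_{11}} \in E$, I would cut it. Expanding $B_{11}$ and cutting $\edge{B_{11}}$ shortens the $a_1$-to-$l$ path by one node, so it leaves $m$, $r$, and $l$ intact and merely decreases $s_1$ by one; moreover, since an optimal $E$ contains $\edge{B_{11}}$, this cut lowers $\ecut{T_1, T_2, F_2}$ by exactly one, and any AF of $F_1$ and $F_2 \div \set{\edge{B_{11}}}$ is an AF of $F_1$ and $F_2$. I would then apply the inductive hypothesis to $F_2 \div \set{\edge{B_{11}}}$, first relabelling so that a path with at least two remaining pendant subtrees becomes path $1$ (legitimate because $s_1 + s_2$ has strictly dropped and is invariant under relabelling), or, if both paths now have a single pendant subtree, falling back on the base case. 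Lifting the resulting resolution and edge set back to the original $F_2$ by reinserting $\edge{B_{11}}$ (exactly as in the proof of Lemma~\ref{lem:spr:mps2}) turns ``all remaining pendant edges of the shortened path'' into the full set $\set{\edge{B_{11}}, \dots, \edge{B_{1s_1}}}$, while the other two disjuncts transfer unchanged.

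The hardest part is the bookkeeping of the reduction rather than any new structural idea. I must check that cutting $\edge{B_{11}}$ composes correctly --- that it reduces $\ecut{T_1, T_2, F_2}$ by exactly one and that the resolution and edge set produced by the inductive hypothesis for $F_2 \div \set{\edge{B_{11}}}$ can be lifted to a resolution of the original $F_2$ with an edge set of size $\ecut{T_1, T_2, F_2}$ --- and that each relabelling preserves the hypotheses of the lemma ($a_1$ and $a_2$ are not children of $l$, no two $a_i$ are siblings in $F_2$, and no $F_2^{a_i}$ is a component), so that the induction on $s_1 + s_2$ is genuinely well-founded. The structural reason the single-path conclusion is correct here, unlike in the $m = 2$ case, is that the third sibling $a_3$ sharing $l$ already prevents, through the triple and overlap arguments underlying Theorem~\ref{thm:spr:fourway}, both $a_1$ and $a_2$ from being joined to their sibling subtrees in an MAF.
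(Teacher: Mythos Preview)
Your proposal is correct and follows essentially the same route as the paper: induction on $s = s_1 + s_2$, with the base case $s_1 = s_2 = 1$ handled directly by Theorem~\ref{thm:spr:fourway} (using that $r>2$ forces $s_2 \ge 1$), and the inductive step obtained by applying Theorem~\ref{thm:spr:fourway} to get $E \cap \set{\edge{a_1}, \edge{a_2}, \edge{B_{11}}, \edge{B_{21}}} \ne \emptyset$ and then recursing after cutting the pendant edge. The paper compresses the entire inductive step into the single phrase ``using an inductive argument as in the proof of Lemma~\ref{lem:spr:mps2}''; the relabelling and lifting bookkeeping you spell out is exactly what that phrase is meant to cover.
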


\begin{lemma}[Multiple Pendant Subtrees, $m > 2$ and $r = 2$]
  \label{lem:spr:mps-take3}
  If $a_1, a_2, \ldots, a_r$ share a
  minimal LCA~$l$ in $F_2$, $m>2$, $r = 2$ and $s_1 \ge 2$, then
  there exist a resolution $F$ of $F_2$ and an edge set $E$ of size
  $\ecut{T_1, T_2, F_2}$ such that
  $F \div E$ is an AF of $T_1$ and $T_2$ and
  $E \cap \set{\edge{a_1}, \edge{a_2}} \ne \emptyset$,
  $\set{\edge{B_{11}}, \edge{B_{12}}, \ldots, \edge{B_{1s_1}}}
  \subseteq E$ or
  $\set{\edge{B_{21}}, \edge{B_{22}}, \ldots, \edge{B_{2s_2}}, \edge{B_2'}}
  \subseteq E$, where $B_2'$ is the set of siblings of $a_i$ after cutting
  $\edge{B_{21}}, \edge{B_{22}}, \dots, \edge{B_{2s_2}}$.
\end{lemma}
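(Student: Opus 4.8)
The plan is to mirror the proof of Lemma~\ref{lem:spr:mps2}, arguing by induction on $s = s_1 + s_2$ and rewriting an arbitrary optimal solution into one of the three certified forms. I would begin with a resolution $F$ of $F_2$ and an edge set $E$ of size $\ecut{T_1, T_2, F_2}$ such that $F \div E$ is an AF of $F_1$ and $F_2$ (legitimate by Lemma~\ref{lem:forest_ecut}), chosen to maximize $\size{E \cap \set{\edge{a_1}, \edge{a_2}, \edge{B_{11}}, \edge{B_{21}}}}$. Because $r = 2$ and $s_1 \ge 2$, the hypotheses of Theorem~\ref{thm:spr:fourway} hold with $\edge{B_1} = \edge{B_{11}}$ and $\edge{B_2} = \edge{B_{21}}$, so this intersection is nonempty; if it meets $\set{\edge{a_1}, \edge{a_2}}$ we are in the first alternative, so I may assume $E \cap \set{\edge{B_{11}}, \edge{B_{21}}} \ne \emptyset$.

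The reduction then splits according to which side of $l$ is being cleaned, throughout relabeling so that the member playing the role of $a_1$ is never a child of $l$. If $\edge{B_{11}} \in E$, I would cut it and pass to $F_2 \div \set{\edge{B_{11}}}$, which carries one fewer pendant subtree on the $a_1$-side; the inductive hypothesis yields a solution of the smaller instance in one of the three alternatives, and reinserting $\edge{B_{11}}$, together with the re-indexing of the remaining pendant subtrees and the undoing of any relabeling, recovers a solution of the original instance in the corresponding alternative. The induction bottoms out once $s$ is small enough that the residual configuration is governed by Lemma~\ref{lem:spr:amops} (when the surviving non-$a_1$ member is a child of $l$) or Lemma~\ref{lem:spr:ops} (when both surviving members carry a single pendant subtree). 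The analogous reduction with $\edge{B_{21}} \in E$ instead uses Lemma~\ref{lem:edge-shift} to move the cuts on the $a_2$-side onto $\edge{B_{21}}, \dots, \edge{B_{2s_2}}$, making $a_2$ a child of $l$; this is where the third alternative, with its extra edge, originates.

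The one genuinely new phenomenon, absent from Lemma~\ref{lem:spr:mps2} because there $m = 2$, is the edge $\edge{B_2'}$ in the third alternative, which is forced by the outside siblings $a_3, \dots, a_m$. Suppose $a_2$ has been made a child of $l$ while the $a_1$-side still carries a pendant subtree $F_2^{B_{1j}}$. For any leaves $a_1' \in F_2^{a_1}$, $a_2' \in F_2^{a_2}$, and $b_1' \in F_2^{B_{1j}}$, the fact that $a_1$ and $a_2$ are siblings in $\dot{T}_1$, and that $b_1'$ lies in none of the $F_2^{a_i}$ and hence outside the subtree spanned by the sibling group in $F_1$, makes $\triple{a_1'a_2'}{b_1'}$ a triple of $F_1$ that is incompatible with $F_2$, where instead $b_1'$ separates $a_1'$ from $l$ while $a_2'$ joins at $l$. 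By Observation~\ref{obs:incompatible-triple}, in the AF $F \div E$ we must have $a_1' \noreach[F \div E] a_2'$ or $a_1' \noreach[F \div E] b_1'$; keeping the $a_1$-side intact forces $a_1' \noreach[F \div E] a_2'$, so $a_2$ must be severed from its siblings at $l$, and that sibling set is exactly $B_2'$. A final application of Lemma~\ref{lem:edge-shift} lets me assume this severing is the single cut $\edge{B_2'}$ without enlarging $E$, producing the third alternative. The main obstacle is the bookkeeping around this step: one must check, via Lemma~\ref{lem:forest-condition}, that every edge shift keeps $F \div E$ an AF, that all triples and overlaps involving the outside siblings $a_3, \dots, a_m$ remain compatible under the rewriting, and that the re-indexing across the inductive cut reproduces precisely the edge sets $\set{\edge{B_{11}}, \dots, \edge{B_{1s_1}}}$ and $\set{\edge{B_{21}}, \dots, \edge{B_{2s_2}}, \edge{B_2'}}$ named in the statement.
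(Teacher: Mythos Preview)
Your inductive strategy is close to the paper's, but the paper organizes it more economically and avoids the gap in your $\edge{B_2'}$ argument. The paper splits on whether $s_2 = 0$. For $s_2 > 0$ it observes that the proof of Lemma~\ref{lem:spr:mps} uses only $s_2 > 0$, not $r > 2$, so that proof already yields an optimal $E$ with $E \cap \set{\edge{a_1}, \edge{a_2}} \ne \emptyset$, $\set{\edge{B_{11}}, \dots, \edge{B_{1s_1}}} \subseteq E$, or $\set{\edge{B_{21}}, \dots, \edge{B_{2s_2}}} \subseteq E$; choosing among all such $E$ one that realizes one of the first two options if possible, the only remaining case is the third, which reduces the problem to $s_2 = 0$. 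In the $s_2 = 0$ case, $B_2' = B_2$, and Theorem~\ref{thm:spr:fourway} applied to the reduced forest gives $E \cap \set{\edge{a_1}, \edge{a_2}, \edge{B_{11}}, \edge{B_2'}} \ne \emptyset$ directly; if $\edge{B_2'}$ is hit we are done, otherwise $\edge{B_{11}} \in E$ and one inducts on $s_1$ exactly as in Lemma~\ref{lem:spr:mps2}. Note in particular that when $s_2 = 0$ your identification $\edge{B_2} = \edge{B_{21}}$ is ill-typed: $B_{21}$ does not exist, and $B_2$ is the sibling set at $l$, namely $B_2'$.

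The substantive gap is your direct argument for $\edge{B_2'}$. From the incompatible triple $\triple{a_1'a_2'}{b_1'}$ you correctly get $a_1' \noreach[F \div E] a_2'$ or $a_1' \noreach[F \div E] b_1'$, but ``keeping the $a_1$-side intact'' (meaning not all $\edge{B_{1j}}$ lie in $E$) does not by itself force $a_1' \reach[F \div E] b_1'$ for any particular $b_1'$; you would need the maximality of $E$ with respect to a larger set than $\set{\edge{a_1}, \edge{a_2}, \edge{B_{11}}, \edge{B_{21}}}$ to pin this down. Even granting $a_1' \noreach[F \div E] a_2'$, concluding that $a_2$ is severed from \emph{all} siblings at $l$ (rather than just from $a_1$'s branch) requires an overlap argument against every sibling branch of $l$, and then the edge-shift to the single edge $\edge{B_2'}$ needs Lemma~\ref{lem:expand} to justify the expansion of $B_2'$ first. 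The paper sidesteps all of this: once $s_2 = 0$, a fresh application of Theorem~\ref{thm:spr:fourway} hands you $\edge{B_2'}$ as one of the four candidate edges, and the preference ordering on $E$ does the rest.
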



\section{MAF Algorithms}

\label{sec:fpt}

In this section, we present an FPT algorithm for computing MAFs of
multifurcating rooted trees.
This algorithm also forms the basis for a $3$-approximation algorithm
with running time $\Oh{n \log n}$, which is presented in
Section~\ref{sec:root-maf-appr} of the supplementary material.

As is customary for FPT algorithms, we focus on the decision version of the
problem: ``Given two rooted \mbox{$X$-trees} $T_1$ and $T_2$
and a parameter $k$, is $\dmspr{T_1, T_2} \le k$?''  To compute the
distance between two trees, we start with $k = 0$ and increase it
until we receive an affirmative answer.
This does not increase the running time of the algorithm by more than a constant
factor, as the running time depends exponentially on $k$.

Our FPT algorithm is recursive.
Each invocation $\alg{F_1, F_2, k, a_0}$ takes two (partially
resolved) forests $F_1$ and $F_2$ of
$T_1$ and $T_2$, a parameter~$k$, and (optionally) a node $a_0$ that
exists in $F_1$ and $F_2$ as inputs.
$F_1$ is the union of a tree $\dot{T}_1$ and a forest~$F_0$ disjoint
from~$\dot{T}_1$, while $F_2$ is the union of the same forest $F_0$ and
another forest $\dot{F}_2$ with the same label set as $\dot{T}_1$.
The output of the invocation $\alg{F_1, F_2, k, a_0}$ satisfies two
conditions: (i) If $\ecut{T_1, T_2, F_2} > k$, the output is ``no''.
(ii) If $\ecut{T_1, T_2, F_2} \le k$ and either $a_0 = \nil$ or
there exists an MAF $F$ of $F_1$ and $F_2$ such that $a_0$ is not a root of $F$
and $a_0 \noreach[F] a_i$, for every sibling $a_i$ of $a_0$ in~$F_1$,
the output is ``yes''.
Since the top-level invocation is $\alg{T_1, T_2, k, \nil}$, these two
conditions ensure that this invocation decides whether
$\ecut{T_1, T_2, T_2} \le k$.

The representation of the input to each recursive call includes two sets of
labelled nodes:
$R_d$~(roots-done) contains the roots of $F_0$, $R_t$
(roots-todo) contains the roots of (not necessarily maximal) subtrees that agree
between $\dot{T}_1$ and $\dot{F}_2$.
We refer to the nodes in these sets by their labels.
For the top-level invocation, $F_1 = \dot{T}_1 = T_1$, $F_2 =
\dot{F}_2 = T_2$, and $F_0 = \emptyset$; $R_d$ is empty and $R_t$
contains all leaves of $T_1$; $a_0 = \nil$.

$\alg{F_1, F_2, k, a_0}$ uses the results from Section~\ref{sec:structure}
to identify a small collection
$\set{E_1, E_2, \ldots, E_q}$ of subsets of edges of $\dot{F}_2$ such
that $\ecut{T_1, T_2, F_2} \le k$ only if $\ecut{T_1, T_2, F_2
  \div E_i} \le k - \size{E_i}$, for at least one \mbox{$1 \le i \le q$}.
It calls $\alg{F_1, F_2 \div E_i, k - \size{E_i}, a_i'}$ recursively,
for each subset~$E_i$ and an appropriate parameter $a_i'$,
and returns ``yes'' if and only if one of these recursive calls does.

A na\"{i}ve use of the structural results from Section~\ref{sec:structure}
would explore many overlapping edge subsets.
For example, one branch of the algorithm may cut an edge $\edge{a_i}$ and then
an edge $\edge{a_j}$, while a sibling branch may cut $\edge{a_j}$ and
then $\edge{a_i}$.
As we hinted at in Section~\ref{sec:structure},
if we cut edge $\edge{a_i}$ or its sibling edge $\edge{B_i}$ in two sibling
invocations, then there is no need to consider cutting either of these two
edges in their sibling invocations or their descendants.
Using the results of Lemmas~\mbox{\ref{lem:spr:amops}--\ref{lem:spr:mps-take3}},
we obtain more generally:
if we cut $\edge{a_i}$ or its set of progressive siblings
$\set{\edge{B_{i1}}, \edge{B_{i2}}, \ldots, \edge{B_{i s_i}}}$ in two
sibling invocations, then we need not consider these edge sets in their
sibling invocations or their descendants.
Thus, we set $a_0 = a_i$ in these sibling invocations and thereby instruct
the algorithm to ignore these edges as candidates for cutting.
An invocation $\alg{F_1, F_2, k, a_0}$ with $a_0 \ne \nil$
(Step~\ref{case:tw:non-sibling} below) makes only two
recursive calls when it would make significantly more recursive calls if
$a_0 = \nil$ (Step~\ref{case:spr:non-sibling} below).
This is not a trivial change, as it is required to obtain the
running time claimed in Theorem~\ref{thm:maf:time}.
The steps of our procedure are as follows.

\begin{enumerate}[label=\arabic{*}.,ref=\arabic{*},leftmargin=*]
\item \label{case:abort} (Failure) If $k < 0$, then
  $\ecut{T_1, T_2, F_2} \ge 0 > k$.
  Return ``no'' in this case.
\item \label{case:success} (Success) If $|R_t| = 1$, then
  $F_1 = F_2$.
  Hence, $F_2$ is an AF of $T_1$ and $T_2$, that is,
  $\ecut{T_1, T_2, F_2} = 0 \le k$.
  Return ``yes'' in this case.
\item \label{case:spr:singleton} (Prune maximal agreeing subtrees)
  If there is no node $r \in R_t$ that is a root in $F_2$, proceed to
  Step~\ref{item:choose-sib-pair}.
  Otherwise choose such a node $r \in R_t$; remove it from~$R_t$ and
  add it to $R_d$, thereby moving the corresponding subtree of
  $\dot{F}_2$ to $F_0$; and cut the edge $\edge{r}$ in $F_1$.
  If $r$'s parent $\parent{r}$ in $F_1$ now has only one child, contract
  $\parent{r}$.
  If $a_0 \ne \nil$ and $\parent{r}$'s only child before the contraction
  was $a_0$, set $a_0 = \nil$.
  Note that these changes affect only~$F_1$.
  Thus, $\ecut{T_1, T_2, F_2}$ remains unchanged.
  Return to Step~\ref{case:success}.
\item \label{item:choose-sib-pair}Choose a sibling group $\set{a_1, a_2,
  \ldots, a_m}$ in $\dot{T}_1$ such that $a_1, a_2, \ldots, a_m \in R_t$.
  If two or more members of the sibling group chosen in this invocation's
  parent invocation remain in $\dot{T}_1$, choose that sibling group.
\item \label{case:spr:sibling} (Grow agreeing subtrees)
  While there exist indices $1 \le i < j \le m$ such that $a_i$ and $a_j$
  are siblings in $\dot{F}_2$, do the following: Remove $a_i$ and $a_j$ from
  $R_t$; resolve $a_i$ and $a_j$ in $\dot T_1$ and $\dot F_2$; label their new
  parent in both forests with $(a_i, a_j)$ and add it to $R_t$.
  The new node $(a_i, a_j)$ becomes a member of the current sibling group
  and $m$ decreases by 1.
  If $m = 1$ after resolving all such sibling pairs $\set{a_i, a_j}$, contract
  the parent of the only remaining member of the sibling group
  and return to Step~\ref{case:success};
  otherwise proceed to Step~\ref{item:choose-order}.
\item \label{item:choose-order} If $a_i \noreach[F_2] a_j$,
  for all $1 \le i < j \le m$, proceed to Step~\ref{case:tw:non-sibling}.
  Otherwise there exists a node $l$ that is a minimal LCA of a group of
  nodes in the current sibling group.
  If the most recent minimal LCA chosen in an ancestor invocation is a minimal
  LCA of a subset of nodes in the current sibling group, choose $l$ to be this
  node; otherwise choose $l$ arbitrarily.
  Now order the nodes in the sibling group
  $\set{a_1, a_2, \ldots, a_m}$ so that, for some $r \ge 2$, $a_1, a_2, \ldots,
  a_r$ are descendants of $l$, while, for all $1 \le i \le r < j \le m$,
  either the LCA of $a_i$ and $a_j$ is a proper ancestor of $l$ or
  $a_i \noreach[F_2] a_j$.
  Order $a_1, a_2, \ldots, a_r$ so that $s_1 \ge s_2 \ge \cdots \ge s_r$.
  (Recall that $s_i$ is the number of nodes on the path from $a_i$ to $l$,
  excluding $a_i$ and~$l$.)
  The order of $a_{r+1}, a_{r+2}, \ldots, a_m$ is arbitrary.
  \begin{figure*}[t]
    \centering
    \includegraphics{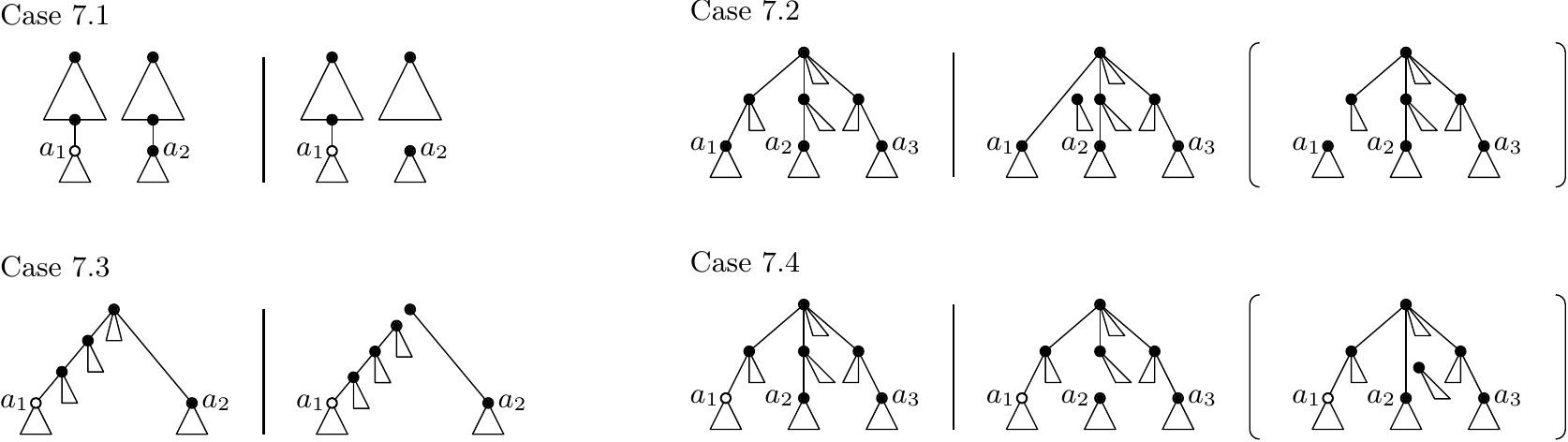}%
    \caption{The different cases of Step~\ref{case:tw:non-sibling}.
      Only the subtree of $\dot{F}_2$ rooted in $l$ is shown.
      The left side shows a possible input for each case, the right side
      visualizes the cuts made in each recursive call.
      The node $a_0$ is shown as a hollow circle (if it is a descendant of
      $l$).}
    \label{fig:tw:cases}
  \end{figure*}
\item \label{case:tw:non-sibling} (Two-way branching) If $a_0 = \nil$,
  proceed to Step~\ref{case:spr:non-sibling}.
  Otherwise distinguish four cases, where $x = 1$ if \mbox{$a_1 \ne a_0$},
  and $x = 2$ otherwise (see Figure~\ref{fig:tw:cases}).

  \begin{enumerate}[label=\theenumi.\arabic{*}.,leftmargin=*,ref=\theenumi.\arabic{*}]
  \item \label{case:tw:is} If $a_1 \noreach[F_2] a_i$, for all $i \ne 1$,
    and $a_2 \noreach[F_2] a_j$, for all $j \ne 2$,
    call $\alg{F_1, F_2 \div \set{\edge{a_x}}, k-1, a_0}$.
  \item \label{case:tw:mpsi} If $a_1, a_2, \ldots, a_r$ share the
    minimal LCA $l$ in $F_2$ and $a_0$ is not a descendant of $l$,
    call $\alg{F_1, F_2 \div \set{\edge{B_{11}},\edge{B_{12}}, \ldots,
        \edge{B_{1s_1}}}, k-s_1, a_0}$.
    If $s_1 > 1$ or $s_r >\nobreak 0$, also call
    $\alg{F_1, F_2 \div \set{\edge{a_1}}, k-1, a_0}$.
  \item \label{case:tw:mps-ladder}
    If $r = 2$, $s_x = 0$, $a_0$ is a descendant of $l$, and either $l$
    is a root of $F_2$ or its parent has a member $a_i$ of the current
    sibling group as a child, call $\alg{F_1, F_2 - \set{\edge{B_x}},
      k-1, a_0}$.
  \item \label{case:tw:mps} If $a_1, a_2, \ldots, a_r$ share the
    minimal LCA $l$ in $F_2$, $a_0$ is a descendant of $l$, and
    Case~\ref{case:tw:mps-ladder} does not apply,
    call $\alg{F_1, F_2 \div \set{\edge{a_x}}, k-1, a_0}$.
    If $m > 2$ and \mbox{$r > 2$}, make another recursive call
    $\alg{F_1, F_2 \div \set{\edge{B_{x1}},\edge{B_{x2}}, \ldots,
        \edge{B_{xs_x}}}, k-s_x, a_0}$.
    If $m > 2$ but $r = 2$, the second recursive call is
    $\alg{F_1, F_2 \div \set{\edge{B_{x1}},\edge{B_{x2}}, \ldots,
        \edge{B_{xs_x}}, \edge{B'_x}},\break k-(s_x+1), a_0}$,
    where $B_x'$ is the set of siblings of $x$ after cutting
    $\edge{B_{x1}}, \edge{B_{x2}}, \ldots, \edge{B_{xs_x}}$.
  \end{enumerate}
  Return ``yes'' if one of the recursive calls does; otherwise return
  ``no''.
\item \label{case:spr:non-sibling} (Unconstrained branching) Distinguish seven
  cases and choose the first case that applies (see
  Figure~\ref{fig:spr:cases}):
  \begin{enumerate}[label=\theenumi.\arabic{*}.,leftmargin=*,ref=\theenumi.\arabic{*}]
  \item \label{case:spr:is} If $a_1 \noreach[F_2] a_i$, for all $i \ne 1$,
    and $a_2 \noreach[F_2] a_j$, for all $j \ne 2$,
    call $\alg{F_1, F_2 \div \set{\edge{a_1}}, k-1,
      \nil}$ and $\alg{F_1, F_2 \div \set{\edge{a_2}}, k-1, \nil}$.
  \item \label{case:spr:amops} If $s_i = 1$,  for $1 \le i < r$, and
    $a_r$ is a child of $l$, call
    $\alg{F_1, F_2 \div \set{\edge{B_1},\edge{B_2},
        \ldots, \edge{B_{r-1}}},\break k-(r-1), \nil}$ and
    $\alg{F_1, F_2 \div \set{\edge{B_1}, \edge{B_2}, \ldots,\break
        \edge{B_{i-1}}, \edge{B_{i+1}}, \edge{B_{i+2}}, \ldots,
        \edge{B_{r-1}}}, k-(r-2), a_i}$, for all $1 \le i \le {r-1}$.
  \item \label{case:spr:mps2} If $m=2$, and $s_1 + s_2 \ge 2$,
    call
    $\alg{F_1, F_2 \div \set{\edge{a_1}}, k-1, \nil}$,
    $\alg{F_1, F_2 \div \set{\edge{a_2}}, k-1, \nil}$, and
    $\alg{F_1, F_2 \div \set{\edge{B_{11}},\edge{B_{12}}, \ldots,
        \edge{B_{1s_1}}} \cup \set{\edge{B_{21}},\edge{B_{22}}, \ldots,
        \edge{B_{2s_2}}}, k-(s_1 + s_2), \nil}$.
  \item \label{case:spr:ops-ladder}
    If $m > 2$, $r = 2$, and $s_1 = s_2 = 1$, call
    $\alg{F_1, F_2 \div \set{\edge{a_1}, \edge{a_2}}, k-2, \nil}$,
    $\alg{F_1, F_2 \div \set{\edge{B_1}, \edge{B_2}},\break k-2, \nil}$,
    $\alg{F_1, F_2 \div \set{\edge{B_1}, \edge{B_1'}}, k-2, a_2}$, and
    $\alg{F_1, F_2 \div \set{\edge{B_2}, \edge{B_2'}}, k-2, a_1}$, where
    $B_i'$ is the set of siblings of $a_i$ after cutting edge $B_i$.
    If $l$ is a root, $l$'s parent $\parent{l}$ has at least one
    child that is neither $l$ nor a member $a_j$ of the current sibling group
    or $l$'s grandparent has at least one child that is neither $\parent{l}$
    nor a member $a_h$ of the current sibling group, make two additional
    calls
    $\alg{F_1, F_2 \div \set{\edge{a_1}}, k-1, a_2}$ and
    $\alg{F_1, F_2 \div \set{\edge{a_2}}, k-1, a_1}$.
  \item \label{case:spr:ops} If $m > 2$, $r > 2$, and $s_i = 1$,
    for all $1 \le i \le r$, call
    $\alg{F_1, F_2 \div \set{\edge{a_1}, \edge{a_2}, \ldots, \edge{a_r}},
      k - r,\break \nil}$,
    $\alg{F_1, F_2 \div \set{\edge{B_1}, \edge{B_2}, \ldots, \edge{B_r}},
      k - r,\break \nil}$,
    $\alg{F_1, F_2 \div \set{\edge{a_1}, \edge{a_2}, \ldots, \edge{a_{i-1}},
        \edge{a_{i+1}},\break \edge{a_{i+2}}, \ldots, \edge{a_r}}, k - (r - 1),
      a_i}$, for all $1 \le i \le r$, and
    $\alg{F_1, F_2 \div \set{\edge{B_1}, \edge{B_2}, \ldots, \edge{B_{i-1}},
        \edge{B_{i+1}},\break \edge{B_{i+2}}, \ldots, \edge{B_r}}, k - (r - 1),
      a_i}$, for all $1 \le i \le r$.
  \item \label{case:spr:mps-ladder}
    If $m > 2$, $r = 2$, $s_1 \ge 2$, $s_2 = 0$ and either $l$ is a root
    of $F_2$ or its parent has a member $a_i$ of the current sibling group as
    a child, call $\alg{F_1, F_2 - \set{\edge{a_1}}, k-1, \nil}$,
    $\alg{F_1, F_2 - \set{\edge{B_{11}}, \edge{B_{12}}, \ldots,
        \edge{B_{1s_1}}}, k - s_1, \nil}$, and
    $\alg{F_1, F_2 - \set{\edge{B_2}}, k-1, \nil}$.
  \item \label{case:spr:mps} If $m>2$, $s_1 \ge 2$, and
    Case~\ref{case:spr:mps-ladder} does not apply, call
    $\alg{F_1, F_2 \div \set{\edge{a_1}}, k-1, \nil}$,
    $\alg{F_1, F_2 \div \set{\edge{a_2}}, k-1, a_1}$, and
    $\alg{F_1, F_2 \div \break \set{\edge{B_{11}},\edge{B_{12}}, \ldots,
        \edge{B_{1s_1}}}, k-s_1, \nil}$.
    If $r>2$, call
    $\alg{F_1, F_2 \div \set{\edge{B_{21}},\edge{B_{22}}, \ldots,
        \edge{B_{2s_2}}}, k-s_2,\break a_1}$.
    If $r = 2$, call
    $\alg{F_1, F_2 \div \set{\edge{B_{21}},\edge{B_{22}}, \ldots,\break
        \edge{B_{2s_2}}, \edge{B'_2}}, k-(s_2+1), a_1}$, where
    $B_2'$ is defined as in Lemma~\ref{lem:spr:mps-take3}.
  \end{enumerate}
  Return ``yes'' if one of the recursive calls does; otherwise return
  ``no''.
\end{enumerate}

\begin{theorem}
  \label{thm:maf:time}
  Given two rooted $X$-trees $T_1$ and $T_2$ and a parameter $k$, it
  takes $\Oh{(1+\sqrt{2})^k n} = \Oh{2.42^k n}$ time to decide whether
  $\ecut{T_1, T_2, T_2} \le k$.
\end{theorem}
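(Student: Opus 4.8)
The plan is to express the running time as the number of recursive invocations times the $\Oh{n}$ work performed per invocation, and to bound the number of invocations by $\Oh{(1+\sqrt 2)^k}$ through a recurrence in~$k$. Correctness of the decision follows from the structural results: every branching rule in Steps~\ref{case:tw:non-sibling} and~\ref{case:spr:non-sibling} instantiates one of Theorem~\ref{thm:spr:fourway} and Lemmas~\ref{lem:spr:sc}--\ref{lem:spr:mps-take3}, each of which guarantees that if $\ecut{T_1,T_2,F_2}\le k$ then at least one recursive call reduces $\ecut{T_1,T_2,\cdot}$ within the diminished budget; combined with the output conditions~(i) and~(ii), an induction on~$k$ shows that the top-level call correctly decides $\ecut{T_1,T_2,T_2}\le k$. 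So first I would check that Steps~\ref{case:abort}--\ref{item:choose-order} perform only $\Oh{n}$ work and never branch, that branching occurs only in Steps~\ref{case:tw:non-sibling} and~\ref{case:spr:non-sibling}, and that every recursive call decreases $k$ by at least one, except for one degenerate branch that merely records a commitment in $a_0$ and is always followed by a call that does decrease $k$; the search tree then has depth $\Oh{k}$ and it suffices to bound its number of leaves.

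The heart of the argument is the recurrence, and I would analyse it with two functions: $N(k)$, the maximum number of leaves below an \emph{unconstrained} invocation ($a_0=\nil$, Step~\ref{case:spr:non-sibling}), and $N'(k)$, the same for a \emph{constrained} invocation ($a_0\neq\nil$, Step~\ref{case:tw:non-sibling}). The worst unconstrained branch is the three-way split that cuts a single edge in two of its branches and two edges in the third, realised by Case~\ref{case:spr:mps2} with $s_1+s_2=2$ and by Case~\ref{case:spr:mps-ladder} with $s_1=2$. This yields $N(k)\le 2N(k-1)+N(k-2)$, whose characteristic equation $x^2-2x-1=0$ has positive root $1+\sqrt 2$, so $N(k)=\Oh{(1+\sqrt 2)^k}$. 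The remaining work is to show that no other case branches more prolifically, so that this split is the true bottleneck.

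The difficulty, and what I expect to be the main obstacle, is the high-fanout cases---Cases~\ref{case:spr:amops}, \ref{case:spr:ops-ladder}, \ref{case:spr:ops}, and~\ref{case:spr:mps}---which issue as many as $2r+2$ recursive calls and in isolation would give a branching number well above $1+\sqrt 2$. These are tamed by the observation from the discussion following Theorem~\ref{thm:spr:fourway}: all but a constant number of these calls pass a non-nil $a_0$, forcing the child to execute Step~\ref{case:tw:non-sibling}, which makes at most two recursive calls. I would encode this by first bounding $N'(k)\le 2N'(k-1)$---the worst two-way split of Step~\ref{case:tw:non-sibling}, attained in Cases~\ref{case:tw:mpsi} and~\ref{case:tw:mps} with a single pendant node---so that $N'(k)=\Oh{2^k}$, and then substituting this into the recurrences for the high-fanout cases. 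Because $2<1+\sqrt 2$, the many constrained children contribute only an $\Oh{2^k}=\oh{(1+\sqrt 2)^k}$ term, and each such case collapses to an inequality of the form $N(k)\le c\,N(k-d)+\oh{(1+\sqrt 2)^k}$ with $c\,(1+\sqrt 2)^{-d}<1$, which is consistent with $N(k)=\Oh{(1+\sqrt 2)^k}$. The genuinely delicate points inside this argument are (a)~the degenerate branch of Case~\ref{case:spr:amops} with $r=2$, which cuts no edge and makes progress only through its constrained child, so that it must be charged against $N'$ rather than $N$; and (b)~the interaction with Step~\ref{case:spr:singleton}, where pruning a maximal agreeing subtree may reset $a_0$ to $\nil$ and thus turn a constrained invocation back into an unconstrained one. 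I would need to show that such a reset occurs only together with enough structural reduction to preserve the bound $N'(k)=\Oh{2^k}$---intuitively, that a constrained invocation behaves like a binary search tree until the commitment recorded in $a_0$ has been discharged---and then verify the resulting coupled recurrence case by case.

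Finally I would assemble the pieces. With $N(k),N'(k)=\Oh{(1+\sqrt 2)^k}$ and $\Oh{n}$ work per invocation, and with the non-branching reductions performed within each invocation before it branches so that the reduction steps on any root-to-leaf path total $\Oh{n}$, the total running time is $\Oh{(1+\sqrt 2)^k n}=\Oh{2.42^k n}$, as claimed.
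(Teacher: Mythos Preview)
Your overall plan mirrors the paper's: split invocations into \emph{unconstrained} ($a_0=\nil$, Step~\ref{case:spr:non-sibling}) and \emph{constrained} ($a_0\neq\nil$, Step~\ref{case:tw:non-sibling}), identify the three-way split $2N(k-1)+N(k-2)$ as the bottleneck with characteristic root $1+\sqrt2$, and use the constrained bound to absorb the high-fanout cases. The paper does exactly this, with $I(k,0)$ and $I(k,1)$ playing the roles of your $N(k)$ and $N'(k)$.

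The genuine gap is your claim $N'(k)=\Oh{2^k}$. This is false, and it is not a detail that can be patched by the vague promise that resets of $a_0$ come ``with enough structural reduction.'' A constrained invocation can spawn an unconstrained child: in the paper's analysis this happens, for instance, when Case~\ref{case:tw:mpsi} is applied with $s_1\ge2$, yielding the coupled recurrence $I(k,1)\le 1+I(k-1,1)+I(k-2,0)$. Since $I(k-2,0)=\Theta\bigl((1+\sqrt2)^{k}\bigr)$, this forces $I(k,1)=\Theta\bigl((1+\sqrt2)^{k}\bigr)$ as well, not $\Oh{2^k}$. The correct statement, which the paper proves, is that $I(k,1)$ is bounded by (a constant times) $I(k-1,0)$; equivalently, a constrained call with budget $k$ costs no more than an unconstrained call with budget $k-1$. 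Substituting \emph{that} into the high-fanout cases collapses them below the $2N(k-1)+N(k-2)$ bottleneck; substituting $\Oh{2^k}$ would not even be sound, because the two functions are genuinely coupled.

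A second point you underweight: the assertion that a call with $a_0\neq\nil$ ``forces the child to execute Step~\ref{case:tw:non-sibling}'' requires proof. The child executes Step~\ref{case:tw:non-sibling} only if the parent's sibling group still fails to agree between $F_1$ and $F_2$ after the parent's edge cuts. The paper devotes several paragraphs to verifying this case by case for Cases~\ref{case:spr:amops}, \ref{case:spr:ops-ladder}, \ref{case:spr:ops}, and~\ref{case:spr:mps}, exhibiting in each instance a surviving incompatible triple or a pair of overlapping paths. This is not automatic, and without it you cannot claim $t=1$ for those children in the recurrence.
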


\begin{proof}
  We use the algorithm in this section, invoking it as
  $\alg{T_1, T_2, k, \nil}$.
  We leave its correctness proof to a separate lemma
  (Lemma~\ref{lem:maf:correct} below) and focus on bounding its
  running time here.
  As we argue in Section~\ref{sec:linear-time} of the supplementary material,
  each invocation $\alg{F_1, F_2, k', a_0}$ takes $\Oh{n}$ time.
  Thus, it suffices to bound the number of invocations by
  $\Oh{(1 + \sqrt{2})^k}$.
  Let $I(k,t)$ be the number of invocations that are descendants
  of an invocation $\alg{F_1, F_2, k, a_0}$ in the recursion tree,
  where $t = 1$ if the invocation executes
  Step~\ref{case:tw:non-sibling} but not Step~\ref{case:spr:non-sibling};
  otherwise $t = 0$.
  We develop a recurrence relation for $I(k,t)$ and use it to show
  that $I(k,t) \le (1+\sqrt{2})^{2 + \max(0,k-t+3)}+2(t-1)$, which proves our
  claim.\footnote{This is a fairly loose bound on $I(k,t)$, but it is easy to
    manipulate.}

  An invocation with $t = 0$ by definition either executes neither
  Step~\ref{case:tw:non-sibling} nor Step~\ref{case:spr:non-sibling}, or it
  executes Step~\ref{case:spr:non-sibling}.
  By considering the different cases of Step~\ref{case:spr:non-sibling}, we
  obtain the following recurrence for the case when $t = 0$:
  \begin{equation*}
    I(k,0) \le
    \begin{cases}
      1                             & \text{no recursion}\\
      1 + 2I(k-1, 0)                & \text{Case~\ref{case:spr:is}}\\
      1 + I(k-1, 0) + I(k, 1)       & \text{Case~\ref{case:spr:amops}}\\
      1 + 2I(k-1, 0) + I(k-2, 0)    & \text{Cases~\ref{case:spr:mps2},
                                        \ref{case:spr:mps-ladder}}\\
      1 + 2I(k-2, 0) + 2I(k-1, 1)\\
      \phantom{1} + 2I(k-2, 1)    & \text{Case~\ref{case:spr:ops-ladder}}\\
      1 + 2I(k-3, 0) + 3I(k-2, 0)\\
      \phantom{1} + 3I(k-2, 1)      & \text{Case~\ref{case:spr:ops}}\\
      1 + I(k-1, 0) + I(k-2, 0)\\
      \phantom{1} + 2I(k-1,1)       & \text{Case~\ref{case:spr:mps}}
    \end{cases}
  \end{equation*}
  The values of the first arguments of all $I(\cdot, \cdot)$ terms are easily
  verified for Cases~\ref{case:spr:is}, \ref{case:spr:mps2},
  \ref{case:spr:ops-ladder}, and~\ref{case:spr:mps-ladder}.
  For Case~\ref{case:spr:amops}, we observe that $r \ge 2$ and the worst case
  arises when $r = 2$, giving the claimed recurrence.
  For Case~\ref{case:spr:mps}, we observe again that $r \ge 2$.
  If $r > 2$, then $s_2 > 0$, giving the recurrence for this case.
  If $r = 2$, then $s_2$ may be $0$, but the fourth recursive call cuts
  $s_2 + 1$ edges, thereby giving the same recurrence as when $r > 2$.
  For Case~\ref{case:spr:ops}, finally, we have $r \ge 3$ and, once again, the
  minimum value, $r = 3$, is the worst case, which gives the recurrence.

  \begin{figure*}[!tp]
    \centering
    \includegraphics{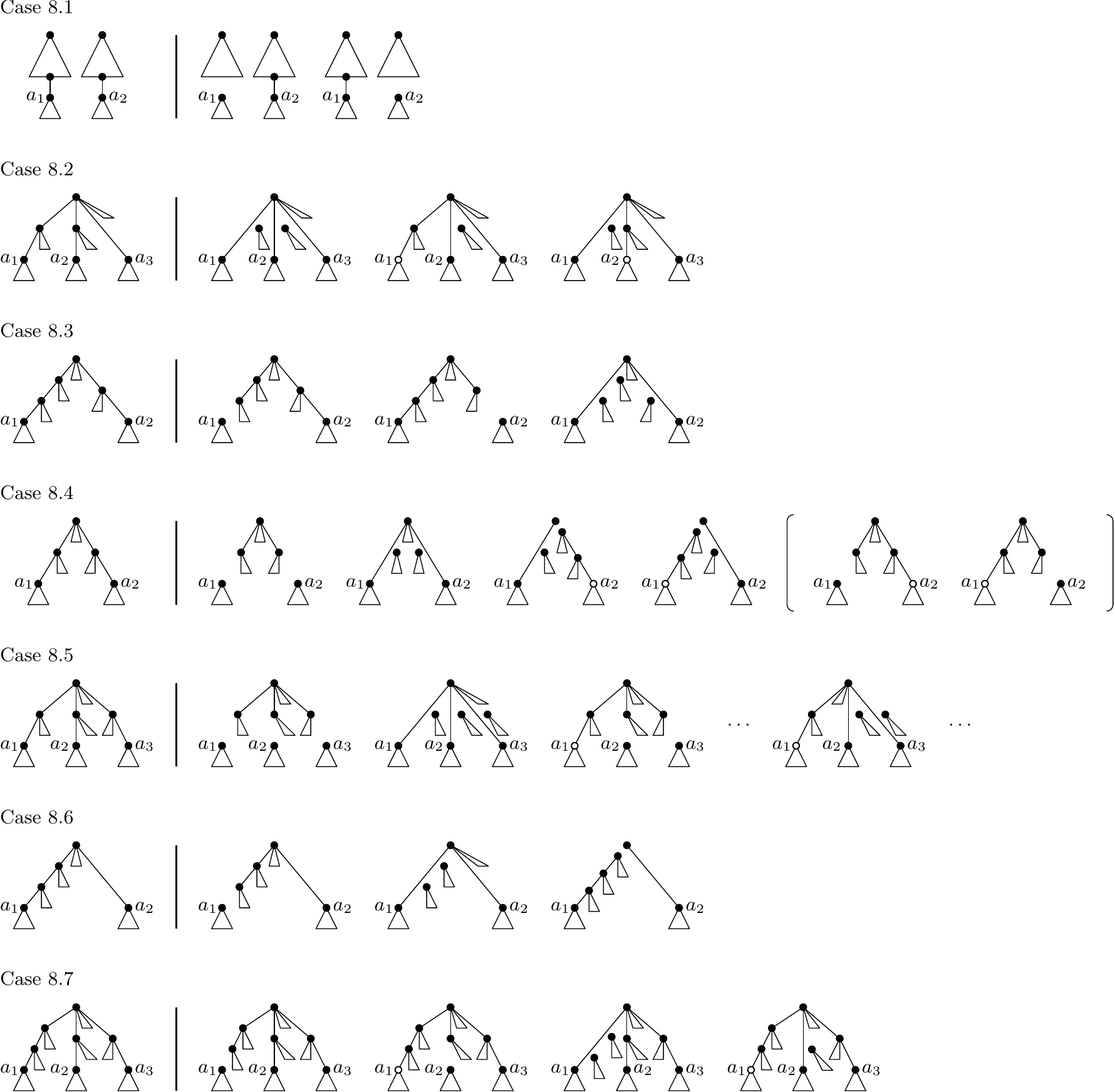}%
    \caption{The different cases of Step~\ref{case:spr:non-sibling}.
      Only the subtree of $\dot{F}_2$ rooted in $l$ is shown.
      The left side shows a possible input for each case, the right side
      visualizes the cuts made in each recursive call.
      Whenever $a_0 \ne \nil$ in a recursive call, it is shown as a hollow
      circle.
      The last two calls in Case~\ref{case:spr:ops-ladder} may or may not be
      made.}
    \label{fig:spr:cases}
  \end{figure*}

  Next we argue about the correctness of all second arguments that are $1$
  in these recurrences.
  Each such term $I(\cdot, 1)$ corresponds to a recursive call
  with $a_0 \ne\nobreak \nil$.
  Thus, in order to justify setting $t = 1$, we need to show that each such
  invocation executes Step~\ref{case:tw:non-sibling} but not
  Step~\ref{case:spr:non-sibling}, which follows if in this child invocation,
  the current invocation's sibling group exists and at least one additional
  edge cut in $F_2$ is required to makes this sibling group agree between
  $F_1$ and $F_2$---we say that the sibling group agrees between $F_1$ and $F_2$
  if $F_2$ does not contain a triple incompatible with $F_1$ and involving
  descendants of at least two members of the sibling group, and there are no two
  paths between leaves in $F_2$ that (i) belong to different components of
  $F_2$, (ii) overlap in $F_1$, and (iii)~have at least one endpoint each that
  is a descendant of a member of the current sibling group.
  The only cases that make recursive calls with $a_0 \ne \nil$ are
  Cases~\ref{case:spr:amops}, \ref{case:spr:ops-ladder}, \ref{case:spr:ops}
  and~\ref{case:spr:mps}.
  We consider each case in turn.

  In Case~\ref{case:spr:amops}, consider a recursive call that cuts edges
  $\edge{B_1}, \edge{B_2}, \ldots, \edge{B_{i-1}}, \edge{B_{i+1}},
  \edge{B_{i+2}}, \ldots, \edge{B_{r-1}}$, for some $1 \le i \le r - 1$.
  Assume w.l.o.g.\ that $i = 1$.
  After cutting edges $\edge{B_2}, \edge{B_3}, \dots, \edge{B_{r-1}}$,
  there still exist leaves $a_1' \in F_2^{a_1}$, $b_1' \in\nobreak F_2^{B_1}$,
  and $a_r' \in F_2^{a_r}$ such that $F_1$ contains the triple
  $a_1'a_r'|b_1'$ while $F_2$ contains the triple $a_1'b_1'|a_r'$.
  Thus, the sibling group does not agree between $F_1$ and $F_2$ yet.

  In Case~\ref{case:spr:ops-ladder}, if we make only four recursive calls,
  we can ignore whether setting $a_0 = a_1$ or $a_0 = a_2$ in the third and
  fourth recursive calls translates into $t = 1$ for these recursive calls
  because even the recurrence $I(k, 0) = 1 + 4I(k-2, 0)$ is bounded by the
  recurrence $I(k, 0) = 1 + 2I(k-2, 0) + 2I(k-2, 1) + 2I(k-1, 1)$ for the case
  when we make six recursive calls.
  If we make six recursive calls,
  there exists a member $a_3$ of the current sibling group that
  is not a descendant of~$l$.
  Thus, the conditions for making the fifth and sixth recursive calls imply that
  there exist leaves $a_3' \in F_2^{a_3}$ and $b_3' \notin F_2^{a_i}$, for all
  $1 \le i \le m$, such that $a_3' \reach[F_2] b_3'$ and the
  path from $a_3'$ to $b_3'$ in $F_2$ is disjoint from $F_2^l$.
  After cutting $\edge{B_1}$ and $\edge{B_1'}$,
  there exist leaves $a_2' \in\nobreak F_2^{a_2}$ and $b_2' \in F_2^{B_2}$ such
  that $a_2' \reach[F_2 \div \set{\edge{B_1}, \edge{B_1'}}] b_2'$.
  Thus, we have two paths in different connected components (between
  $a_2'$ and $b_2'$ and between $a_3'$ and $b_3'$) that overlap in $F_1$, and
  the sibling group does not agree between $F_1$ and $F_2$ yet.
  After cutting $\edge{a_1}$, we obtain overlapping paths as above if
  $a_2 \noreach[F_2] a_3$; otherwise $a_2'b_2'|a_3'$ is a triple of $F_2$
  incompatible with $F_1$.
  Thus, once again the sibling group does not agree between $F_1$
  and~$F_2$.
  Similar arguments show that setting $t = 1$ is correct when cutting edge
  $\edge{a_2}$ or edges $\edge{B_2}$ and~$\edge{B_2'}$.

  In Case~\ref{case:spr:ops}, we claim that it is correct to set $t = 1$
  for each recursive call that cuts edges
  $\edge{B_1}, \edge{B_2}, \dots, \edge{B_{i-1}},\break \edge{B_{i+1}},
  \edge{B_{i+2}}, \dots, \edge{B_r}$, for some $1 \le i \le r$.
  Assume w.l.o.g.\ that $i = 1$.
  After cutting edges $\edge{B_2}, \edge{B_3}, \dots, \edge{B_r}$,
  there exist leaves $a_1' \in F_2^{a_1}$, $b_1' \in F_2^{B_1}$, and $a_2' \in
  F_2^{a_2}$ such that $\triple{a_1'a_2'}{b_1'}$ is a triple of $F_1$ and
  $\triple{a_1'b_1'}{a_2'}$ is a triple of~$F_2$.
  Thus, the sibling group does not agree between $F_1$ and~$F_2$ yet.

  In Case~\ref{case:spr:mps}, observe that $l$ is not a root and
  its parent does not have a member of the current sibling group as a child.
  Thus, since $m > 2$, there exists an index $j$ and two leaves
  $a_j' \in F_2^{a_j}$ and $b_j' \notin F_2^{a_h}$, for all
  \mbox{$1 \le h \le m$}, such that $a_j' \reach[F_2] b_j'$ and the path from
  $a_j'$ to $b_j'$ in $F_2$ is disjoint from the path between any two leaves in
  $F_2^{a_1}$ and $F_2^{B_{11}}$.
  After cutting $\edge{a_2}$, there exist leaves $a_1' \in F_2^{a_1}$ and
  $b_1' \in F_2^{B_{11}}$ such that
  $a_1' \reach[F_2 \div \set{\edge{a_2}}] b_1'$.
  If $a_1' \reach[F_2] a_j'$, we thus have a triple $a_1'b_1'|a_j'$ of $F_2$
  incompatible with $F_1$.
  If $a_1' \noreach[F_2] a_j'$, the path between $a_1'$ and $b_1'$
  overlaps the path between $a_j'$ and $b_j'$ in $F_1$.
  In either case, the current sibling group does not agree between $F_1$
  and $F_2$ yet.
  This concludes the correctness proof of the recurrence for $I(k, 0)$.

  For $t = 1$, we distinguish whether or not the current invocation $\I$ makes a
  recursive call with $t = 0$ and whether it makes one or two recursive calls.
  If $\I$ makes no recursive call with $t = 0$,
  we obtain $I(k, 1) \le 1 + 2I(k-1, 1)$ because each case of
  Step~\ref{case:tw:non-sibling} makes at most two recursive calls, with
  parameters no greater than $k-1$.
  If $\I$ makes only one recursive call, with $t = 0$, we obtain
  $I(k, 1) \le 1 + I(k-1, 0)$ because this recursive call has parameter no
  greater than $k-1$.
  Finally, if $\I$ makes two recursive calls, at least one of them with $t = 0$,
  $\I$ must have applied Case~\ref{case:tw:mpsi} or~\ref{case:tw:mps}.
  Let $\I'$ be one of the invocations $\I$ makes with $t = 0$.
  If $t = 0$ for invocation $\I'$ because $\I'$ terminates in Step~1 or~2,
  we obtain
  $I(k,1) \le 2 + I(k-1,0)$ by counting invocations $\I$ and $\I'$ and the
  number of recursive calls spawned by the sibling invocation of $\I'$, which
  cannot be more than $I(k-1,0)$.
  So assume that $t = 0$ for invocation $\I'$ and that $\I'$ does make further
  recursive calls.
  Then the sibling group chosen in invocation $\I$ must agree between the
  input forests of invocation $\I'$.

  If invocation $\I$ applies Case~\ref{case:tw:mpsi} and makes two recursive
  calls, we observe that $m \ge 3$ because $a_0$ is a member of $\I$'s
  sibling group, $a_0$ is not a descendant of $l$, and $l$ has at least two
  descendants in the sibling group.
  Furthermore, $s_1 > 0$.
  Thus, after cutting $\edge{a_1}$, $a_2$ has a sibling forest
  $B_2'$ that does not include~$a_0$.
  Since $a_0$ also has a sibling forest $B_0$ that does not include~$a_2$,
  $\I$'s sibling group cannot agree between $\dot T_1$
  and $\dot F_2$ after cutting~$\edge{a_1}$.
  This implies that $t = 1$ for the first recursive call
  $\alg{F_1, F_2 \div \set{\edge{a_1}}, k-1, a_0}$, and $\I'$ is the second
  recursive call $\alg{F_1, F_2 \div \set{\edge{B_{11}}, \edge{B_{12}}, \ldots,
      \edge{B_{1s_1}}}, k - s_1, a_0}$.
  This gives the recurrence $I(k,1) = 1 + I(k-1,1) +\break I(k-s_1, 0)$.
  Since no two members of $\I$'s sibling group are siblings in $F_2$
  and $a_0$ is not a descendant of $l$, cutting edges $\edge{B_{11}},
  \edge{B_{12}}, \ldots, \edge{B_{1s_1}}$ can make $\I$'s sibling group
  agree between $F_1$ and $F_2$ only if $r=2$, $s_2 = 0$, and either $l$ is a
  root of $F_2$ or the only pendant nodes of the path from $l$ to the root of
  its component in $F_2$ are members of $\I$'s sibling group.
  Thus, since we assume we make two recursive calls, we must have
  $s_1 \ge 2$, that is, the recurrence for this case is
  $I(k,1) \le 1 + I(k-1,1) + I(k-2,0)$.

  Finally, if invocation $\I$ applies Case~\ref{case:tw:mps}, observe that,
  since $a_0$ is a descendant of $l$ and has a group of sibling trees $B_0$
  that do not contain any member $a_i$ of $\I$'s sibling group, this
  sibling group can be made to agree between $F_1$ and $F_2$ only by cutting
  $\edge{a_x}$.
  Moreover, since no member $a_i$ of $\I$'s sibling group is a root of
  $F_2$, cutting $\edge{a_x}$ can make this sibling group agree
  between $F_1$ and $F_2$ only if $m = 2$.
  Thus, Case~\ref{case:tw:mps} makes only one recursive call and we obtain
  $I(k,1) = 1 + I(k-1,0)$ in this case.

  By combining the different possibilities for the case when $t = 1$, we
  obtain the recurrence
  \begin{align*}
    I(k,1) \le \max(&1 + 2I(k-1, 1), 2 + I(k-1, 0),\\
    &1 + I(k-1, 1) + I(k-2, 0)).
  \end{align*}
  Simple substitution now shows that
  $I(k,t) \le (1 + \sqrt{2})^{2 + \max(0,k-t+3)} + 2(t - 1)$.
\end{proof}

\begin{lemma}
  \label{lem:maf:correct}
  For two rooted $X$-trees $T_1$ and $T_2$ and a parameter~$k$,
  the invocation $\alg{T_1, T_2, k, \nil}$ returns ``yes'' if and only
  if $\dmspr{T_1, T_2} = \ecut{T_1, T_2, T_2} \le k$.
\end{lemma}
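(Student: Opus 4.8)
The plan is to derive Lemma~\ref{lem:maf:correct} from the two output conditions~(i) and~(ii) stated for a general invocation $\alg{F_1, F_2, k, a_0}$ just before the algorithm's description, specialized to the top-level call. Since that call has $a_0 = \nil$, condition~(i) turns a ``yes'' answer into $\ecut{T_1, T_2, T_2} \le k$ and condition~(ii) gives the converse, and $\ecut{T_1, T_2, T_2} = \dmspr{T_1, T_2}$ by the Bordewich--Semple identity recalled in Section~\ref{sec:prelim}. I would establish~(i) and~(ii) simultaneously by induction on the lexicographic measure $(k, |R_t|)$: every branching call (Steps~\ref{case:tw:non-sibling} and~\ref{case:spr:non-sibling}) strictly decreases $k$, while each reduction step (Steps~\ref{case:spr:singleton} and~\ref{case:spr:sibling}) keeps $k$ fixed and shrinks $R_t$. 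Throughout, the algorithm maintains the decomposition $F_1 = \dot{T}_1 \cup F_0$, $F_2 = \dot{F}_2 \cup F_0$ with $F_0$ agreeing, so Lemma~\ref{lem:forest_ecut} lets me pass freely between agreement forests of $(T_1,T_2)$ and of $(F_1,F_2)$.

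The routine parts come first. The base cases are Steps~\ref{case:abort} and~\ref{case:success}: if $k<0$ then $\ecut{T_1, T_2, F_2} \ge 0 > k$, so~(i) forces ``no'' and~(ii) is vacuous; if $|R_t| = 1$ then $F_1 = F_2$ is itself an agreement forest, so $\ecut{T_1, T_2, F_2} = 0$ and ``yes'' is correct. For the reductions I would check that moving an agreeing component into $F_0$ or merging a sibling pair present in both forests leaves $\ecut{T_1, T_2, F_2}$ unchanged and preserves the decomposition above and the meaning of $a_0$ (including the explicit reset to $\nil$ when $a_0$'s parent is contracted), so the invariants transfer verbatim from the smaller-$|R_t|$ state supplied by the induction hypothesis. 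Soundness (condition~(i)) is then uniform across all branching cases and independent of $a_0$: cutting an edge set $E_i$ and contracting lowers $\ecut$ by at most $\size{E_i}$, so if any child returns ``yes'' the induction hypothesis gives $\ecut{T_1, T_2, F_2 \div E_i} \le k - \size{E_i}$, whence $\ecut{T_1, T_2, F_2} \le k$.

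Completeness (condition~(ii)) for the unconstrained Step~\ref{case:spr:non-sibling} is the case-by-case core. For each of its seven cases I would invoke the structural result that generated its edge sets --- Lemma~\ref{lem:spr:sc} for Case~\ref{case:spr:is}, Lemma~\ref{lem:spr:amops} for Case~\ref{case:spr:amops}, Lemma~\ref{lem:spr:mps2} for Case~\ref{case:spr:mps2}, Lemma~\ref{lem:spr:ops} for Cases~\ref{case:spr:ops-ladder} and~\ref{case:spr:ops}, and Lemmas~\ref{lem:spr:mps} and~\ref{lem:spr:mps-take3} for Cases~\ref{case:spr:mps-ladder} and~\ref{case:spr:mps} --- to obtain, under the hypothesis $\ecut{T_1, T_2, F_2} \le k$, an optimal $F \div E$ of the prescribed form. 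Reading off which branch's edge set is contained in $E$ (after an edge shift via Lemma~\ref{lem:edge-shift} where needed) identifies a child with $\ecut{T_1, T_2, F_2 \div E_i} \le k - \size{E_i}$; the reachability clauses in those lemmas (e.g.\ $a_i' \reach[F \div E] b_i'$) are exactly what lets me verify that the same $F$ witnesses the child's $a_0$ condition, so that child returns ``yes'' by the induction hypothesis.

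I expect the main obstacle to be the two-way branching of Step~\ref{case:tw:non-sibling}, where an invocation with $a_0 \ne \nil$ makes only two recursive calls rather than the many of Step~\ref{case:spr:non-sibling}. Here I must exploit the defining property of $a_0$: the witnessing MAF separates $a_0$ from all its siblings, which by the observation following Theorem~\ref{thm:spr:fourway} --- that at most one member of a sibling group can remain attached to its progressive siblings in any MAF --- collapses the decision for the remaining member $a_x$ to the single binary choice between cutting $\edge{a_x}$ and cutting $\edge{B_{x1}}, \dots, \edge{B_{xs_x}}$. The delicate points will be (a) proving that the $a_0$ condition is legitimately installed in the sibling branches of Step~\ref{case:spr:non-sibling} that set $a_0 = a_i$, i.e.\ that once the branches cutting $\edge{a_i}$ or $\edge{B_i}$ have been enumerated, every remaining optimal solution admits an MAF separating $a_i$ from its siblings; and (b) verifying that this condition is \emph{preserved}, not merely consumed, across each two-way call, so the induction closes. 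Reconciling the ladder subcases (Cases~\ref{case:tw:mps-ladder} and~\ref{case:spr:mps-ladder}, where $l$ or its ancestors carry extra pendant structure) and the split between ``$a_0$ inside the LCA group'' (Case~\ref{case:tw:mps}) and ``$a_0$ outside'' (Case~\ref{case:tw:mpsi}) will demand the most care, and mirrors the sibling-group--agreement bookkeeping already used in the running-time analysis of Theorem~\ref{thm:maf:time}.
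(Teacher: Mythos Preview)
Your proposal is correct and follows essentially the same approach as the paper's proof: induction establishing the two output conditions~(i) and~(ii), with soundness handled uniformly and completeness handled case-by-case via the structural lemmas of Section~\ref{sec:structure}, paying special attention to installing and propagating the $a_0$ invariant across Step~\ref{case:tw:non-sibling} and the branches of Step~\ref{case:spr:non-sibling} that set $a_0 \ne \nil$. The only cosmetic difference is that the paper inducts on $k$ alone (treating Steps~\ref{case:success}--\ref{case:spr:sibling} as an inner loop within a single invocation rather than as recursion), whereas you use the lexicographic pair $(k, |R_t|)$; the structure of the argument is otherwise the same, including the identification of the ladder subcases as the places requiring ad hoc edge-swap arguments beyond the bare structural lemmas.
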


\begin{proof}
  We use induction on $k$ to prove the following two claims, which together
  imply the lemma:
  (i)~If $\ecut{T_1, T_2, F_2} > k$, the invocation $\alg{F_1, F_2, k, a_0}$
  returns ``no''.
  (ii) If $\ecut{T_1, T_2, F_2} \le k$ and either $a_0 = \nil$ or there exists
  an MAF $F$ of $F_1$ and $F_2$ such that $a_0$ is not a root of $F$ and
  $a_0 \noreach[F] a_i$, for every sibling $a_i$ of $a_0$ in $F_1$,
  the invocation $\alg{F_1, F_2, k, a_0}$ returns ``yes''.

  (i) Assume $\ecut{T_1, T_2, F_2} > k$.
  If $k < 0$, the invocation returns ``no'' in Step~\ref{case:abort}.
  If $k \ge 0$, assume for the sake of contradiction that the invocation returns
  ``yes''.
  If it does so in Step~\ref{case:success}, then $F_2$ is an AF of $T_1$ and
  $T_2$, that is, $\ecut{T_1, T_2, F_2} = 0 \le k$, a contradiction.
  Otherwise it returns ``yes'' in Step~\ref{case:tw:non-sibling}
  or~\ref{case:spr:non-sibling}.
  Thus, there exists a child invocation $\alg{F_1', F_2', k', a_0'}$
  that returns ``yes'', where $F_2' = F_2 \div E$ and $k' = k - \size{E}$, for
  some non-empty edge set $E$.
  By the inductive hypothesis, we therefore have $\ecut{T_1, T_2, F_2'} \le k'$
  and, hence, $\ecut{T_1, T_2, F_2} \le k' + \size{E} = k$, again a
  contradiction.

  (ii) Assume $\ecut{T_1, T_2, F_2} \le k$
  and either $a_0 = \nil$ or there exists an MAF $F$ of $F_1$ and $F_2$ such
  that $a_0$ is not a root of $F$ and $a_0 \noreach[F] a_i$, for every sibling
  $a_i$ of $a_0$ in $F_1$.
  In particular, $k \ge 0$ and the invocation $\alg{F_1, F_2, k, a_0}$
  produces its answer in Step~\ref{case:success}, \ref{case:tw:non-sibling}
  or~\ref{case:spr:non-sibling}.
  If it produces its answer in Step~\ref{case:success}, it answers ``yes''.
  Next we consider Steps~\ref{case:tw:non-sibling}
  and~\ref{case:spr:non-sibling} and prove that
  at least one of the recursive calls made in each case returns ``yes'',
  which implies that the current invocation returns ``yes''.

  In Step~\ref{case:tw:non-sibling}, $a_0 \ne \nil$, that is,
  $a_0$ is not a root of $F$ and $a_0 \noreach[F] a_i$, for every sibling $a_i$
  of $a_0$ in $F_1$.
  In Case~\ref{case:tw:is}, $a_x \noreach[F_2] a_i$ and, hence, $a_x
  \noreach[F] a_i$, for all $i \ne x$.
  Thus, $a_x$~is a root of $F$ because otherwise the components of $F$
  containing $a_x$ and $a_0$ would overlap in $F_1$.
  This implies that there exists an edge set $E$ such that $\edge{a_x} \in E$
  and $F = F_2 \div E$, that is, the recursive call
  $\alg{F_1, F_2 \div \set{\edge{a_x}}, k - 1, a_0}$ returns ``yes''.

  In Case~\ref{case:tw:mpsi}, $a_1' \noreach[F] b_1'$, for all leaves
  $a_1' \in F_2^{a_1}$ and $b_1' \in F_2^{B_{1j}}$, $1 \le j \le s_1$,
  because the path between $a_1'$ and $b_1'$ would overlap the component
  containing $a_0$.
  Thus, there exists an edge set $E$ such that $F_2 \div E = F$ and either
  $\edge{a_1} \in E$ or $\set{\edge{B_{11}}, \edge{B_{12}}, \dots,
    \edge{B_{1s_1}}} \subseteq E$.
  If we make two recursive calls in Case~\ref{case:tw:mpsi}, this shows that one
  of them returns ``yes''.
  If we make only one recursive call, we have $s_1 = 1$ and $s_r = 0$.
  Assume this recursive call returns ``no''.
  Then $\edge{a_1} \in E$.
  Let $F'$ be the forest obtained by cutting $\edge{B_1}$ instead of
  $\edge{a_1}$, resolving the pair $\set{a_1, a_r}$, cutting edge
  $\edge{(a_1, a_r)}$ instead of $\edge{a_r}$ if $\edge{a_r} \in E$, and
  otherwise cutting the same edges as in $E$.
  $F$ and $F'$ are identical, except that in $F'$,
  $a_1$ and $a_r$ are siblings and no leaf in $F_2^{B_1}$ can reach a leaf
  not in $F_2^{B_1}$.
  The former cannot introduce any triples incompatible with $F_1$ because
  $a_1$ and $a_r$ are siblings in $F_1$.
  The latter cannot introduce any overlapping components because this would
  imply that $a_1' \reach[F] b_1'$, for two leaves $a_1' \in F_2^{a_1}$
  and $b_1' \in F_2^{B_1}$, and we already argued that no such path can
  exist.
  Thus, $F'$ is also an MAF of $F_1$ and $F_2$.
  Finally, $a_0$ is not a root in $F'$ and $a_0 \noreach[F'] a_1$ because
  otherwise $a_0 \reach[F] a_r$.
  Thus, since there exists an edge set $E'$ such that $F' = F_2 \div E'$ and
  $\edge{B_1} \in E'$, the recursive call $\alg{F_1, F_2 \div \set{\edge{B_1}},
    k-1, a_0}$ returns ``yes'', a contradiction.

  In Case~\ref{case:tw:mps-ladder}, there exists an edge set $E$ such that
  $F = F_2 \div E$ and $E \cap \set{\edge{a_x}, \edge{B_x}} \ne \emptyset$.
  Otherwise we would have $a_x \reach[F] a_0$ or $a_x' \reach[F]
  b_x'$, for two leaves $a_x' \in F_2^{a_x}$ and $b_x' \notin F_2^{a_i}$, for
  all $1 \le i \le m$, but we have $a_x \noreach[F] a_0$ and the path between
  $a_x'$ and $b_x'$ would overlap the component of $F$ that contains $a_0$.
  Now, if $\edge{B_x} \notin E$, we construct an MAF $F'$ of $F_1$ and
  $F_2$ such that $a_0$ is not a root in $F'$ and $a_0 \noreach[F'] a_i$, for
  all $1 \le i \le m$, and an edge set $E'$ such that $F' = F_2 \div E'$ and $\edge{B_x} \in E'$
  as in Case~\ref{case:tw:mpsi}.
  Thus, the invocation $\alg{F_1, F_2 \div \set{\edge{B_x}}, k-1, a_0}$
  returns ``yes''.

  In Case~\ref{case:tw:mps}, finally, one of the two recursive calls
  $\alg{F_1, F_2 \div \set{\edge{a_x}}, k-1, a_0}$ or
  $\alg{F_1, F_2 \div \set{\edge{B_{x1}}, \edge{B_{x2}}, \dots,
      \edge{B_{xs_x}}}, k - s_x, a_0}$ must return ``yes'',
  by the same arguments as in Case~\ref{case:tw:mpsi}.
  Thus, if $m > 2$ and $r > 2$, one of the recursive calls we make returns
  ``yes''.
  If $m > 2$ and $r = 2$, we observe that after cutting edges
  $\edge{B_{x1}}, \edge{B_{x2}}, \dots, \edge{B_{xs_x}}$, $a_x$ and $a_0$
  satisfy the conditions of Case~\ref{case:tw:mps-ladder}, which shows
  that we can cut edge $B_x'$ immediately after cutting these edges.
  Thus, if the recursive call $\alg{F_1, F_2 \div \set{\edge{a_x}}, k-1, a_0}$
  returns ``no'', the recursive call
  $\alg{F_1, F_2 \div \set{\edge{B_{x1}}, \edge{B_{x2}}, \dots,
      \edge{B_{xs_x}}, \edge{B_x'}}, k - s_x, a_0}$ must return ``yes'' in this
  case.
  Finally, if $m = 2$, since $a_x \noreach[F] a_0$ and any path between two
  leaves $a_x' \in F_2^{a_x}$ and $b_x' \notin F_2^{a_x} \cup F_2^{a_0}$ would
  overlap the component of $F$ that contains $a_0$, $a_x$ is a root in $F$.
  Thus, the invocation $\alg{F_1, F_2 \div \set{\edge{a_x}}, k-1, a_0}$ returns
  ``yes'' in this case.

  Now consider Step~\ref{case:spr:non-sibling}.
  In Cases~\ref{case:spr:is}, \ref{case:spr:amops}, \ref{case:spr:mps2},
  and~\ref{case:spr:ops}, Lemmas~\ref{lem:spr:sc}, \ref{lem:spr:amops},
  \ref{lem:spr:mps2}, \ref{lem:spr:ops} and the inductive hypothesis show that
  one of the recursive calls returns ``yes'' and, thus, the current invocation
  returns ``yes''.

  In Case~\ref{case:spr:ops-ladder}, Lemma~\ref{lem:spr:ops} and the inductive
  hypothesis show that one
  of the recursive calls
  $\alg{F_1, F_2 \div \set{\edge{a_1}, \edge{a_2}}, k-2, \nil}$,
  $\alg{F_1, F_2 \div \set{\edge{B_1}, \edge{B_2}}, k-2, \nil}$,
  $\alg{F_1, F_2 \div \set{\edge{B_1}}, k-1, a_2}$,
  $\alg{F_1, F_2 \div \set{\edge{B_2}}, k-1,\break a_1}$,
  $\alg{F_1, F_2 \div \set{\edge{a_1}}, k-1, a_2}$ or
  $\alg{F_1, F_2 \div \set{\edge{a_2}}, k-1, a_1}$
  would return ``yes''.
  Thus, we need to argue only that we can cut \emph{both} $\edge{B_i}$ and
  $\edge{B_i'}$ in the third and fourth recursive calls, and that the last
  two recursive calls are not necessary when we do not make them.

  First consider cutting $\edge{B_1}$ and setting $a_0 = a_2$ in the third
  recursive call.
  We require this call to return ``yes'' only if the other calls return ``no''.
  The forest $F_2' = F_2 \div \set{\edge{B_1}}$ contains a triple
  $\triple{a_1'}{a_2'b_2'}$, where $a_1' \in F_2^{a_1}$, $a_2' \in F_2^{a_2}$,
  and $b_2' \in F_2^{B_2}$, while $F_1$ contains the triple
  $\triple{a_1'a_2'}{b_2'}$.
  Thus, in order to obtain an MAF of $F_1$ and $F_2'$ where $a_2$ exists and
  is not a root, we need to cut either $\edge{a_1}$ or $\edge{B_1'}$.
  Since the first and fifth recursive calls return ``no'', however, we know that
  cutting $a_1$ cannot lead to an MAF of $F_1$ and $F_2$, and
  we can cut $\edge{B_1'}$ along with edge $\edge{B_1}$.
  The case when we cut $\edge{B_2'}$ along with edge $\edge{B_2}$ is analogous.

  If we do not make the recursive call $\alg{F_1, F_2 \div \set{\edge{a_1}},
    k-1, a_2}$, then $l$ has exactly one sibling, $a_i$, and its parent
  $\parent{l}$ is either a root or has exactly one sibling,~$a_j$.
  Thus, the forest $F_2' = F_2 \div \set{\edge{a_1}}$ contains a triple
  $\triple{a_2'b_2'}{a_i'}$, where $a_2' \in F_2^{a_2}$, $b_2' \in F_2^{B_2}$,
  and $a_i' \in F_2^{a_i}$, while $F_1$ contains the triple
  $\triple{a_2'a_i'}{b_2'}$.
  In order to obtain an MAF of $F_1$ and $F_2'$ where $a_2$ exists and is
  not a root, we therefore need to cut either $\edge{a_i}$ or $\edge{B_i} =
  \edge{l}$.
  If $\parent{l}$ is a root, cutting either edge has the same effect.
  If $\parent{l}$ has a sibling $a_j$ and we cut $\edge{a_i}$, we can obtain
  an alternate MAF by cutting $\edge{l}$ instead of $\edge{a_i}$, resolving
  $\set{a_i, a_j}$, cutting edge $\edge{(a_i, a_j)}$ instead of $\edge{a_j}$
  if $\edge{a_j} \in E$, and otherwise cutting the same edges as in $E$.
  Thus, we can always replace the recursive call
  $\alg{F_1, F_2 \div \set{\edge{a_1}}, k-1, a_2}$ with the call
  $\alg{F_1, F_2 \div \set{\edge{a_1}, \edge{l}}, k-2, a_2}$ without affecting
  the correctness of the algorithm.
  If this call returns ``yes'', however, then so does the call
  $\alg{F_1, F_2 \div \set{\edge{B_1}, \edge{B_1'}},\break k-2, a_2}$ because we
  can yet again obtain an alternate MAF by cutting edges $\edge{B_1}$ and
  $\edge{B_2}$ instead of edges $\edge{a_1}$ and $\edge{l}$, resolving
  $\set{a_1, a_i}$, cutting $\edge{(a_1, a_i)}$ instead of $\edge{a_i}$ if
  $\edge{a_i} \in E$, and otherwise cutting the same edges as in $E$.
  Thus, the call $\alg{F_1, F_2 \div \set{\edge{a_1}}, k-1, a_2}$ can be
  eliminated altogether.
  An analogous argument shows that we can eliminate the call $\alg{F_1, F_2 \div
    \set{\edge{a_2}}, k-1, a_1}$.

  In Case~\ref{case:spr:mps-ladder}, Lemma~\ref{lem:spr:mps-take3} and the
  inductive hypothesis show that one of the recursive calls
  $\alg{F_1, F_2 \div \set{\edge{a_1}}, k-1, \nil}$,
  $\alg{F_1, F_2 \div \set{\edge{a_2}}, k-1, \nil}$,
  $\alg{F_1, F_2 \div \set{\edge{B_{11}}, \edge{B_{12}}, \dots,
      \edge{B_{1s_1}}}, k - s_1, \nil}$ or
  $\alg{F_1, F_2 \div \set{\edge{B_2}}, k - 1, \nil}$
  would return ``yes''.
  We need to show that the call $\alg{F_1, F_2 \div \set{\edge{a_2}}, k-1,
    \nil}$ is not necessary.
  To see this, observe that, if $l$ is a root, then cutting $\edge{a_2}$ or
  $\edge{B_2}$ has the same effect.
  If $l$ is not a root but has a member $a_i$ of the current sibling group as
  a sibling, then we can obtain an alternate MAF by cutting $\edge{B_2}$
  instead of $\edge{a_2}$, resolving $\set{a_2, a_i}$, cutting $\edge{(a_2,
    a_i)}$ instead of $\edge{a_i}$ if $a_i \in E$, and otherwise cutting the
  same edges as in $E$.

  In Case~\ref{case:spr:mps}, finally, the correctness follows from
  Lemmas~\ref{lem:spr:mps} and~\ref{lem:spr:mps-take3} if we can show that
  setting $a_0 \ne a_1$ is correct for the second and fourth recursive calls.
  This, however, follows because, if neither the first nor the
  third recursive call returns ``yes'', then in every MAF $F$ of $F_1$ and
  $F_2$, $a_1$ exists and there exist two leaves $a_1' \in F_2^{a_1}$ and
  $b_1' \in F_2^{B_{1j}}$, for some $1 \le j \le s_1$, such that
  $a_1' \reach[F] b_1'$.
\end{proof}

The proof of the following theorem is provided in
Section~\ref{sec:root-maf-appr} of the supplementary material.

\begin{theorem}
  \label{thm:approx:spr}
  Given two rooted $X$-trees $T_1$ and $T_2$, a \mbox{$3$-approximation} of
	$\ecut{T_1, T_2, T_2} = \drspr{T_1, T_2}$ can be computed in
	$\Oh{n \log n}$ time.
\end{theorem}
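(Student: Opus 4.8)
The plan is to turn the branching search of Section~\ref{sec:fpt} into a greedy procedure: we keep Steps~\ref{case:spr:singleton}--\ref{item:choose-order} (prune maximal agreeing subtrees, grow sibling pairs, choose a sibling group, locate the minimal LCA $l$ and order the $a_i$ by $s_i$) verbatim, and replace the branching of Step~\ref{case:spr:non-sibling} by a single commitment: in each configuration we cut the one edge set $S$ that the corresponding structural lemma certifies to contain an edge lying in some optimal solution, and recurse on $F_2 \div S$. Expansions are handled exactly as in the FPT algorithm, relying on Lemma~\ref{lem:expand}. Since cutting edges of a forest only splits components and thus preserves the agreement property, every intermediate $F_2$ remains on the way to an agreement forest, and the recursion terminates (as in Step~\ref{case:success}) once $\ecut{T_1,T_2,F_2}=0$.

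The heart of the argument is the per-step inequality $\size{S} \le 3\cdot\parensL{\ecut{T_1,T_2,F_2} - \ecut{T_1,T_2,F_2 \div S}}$, i.e.\ the number of cuts never exceeds three times the progress. I would verify it case by case. For isolated siblings, Lemma~\ref{lem:spr:sc} shows $S=\set{\edge{a_1},\edge{a_2}}$ makes progress at least $1$ (ratio at most $2$). For a two-node sibling group, Lemma~\ref{lem:spr:twosiblings}(i) guarantees that one of the \emph{three} edges $\edge{a_1},\edge{a_2},\edge{B_1}$ is good, so $S=\set{\edge{a_1},\edge{a_2},\edge{B_1}}$ has progress at least $1$ and ratio at most $3$; this is exactly where the refinement from the four-edge set of Theorem~\ref{thm:spr:fourway} to a three-edge set is indispensable. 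For the pendant-subtree cases (Lemmas~\ref{lem:spr:amops}--\ref{lem:spr:mps-take3}) the relevant lemma certifies that an entire batch of $\size{S}=c$ edges (the $\edge{B_{ij}}$, possibly with one $\edge{B_i'}$) lies in a common optimal solution up to at most one exception, so the progress is at least $c-1$; since $c \le 3(c-1)$ for $c \ge 2$, the ratio is again at most $3$. Summing over the whole run, and using that the output is an agreement forest whose number of components exceeds the number of cuts by exactly one while a maximum agreement forest has $\ecut{T_1,T_2,T_2}+1$ components, gives that the computed distance is at most $3\,\ecut{T_1,T_2,T_2} = 3\,\drspr{T_1,T_2}$.

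The subtle points I expect to spend the most effort on are the boundary configurations that the refined lemmas treat separately. In the ``skip-one'' alternatives of Lemmas~\ref{lem:spr:amops} and~\ref{lem:spr:ops}, exactly one $\edge{B_i}$ may fail to be good, so $S$ must draw two pendant edges $\edge{B_i},\edge{B_j}$ from distinct indices to guarantee that one survives; and in the $r=2$ cases of Lemmas~\ref{lem:spr:mps2} and~\ref{lem:spr:mps-take3} the extra edge $\edge{B_2'}$ must be added to $S$ to force progress. Verifying that a good edge genuinely survives in each committed $S$, and that $S$ never needs a fourth edge, is the main obstacle to the ratio bound, and it is precisely the analogue of the case analysis already carried out for correctness in Lemma~\ref{lem:maf:correct}.

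It remains to obtain the $\Oh{n \log n}$ running time. Unlike the binary case, a single linear-time bottom-up pass does not suffice, because identifying the minimal LCA $l$, ordering the siblings by their pendant-path lengths $s_i$ as in Step~\ref{item:choose-order}, and tracking which $F_2^{a_i}$ are still non-root components all change dynamically as edges are cut and nodes are expanded and merged. The plan is to keep $F_2$ in a data structure supporting ancestor/LCA queries and subtree detachment, charging each cut to the work of locating it; the extra logarithmic factor over the linear-time binary algorithm comes from the repeated LCA and sibling-group queries required to resolve multifurcations. I expect this part to be routine once the right invariants are in place, with the ratio analysis above being the genuinely delicate step.
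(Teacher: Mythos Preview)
Your ratio argument breaks precisely in the pendant-subtree cases. You assert that each of Lemmas~\ref{lem:spr:amops}--\ref{lem:spr:mps-take3} certifies that ``an entire batch of $\size{S}=c$ edges (the $\edge{B_{ij}}$) lies in a common optimal solution up to at most one exception.'' That is only what Lemma~\ref{lem:spr:amops} says. Lemmas~\ref{lem:spr:ops}--\ref{lem:spr:mps-take3} are disjunctions whose first alternatives involve only the edges $\edge{a_i}$: e.g.\ Lemma~\ref{lem:spr:mps2} states that either $E \cap \set{\edge{a_1},\edge{a_2}} \ne \emptyset$ \emph{or} all of the $\edge{B_{ij}}$ lie in~$E$. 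If the optimum realizes the first alternative, your committed set $S$ of $\edge{B_{ij}}$-edges may share \emph{no} edge with $E$, and the per-step inequality $\size{S} \le 3\bigl(\ecut{T_1,T_2,F_2} - \ecut{T_1,T_2,F_2 \div S}\bigr)$ can fail outright. Enlarging $S$ to include $\edge{a_1},\edge{a_2}$ does not rescue the bound, since then $\size{S} = 2 + s_1 + s_2 \ge 4$ against guaranteed progress of only~$1$. Even for Lemma~\ref{lem:spr:amops} your inequality $c \le 3(c-1)$ requires $c \ge 2$, i.e.\ $r \ge 3$, and you have not handled $r=2$.

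The paper's proof is structurally different and does not invoke Lemmas~\ref{lem:spr:amops}--\ref{lem:spr:mps-take3} for the ratio at all. When $m>2$ it repeatedly selects the member $a_x$ of the current sibling group with greatest \emph{depth estimate} and cuts the two edges $\edge{a_x},\edge{\parent{a_x}}$; the depth choice guarantees that the pendant set $B_x$ at the moment of cutting contains no other~$a_i$. After $s$ such rounds on the same sibling group one has $s$ pairs $\set{\edge{a_{i_j}},\edge{B_{i_j}}}$ with pairwise disjoint $B_{i_j}$'s, and a single Theorem~\ref{thm:spr:fourway}-style overlap argument shows some optimal $E$ meets at least $s-1$ of them, giving ratio $2s/(s-1) \le 3$ for $s \ge 3$; the cases $s \le 2$ are handled separately using that at least two cuts were required to make the sibling group agree. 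This amortization across an entire run of invocations on one sibling group is the idea your one-shot-per-configuration scheme is missing.

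On the running time, the paper explicitly \emph{discards} Step~\ref{item:choose-order} because locating $l$ and sorting by $s_i$ costs linear time per invocation; it replaces this with precomputed depth estimates maintained in a priority queue at $\Oh{\log n}$ per edge cut, never computing an LCA. Your plan to retain Step~\ref{item:choose-order} behind a dynamic ancestor/LCA structure would have to re-identify $l$ and re-rank the $a_i$ after every cut and expansion, and you have not shown why that totals $\Oh{n \log n}$ rather than $\Oh{n^2}$.
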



\section{Conclusions}

\label{sec:concl}

We developed efficient algorithms for computing MAFs of multifurcating trees.
Our fixed-parameter algorithm achieves
the same running time as in the binary case and our $3$-approximation
algorithm achieves a running time of $\Oh{n \log n}$, almost matching
the linear running time for the binary case.
Implementing and testing our algorithms will be the focus of future work.

Two other directions to be explored by future work are practical improvements
of the running time of the FPT algorithm presented here and extending
our FPT algorithm so it can be used to compute maximum \emph{acyclic} agreement
forests (MAAFs) and, hence, the hybridization number of multifurcating trees.
To speed up our FPT algorithm for computing MAFs, it may be possible to extend
the reduction rules used by Linz and Semple~\cite{linz09hnt} for computing MAAFs
of multifurcating trees so they can be applied to MAF computations, and combine
them with the FPT algorithm in this paper.
The fastest fixed-parameter algorithms for computing
MAAFs of binary trees~\cite{albrecht2012fast,chen2010hybridnet,
2011arXiv1108.2664W} are extensions of the binary MAF algorithms of
Whidden et al.~\cite{whidden2009uva,whidden2010fast}.
These algorithms were developed by examining which search branches of the binary
MAF algorithm get ``stuck'' with cyclic agreement forests and consider
cutting additional edges to avoid these cycles~\cite{albrecht2012fast} or
refine cyclic agreement forests to acyclic agremeent
forests~\cite{chen2010hybridnet,2011arXiv1108.2664W}.
Similarly, Chen and Wang~\cite{chen2012algorithms} recently
extended the MAF fixed-parameter algorithm for two binary trees
to compute agreement forests of multiple binary trees using
an iterative branching approach.
The proofs of various structural lemmas in this paper prove that \emph{any}
MAF can be obtained by cutting an edge set that includes certain edges.
To prove this, we started with an arbitrary MAF and an edge set such that
cutting these edges yields this MAF, and then we modified this edge set
so that it includes the desired set of edges without changing the resulting
forest.
As in the binary case \cite{2011arXiv1108.2664W}, the same lemmas apply
also to MAAFs; since the modifications in the proofs do not change the
resulting AF, the only needed change in the proof is to start with an
edge set such that cutting it yields an MAAF.
Thus, numerous lemmas in this paper may also form the basis for an efficient
algorithm for computing MAAFs.

Finally, we note that our fixed-parameter algorithm becomes greatly
simplified when comparing a binary tree to a multifurcating tree.
This is common in practice when, for example, comparing many multifurcating gene
trees to a binary reference tree or binary supertree.
To see this, suppose $F_1$ is binary, that is, $m=2$ in every case of the FPT
algorithm.
Then only Cases~\ref{case:spr:is}, \ref{case:spr:amops}~and~\ref{case:spr:mps2}
apply in Step~\ref{case:spr:non-sibling} and Step~\ref{case:tw:non-sibling}
never applies.
Using our observation that cutting $\edge{B_2}$ is not
necessary in Case~\ref{case:spr:amops} when $m=2$, our algorithm
becomes similar to the MAF algorithm for binary trees~\cite{whidden2010fast}.
We further note, in the interest of practical efficiency, that cutting $a_2$
is unnecessary in this case when the parent of $a_1$ is a binary node
(and, indeed, our algorithm is then identical to the algorithm of
\cite{whidden2010fast} when applied to two binary trees).


\section*{Acknowledgement}
Chris Whidden was supported by a Killam predoctoral scholarship and
the Tula Foundation, Robert G.\ Beiko is a Canada Research Chair and
was supported by NSERC, Genome Atlantic, and the Canada Foundation for
Innovation, and Norbert Zeh is a Canada Research Chair and was
supported by NSERC and the Canada Foundation for Innovation.


\nocite{clrs2}

\bibliographystyle{IEEEtran}
\bibliography{multi}

\renewcommand{\thesection}{S\arabic{section}}
\renewcommand{\thefigure}{S\arabic{figure}}

\section{Omitted Proofs}

\subsection{Lemma~\ref{lem:expand}}

\label{sec:lem:expand:proof}

The only difference between $F_2$ and $F_2'$ is the expansion of
$\set{a_{p+1}, a_{p+2}, \ldots, a_m}$ in $F_2'$, so
$\ecut{F_1, F_2, F_2} \le \ecut{F_1, F_2, F_2'}$.
Since $F \div E$ is an MAF of $F_1$ and $F_2$,
it suffices to show that $F \div E$ is an AF of $F_1$ and
$F_2'$ to prove that $\ecut{F_1, F_2, F_2'} \le \ecut{F_1, F_2, F_2}$.
Assume the contrary.
Then, since $F \div E$ is a forest of $F_1$, it cannot be a forest of~$F_2'$.
Since the only difference between $F_2$ and $F_2'$ is the
expansion of $\set{a_{p+1}, a_{p+2}, \ldots, a_m}$, this implies that some
component of $F \div E$ contains leaves $a_i' \in F_2^{a_i}$ and
$a_j' \in F_2^{a_j}$, for some $1 \le i \le p$ and $p+1 \le j \le m$,
contradicting that $a_i' \noreach [F \div E] a_j'$ for all such leaves.

\subsection{Lemma~\ref{lem:spr:twosiblings}}

\label{sec:lem:spr:twosiblings:proof}

As in the proof of Theorem~\ref{thm:spr:fourway},
(ii) follows immediately from (i), so
it suffices to prove that there exist a binary resolution
$F$ of $F_2$ and an edge set $E$ of size
$\ecut{T_1, T_2, F_2}$ such that $F \div E$ is an MAF of $T_1$ and
$T_2$ (and, hence, of $F_1$ and $F_2$)
and $E \cap \set{\edge{a_1}, \edge{a_2}, \edge{B_1}} \ne \emptyset$.
Once again, we show that, if $E \cap \set{\edge{a_1}, \edge{a_2}, \edge{B_1}}
= \emptyset$, we can replace an edge $f \in E$ with an edge in
$\set{\edge{a_1}, \edge{a_2}, \edge{B_1}}$
without changing $F \div E$.

This follows from the same arguments as in the proof of
Theorem~\ref{thm:spr:fourway}
unless there exist leaves $a_1' \in F_2^{a_1}$, $a_2' \in\nobreak
F_2^{a_2}$ and $b_1' \in F_2^{B_1}$ such that
$a_1' \reach[F \div E] \parent{a_1} \reach[F \div E] b_1'$ and
$a_2' \reach[F \div E] \parent{a_2}$.
In this case, since $a_1$ is not a child of $l$,
we have $a_2 \notin F_2^{B_1}$.
Thus, since $\set{a_1, a_2}$ is a sibling pair in $F_1$,
$F_1$ contains the triple $a_1'a_2'|b_1'$.
Since $F_2$ contains the triple $a_1'b_1'|a_2'$ and
$a_1' \reach[F \div E] b_1'$, this implies that $a_1' \noreach[F \div E] a_2'$.
Thus, we also have $a_2' \noreach[F \div E] x$, for all
$x \in F_2^{l} \setminus F_2^{a_2}$, as otherwise the components of
$F \div E$ containing $a_1', b_1'$ and~$a_2', x$ would overlap in $F_1$.
We choose an arbitrary leaf $b_2' \in B_2$ and
the first edge $f \in E$ on the path from $\parent{a_2}$ to~$b_2'$.
Lemma~\ref{lem:edge-shift} implies that $F \div E = F \div (E
\setminus \set{f} \cup \set{\edge{a_2}})$.

\subsection{Lemma~\ref{lem:spr:mps}}

\label{sec:lem:spr:mps:proof}

As in the proof of Lemma~\ref{lem:spr:mps2},
we prove this using induction on $s = s_1 + s_2$.
The base case is $s = 2$ and, hence, $s_1 = s_2 = 1$ because
$r > 2$ implies that $s_1 > 0$ and $s_2 > 0$.
In this case, Theorem~\ref{thm:spr:fourway} proves the lemma.
So assume $s > 2$ and the claim holds for all $1 \le s' < s$.
By Theorem~\ref{thm:spr:fourway}, there exist a resolution $F$ of
$F_2$ and an edge set $E$ of size $\ecut{T_1, T_2, F_2}$ such that $F
\div E$ is an AF of $F_1$ and~$F_2$ and
$E \cap \set{\edge{a_1}, \edge{a_2}, \edge{B_{11}},
  \edge{B_{21}}} \ne \emptyset$.
Using an inductive argument as in the proof of
Lemma~\ref{lem:spr:mps2}, it follows that there exist a resolution
$F'$ of $F_2$ and an edge set $E'$ satisfying the lemma.

\subsection{Lemma~\ref{lem:spr:mps-take3}}

\label{sec:lem:spr:mps-take3:proof}

First consider the case when $s_2 = 0$, which is possible because $r = 2$.
Then $B_2' = B_2$ and, by Theorem~\ref{thm:spr:fourway}, there exist a
resolution $F$ of $F_2$ and an edge set $E$ of size $\ecut{T_1, T_2, F_2}$
such that $F \div E$ is an AF of $T_1$ and $T_2$ and $E \cap
\set{\edge{a_1}, \edge{a_2}, \edge{B_{11}}, \edge{B_2}} \ne \emptyset$.
If  $E \cap \set{\edge{a_1}, \edge{a_2}, \edge{B_2}} \ne \emptyset$, the
lemma holds.
Otherwise $\edge{B_{11}} \in E$ and an inductive argument similar to the
one in the proof of Lemma~\ref{lem:spr:mps2} proves the lemma.

If $s_2 > 0$, we observe that the proof of Lemma~\ref{lem:spr:mps} did
not use the assumption that $r > 2$ but only that it implies $s_2 > 0$.
Hence, this proof shows that there exist a resolution $F$ of $F_2$
and an edge set $E$ of size $\ecut{T_1, T_2, F_2}$ such that $F \div E$
is an AF of $T_1$ and $T_2$ and $E \cap \set{\edge{a_1}, \edge{a_2}} \ne
\emptyset$, $\set{\edge{B_{11}, \edge{B_{12}}, \dots, \edge{B_{1s_1}}}}
\subseteq E$ or $\set{\edge{B_{21}}, \edge{B_{22}}, \dots, \edge{B_{2s_2}}}
\subseteq E$.
Among all such resolutions and edge sets, choose $F$ and $E$ so that
$E \cap \set{\edge{a_1}, \edge{a_2}} \ne \emptyset$ or $\set{\edge{B_{11}},
  \edge{B_{12}}, \dots, \edge{B_{1s_1}}} \subseteq E$ if possible.
If we can find such a pair $(F,E)$, the lemma holds.
Otherwise $\set{\edge{B_{21}}, \edge{B_{22}}, \dots, \edge{B_{2s_2}}}
\subseteq E$.
Let $F'$ be the forest obtained from $F_2$ by resolving $B_{21}, B_{22},
\dots, B_{2s_2}$ and cutting edges $\edge{B_{21}}, \edge{B_{22}}, \dots,
\edge{B_{2s_2}}$.
In $F'$, we have $r = 2$ and \mbox{$s_2 = 0$}.
Hence, by the argument in the previous paragraph, there exist a resolution
$F''$ of $F'$ and an edge set $E''$ of size $\ecut{T_1, T_2, F'}$ such
that $F'' \div E''$ is an AF of $T_1$ and $T_2$ and
$E'' \cap \set{\edge{a_1}, \edge{a_2}} \ne \emptyset$,
$\set{\edge{B_{11}}, \edge{B_{12}}, \ldots, \edge{B_{1s_1}}}
\subseteq E''$ or $\edge{B_2'} \in E''$.
In the first two cases, we obtain a contradiction to the choice of $F$ and
$E$.
In the latter case, the set
\mbox{$\set{\edge{B_{21}}, \edge{B_{22}}, \dots, \edge{B_{2s_2}}} \cup E''$}
has size $s_2 + \ecut{T_1, T_2, F''} = \ecut{T_1, T_2, F_2}$,
$F_2 \div (\set{\edge{B_{21}}, \edge{B_{22}}, \dots, \edge{B_{2s_2}}}
\cup E'')$ is an AF of $T_1$ and $T_2$, and
$\set{\edge{B_{21}}, \edge{B_{22}}, \dots, \edge{B_{2s_2}}, \edge{B_2'}}
\subseteq \set{\edge{B_{21}}, \edge{B_{22}}, \dots, \edge{B_{2s_2}}} \cup
E''$.
Thus, the lemma holds in this case as well.

\section{Linear Time Per Invocation}

\label{sec:linear-time}

We represent each forest as a collection of nodes, each of which points
to its parent, to its leftmost child, and to its left and right siblings.
This allows us to cut an edge in constant time, given the parent and
child connected by this edge.
Every labelled node (i.e., every node in $R_t$ or~$R_d$) stores a pointer to
its counterpart in the other forest.
For $\dot{T}_1$, we maintain a list of sibling groups of labelled nodes.
For each such group, the list stores a pointer to the parent of the sibling
group, which allows us to access the members of the sibling group by traversing
the list of the parent's children.
To detect the creation of such a sibling group, and add it to the list,
each internal node of $\dot{T}_1$ stores the number of its unlabelled children.
When labelling a non-root node, we decrease its parent's unlabelled children
count by one.
If this count is now~$0$, the children of this parent node form a new sibling
group, and we add a pointer to the parent to the list of sibling groups.
For $\dot{F}_2$, we maintain a list $R'_d \subseteq R_t$ of labelled nodes
that are roots of $\dot{F}_2$.
This list is used to move these roots from $R_t$ to $R_d$.

Steps~\ref{case:abort}--\ref{item:choose-sib-pair} are implemented
similarly to the algorithm for binary trees~\cite{whidden2010fast}.
Step~\ref{case:abort} clearly takes constant time.
In Step~\ref{case:success}, we can test in constant time whether
$\size{R_t} \le 1$ by inspecting at most two nodes in the first two
sibling groups.
Step~\ref{case:spr:singleton} takes constant time to test whether the
root list $R_d'$ is empty and, if it is not, cut the appropriate edge in
$\dot{T}_1$ and update a constant number of lists and pointers.
Step~\ref{item:choose-sib-pair} takes contant time using the list
of sibling groups.
We always choose the next sibling group from the beginning of this list
and append new sibling groups to the end.
This automatically gives preference to the most recently chosen
sibling group as required in Step~\ref{item:choose-sib-pair}.

Step~\ref{case:spr:sibling} requires some care to implement efficiently.
We iterate over the members $a_1, a_2, \dots, a_m$ of the current sibling
group and mark their parents in $\dot{F}_2$.
Initially, all nodes in $\dot{F}_2$ are unmarked.
When inspecting a node $a_i$ whose parent $p_{a_i}$ in $\dot{F}_2$ is
unmarked, we mark $p_{a_i}$ with~$a_i$.
If $p_{a_i}$ is already marked with a node $a_j$ ($j < i$), then $a_i$ and
$a_j$ are siblings in $\dot{T}_1$ and $\dot{F}_2$.
We resolve them in constant time and mark $p_{a_i}$ (which is now the
grandparent of~$a_i$) with the new parent $(a_i, a_j)$ of $a_i$ and~$a_j$.
Since we spend constant time per member $a_i$ of the sibling group, this
procedure takes $\Oh{m}$ time.
Once it finishes, the remaining members of the sibling group are not siblings
in $\dot{F}_2$.
Performing a contraction if the remaining sibling group has only one
member takes constant time.

In Step~\ref{item:choose-order}, we perform a linear-time traversal
of $\dot{F}_2$ to label every node $x$ with the number $r_x$ of members of the
current sibling group among its descendants.
Then, if the previously chosen minimal LCA $l$ still exists in $\dot{F}_2$
and has at least two descendants in the current sibling group, we keep this
choice of $l$.
Otherwise a node $x$ is a minimal LCA of a subset of the current sibling
group if and only if $r_x \ge 2$ and $r_y \le 1$, for each child $y$ of $x$.
If there is no such node $x$, we proceed to Step~\ref{case:tw:non-sibling}
without choosing $l$ because $a_i \noreach[F_2] a_j$, for all
$1 \le i < j \le m$.
Otherwise we pick any node $x$ satisfying this condition as the new minimal
LCA $l$.
No matter whether $l$ is the previously chosen minimal LCA or a new node,
we set $r = r_l$ and traverse the paths from $l$ to its descendant members
of the sibling group, $a_1, a_2, \dots, a_r$.
We do this by visiting all descendants $y$ of $l$ such that $r_y = 1$.
For all $1 \le i \le r$, the length of the path from $l$ to $a_i$, excluding
$l$ and $a_i$, is $s_i$.
We sort $a_1, a_2, \dots, a_r$ by their path lengths $s_1, s_2, \dots, s_r$
using Counting Sort~\cite{clrs2}.
Since $s_i \le n$, for all $1 \le i \le r$, this takes linear time.

To distinguish between Steps~\ref{case:tw:non-sibling}
and~\ref{case:spr:non-sibling}, it suffices to examine $a_0$.
We distinguish between the cases in Steps~\ref{case:tw:non-sibling}
and~\ref{case:spr:non-sibling} using the values of $r$, $m$, and
$s_1, s_2, \dots, s_r$ and, in Step~\ref{case:tw:non-sibling}, by
testing whether $a_0$ is among the descendants of $l$.
In each case, we can easily copy the forests,
cut the appropriate edges, and update our lists and pointers in
linear time for each of the recursive calls.

To summarize:
Each execution of Steps~\ref{case:abort}--\ref{item:choose-sib-pair} takes
constant time.
Step~\ref{case:abort} is executed once per invocation.
Steps~\ref{case:success}--\ref{item:choose-sib-pair} are executed
at most a linear number of times per invocation because each execution, except
the first one, is the
result of finding a root of $\dot{F}_2$ in Step~\ref{case:spr:singleton}
or resolving sibling pairs in Step~\ref{case:spr:sibling}, both of
which can happen only $\Oh{n}$ times.
Each execution of Step~\ref{case:spr:sibling} takes $\Oh{m}$ time.
In a given invocation, Step~\ref{case:spr:sibling} is executed at most
once per sibling group (because we either proceed to
Step~\ref{item:choose-order} or return to Step~\ref{case:success} after
completely resolving the sibling group).
Thus, since the total size of all sibling groups is bounded by
$\size{\dot{T}_1}$, the total cost of all executions of
Step~\ref{case:spr:sibling} per invocation is $\Oh{n}$.
Steps~\ref{item:choose-order}--\ref{case:spr:non-sibling} are executed
at most once per invocation and take linear time.
Thus, each invocation of the algorithm takes linear time.

\section{A 3-Approximation Algorithm for\\Rooted MAF}

\label{sec:root-maf-appr}

We now show how to modify the FPT algorithm from Section~\ref{sec:fpt} to obtain
a $3$-approximation algorithm with running time $\Oh{n \log n}$.
This algorithm is easy to implement iteratively, and this may be preferable in
practice.
In order to minimize the differences to the FPT algorithm, however, we
describe it as a recursive algorithm.
There are four differences to the FPT algorithm:
\begin{itemize}
\item Instead of deciding whether $\ecut{T_1, T_2, F_2} \le k$, an invocation
  $\alg{F_1, F_2}$ returns an integer $k''$ such that $\ecut{T_1, T_2, F_2} \le
  k'' \le 3 \ecut{T_1, T_2, F_2}$.
  Thus, there is no need for a parameter $k$ to the invocation or for
  an equivalent of Step~\ref{case:abort} of the FPT algorithm, and
  whenever Step~\ref{case:success} of the FPT algorithm would have returned
  ``yes'', we now return $0$ as our approximation $k''$ of
  $\ecut{T_1, T_2, F_2}$ because $F_2$ is an AF of $T_1$ and~$T_2$.
\item We execute Step~\ref{case:spr:sibling} only if the immediately preceding
  execution of Step~\ref{item:choose-sib-pair} chose a new sibling group.
  This ensures that this step is executed only once per sibling group.
  As discussed in Section~\ref{sec:linear-time},
  the cost per sibling group $\set{a_1, a_2, \dots, a_m}$ is $\Oh{m}$.
  Thus, the total cost of all executions of Step~\ref{case:spr:sibling}
  in all recursive invocations is $\Oh{n}$.
  After executing Step~\ref{case:spr:sibling}, implemented as discussed in
  Section~\ref{sec:linear-time}, every node $p$ in
  $F_2$ has at most one node $a_i$ as a child, which is stored as $p$'s
  \emph{representative}~$r_p$.
  This allows us to merge sibling pairs $\set{a_i, a_j}$ that arise as a result
  of edge cuts after we executed Step~\ref{case:spr:sibling} without
  re-executing this step:
  Whenever we contract a \mbox{degree-$2$} vertex whose only child is a node
  $a_i$ in the current sibling group and whose parent is $p$, we set
  $r_p := a_i$ if $r_p = \nil$; otherwise we resolve the sibling pair
  $\set{a_i, r_p}$ as in Step~\ref{case:spr:sibling} and store
  $(a_i, r_p)$ as $p$'s new representative.
\item We do not execute Step~\ref{item:choose-order}, as this would require
  linear time per invocation.
  Instead, we assign a \emph{depth estimate} $\dest{x}$ to each node $x$ and use
  it to choose the order in which to inspect the members of the current sibling
  group.
  Initially, $\dest{x}$ is $x$'s depth in $T_2$, which is easily computed
  in linear time for all nodes $x \in T_2$.
  In general, $\dest{x}$ is one more than the depth of $\parent{x}$ in~$T_2$,
  where $\parent{x}$ is $x$'s parent in $F_2$.
  In particular, $\dest{x}$ is an upper bound on $x$'s depth in $F_2$ and, for
  two nodes $x$ and $y$ with LCA $l$, we have $\dest{y} > \dest{x}$ if
  $x$ is a child of $l$ and $y$ is not.
  When choosing a new sibling group in Step~\ref{item:choose-sib-pair}, we
  insert all group members into a max-priority queue $Q$, with their depth
  estimates as their priorities.
  When contracting the degree-$2$ parent $\parent{x}$ of a node $x$, we set
  $\dest{x} := \dest{\parent{x}}$.
  If $x$ is a member of the current sibling group, we update its priority in
  $Q$.
  When resolving a sibling pair $\set{a_i, a_j}$, we remove $a_i$ and $a_j$
  from $Q$, set $\dest{(a_i, a_j)} := \dest{a_i}$, and insert $(a_i, a_j)$
  into $Q$.
  Finally, when cutting an edge~$\edge{a_i}$, for a member $a_i$ of the current
  sibling group, we remove $a_i$ from $Q$.
  These updates take $\Oh{\log n}$ time per modification of $F_2$.
  Since we modify $F_2$ at most $\Oh{n}$ times, the total cost of all priority
  queue operations is $\Oh{n \log n}$.
\item We do not distinguish between Steps~\ref{case:tw:non-sibling}
  and~\ref{case:spr:non-sibling} (having no concept of $a_0$) and also do not
  distinguish between the various cases of the steps.
  Instead, we have a single Step~\ref{case:tw:non-sibling} with four cases
  that each make one recursive call (see Figure~\ref{fig:spr:cases-approx}).
  \begin{enumerate}[label=7.\arabic{*}.,leftmargin=*,ref=7.\arabic{*}]
  \item \label{case:approx:m=2,new}
    If $m = 2$ and this sibling group was chosen in
    Step~\ref{item:choose-sib-pair} of the current invocation, make one
    recursive call $\alg{F_1, F_2 \div \set{\edge{a_1},\edge{\parent{a_1}},
        \edge{a_2}}}$ and return $3$ plus its return value.
  \item \label{case:approx:m=2,old}
    If $m = 2$ and this sibling group was chosen in
    Step~\ref{item:choose-sib-pair} of a previous invocation, make one
    recursive call $\alg{F_1, F_2 \div \set{\edge{a_1},\edge{\parent{a_1}},
        \edge{a_2}, \edge{\parent{a_2}}}}$ and return $4$ plus its return value.
  \item \label{case:approx:m>3,new}
    If $m > 2$ and this sibling group was chosen in
    Step~\ref{item:choose-sib-pair} of the current invocation, let $a_1$ and
    $a_2$ be the two entries with maximum priority in $Q$, ordered so that
    $\dest{a_1} \ge \dest{a_2}$.
    If $a_2$'s parent has a single sibling, this sibling is a member $a_j$
    of the current sibling group, and $a_1$'s parent is either a root or has
    a sibling that is not a member of the current sibling group, let $x = 2$;
    otherwise let $x = 1$.
    Remove $a_x$ from $Q$, make one recursive call $\alg{F_1, F_2 \div
      \set{\edge{a_x}, \edge{\parent{a_x}}}}$,
    and return $2$ plus its return value.
  \item \label{case:approx:m>3,old}
    If $m > 2$ and this sibling group was chosen in
    Step~\ref{item:choose-sib-pair} of a previous invocation,
    delete the node $a_1$ with maximum priority from $Q$,
    make one recursive call $\alg{F_1, F_2 \div \set{\edge{a_1},
        \edge{\parent{a_1}}}}$, and return $2$ plus its return value.
  \end{enumerate}
\end{itemize}
Using these modifications, we obtain the following theorem.

\begin{figure*}[t]
  \centering
  \includegraphics{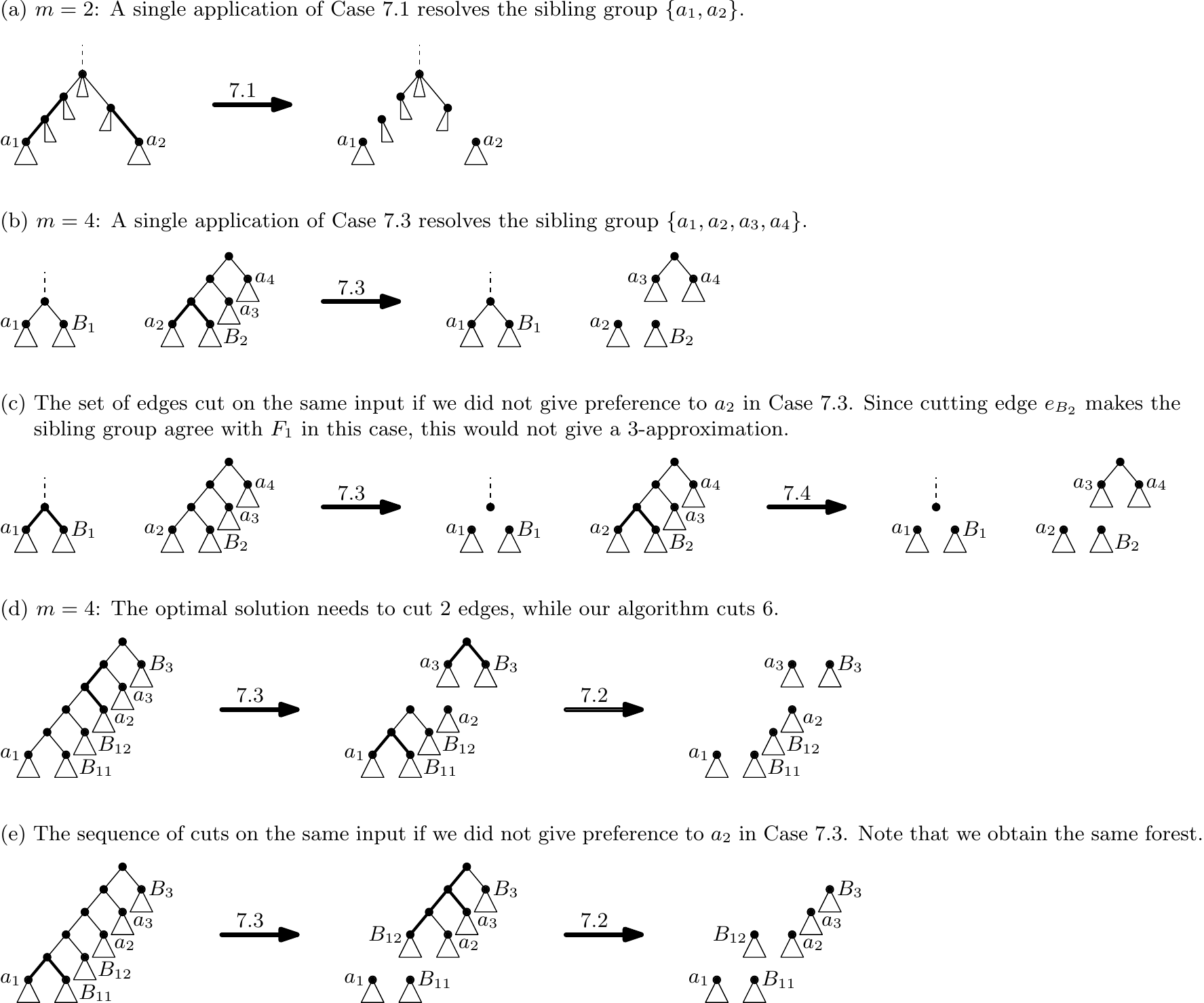}
  \caption{Illustration of the various cases of Step~\ref{case:tw:non-sibling}
    of the approximation algorithm.
    Only $\dot{F}_2$ is shown.
    Edges that are cut in each step are shown in bold.}
  \label{fig:spr:cases-approx}
\end{figure*}

\setcounter{theorem}{2}
\begin{theorem}
  Given two rooted $X$-trees $T_1$ and $T_2$, a \mbox{$3$-approximation} of
	$\ecut{T_1, T_2, T_2} = \drspr{T_1, T_2}$ can be computed in
	$\Oh{n \log n}$ time.
\end{theorem}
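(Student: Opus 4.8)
The plan is to analyze the modified, single-recursion algorithm described above, started as $\alg{T_1, T_2}$, and to prove the two inequalities $\ecut{T_1, T_2, T_2} \le k'' \le 3\,\ecut{T_1, T_2, T_2}$ for its return value $k''$, together with an $\Oh{n \log n}$ bound on its running time; since $\ecut{T_1, T_2, T_2} = \drspr{T_1, T_2}$ by Section~\ref{sec:prelim}, this yields the claimed $3$-approximation. I would prove the two inequalities by induction on the recursion, mirroring the structure of Lemma~\ref{lem:maf:correct}. For the lower bound, each invocation cuts an edge set $E_c$ of size $c$ and recurses on $F_2' = F_2 \div E_c$; cutting $E_c$ and then an optimal edge set for $F_2'$ produces an AF of $T_1$ and $T_2$, so $\ecut{T_1, T_2, F_2} \le c + \ecut{T_1, T_2, F_2'}$, and telescoping down to the base case (return value $0$, where $\ecut{T_1, T_2, F_2} = 0$) gives $\ecut{T_1, T_2, T_2} \le k''$.

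The upper bound is the heart of the argument, and I would establish it by amortizing over the life of each sibling group rather than bounding each recursive step in isolation. The key accounting fact is that every cut performed by Cases~\ref{case:approx:m=2,new}--\ref{case:approx:m>3,old} decreases $\ecut{T_1, T_2, F_2}$ by at least $1$: for Case~\ref{case:approx:m=2,new} (three edges) this is Lemma~\ref{lem:spr:twosiblings}(ii), and for Case~\ref{case:approx:m=2,old} (four edges) this is Theorem~\ref{thm:spr:fourway}(ii). Granting the analogous statement for the two-edge cuts of Cases~\ref{case:approx:m>3,new} and~\ref{case:approx:m>3,old} (the obstacle, below), these guaranteed decreases telescope, so their total over the whole run is at most $\ecut{T_1, T_2, T_2}$, and it suffices to show that for each sibling group the number of edges cut is at most three times the decrease charged to that group. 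A group resolved within its initial (``new'') invocation costs at most three edges (Case~\ref{case:approx:m=2,new}, after any free sibling merges in Step~\ref{case:spr:sibling}) for a decrease of at least $1$. A group that survives its initial invocation must have executed the two-edge Case~\ref{case:approx:m>3,new} there, and it can reach the four-edge Case~\ref{case:approx:m=2,old} only once, as its terminal step; pairing that single four-edge cut with the mandatory initial two-edge cut accounts for six edges against a decrease of at least two, while every intervening Case~\ref{case:approx:m>3,old} contributes two edges for a decrease of at least one. Writing $M$ for the number of members of such a group, this gives $2M$ edges against a decrease of at least $M-1$, and $2M \le 3(M-1)$ for all $M \ge 3$. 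In every case the per-group ratio is at most $3$, so $k'' \le 3\,\ecut{T_1, T_2, T_2}$.

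The step I expect to be the main obstacle is justifying that each two-edge cut of Cases~\ref{case:approx:m>3,new} and~\ref{case:approx:m>3,old}---which removes a single member $a_x$ by cutting $\edge{a_x}$ and $\edge{\parent{a_x}}$ (equivalently $\edge{a_x}$ and $\edge{B_x}$)---really decreases $\ecut{T_1, T_2, F_2}$ by at least one. Theorem~\ref{thm:spr:fourway}(i) only guarantees that \emph{some} edge of $\set{\edge{a_1}, \edge{a_2}, \edge{B_1}, \edge{B_2}}$ is progressive, not the particular pair we commit to. Here the depth estimates $\dest{x}$ maintained by the algorithm are what rescue the argument: choosing $a_x$ of maximum depth estimate guarantees that $F_2^{B_x}$ contains no other member of the sibling group, so isolating $F_2^{a_x}$ and $F_2^{B_x}$ can never reunite two members' subtrees. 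I would then run the triple/overlap contradiction of Theorem~\ref{thm:spr:fourway} and Lemma~\ref{lem:spr:sc} with $a_x$ in the role of the distinguished sibling, using a second member $a_j$ to form the incompatible triple $\triple{a_x'a_j'}{b_x'}$, to show that some minimum edge set realizing an MAF may be taken to contain $\edge{a_x}$ or $\edge{B_x}$.

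For the running time, I would observe that the recursion is a single path in which each step cuts at least two edges of $F_2$, so there are $\Oh{n}$ invocations and $F_2$ can be modified in place without copying. Implementing Steps~\ref{case:abort}--\ref{case:spr:sibling} as in Section~\ref{sec:linear-time} makes their total cost $\Oh{n}$, and replacing the linear-time Step~\ref{item:choose-order} by the max-priority queue keyed on the depth estimates $\dest{x}$ reduces the per-step member selection and the maintenance of $F_2$ to $\Oh{\log n}$ work per modification of $F_2$; since $F_2$ is modified $\Oh{n}$ times, the priority-queue operations dominate at $\Oh{n \log n}$, giving the claimed bound.
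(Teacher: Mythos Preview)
Your overall plan (amortize over the life of each sibling group, use the depth estimates to control which member is cut, and bound the running time via a single recursion path) mirrors the paper's. The genuine gap is the ``obstacle'' you flag: the Theorem~\ref{thm:spr:fourway}/Lemma~\ref{lem:spr:sc} argument does \emph{not} establish that each individual two-edge cut of Cases~\ref{case:approx:m>3,new}/\ref{case:approx:m>3,old} decreases $\ecut{T_1,T_2,F_2}$ by one. That argument needs \emph{two} pairs $(a_i,B_i)$, $(a_j,B_j)$ with $a_i'\reach b_i'$ and $a_j'\reach b_j'$ to derive an overlap/incompatible-triple contradiction; with a single pair there is none. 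Concretely, an MAF may keep $a_x'\reach[F\div E] b_x'$ (so $E\cap\set{\edge{a_x},\edge{B_x}}=\emptyset$) while isolating every other $a_j$, and then the cut $\set{\edge{a_x},\edge{\parent{a_x}}}$ need not hit $E$. What your argument \emph{does} yield is the collective statement the paper actually uses: since each $F_2^{B_{i_j}}$ contains no member of the group (thanks to the depth ordering), at most one index $j$ can have $a_{i_j}'\reach b_{i_j}'$, so some optimal $E$ satisfies $\sizeL{E\cap\set{\edge{a_{i_j}},\edge{B_{i_j}}\mid 1\le j\le s}}\ge s-1$, hence the $s$ two-edge cuts together decrease $\ecut{\cdot}$ by at least $s-1$.

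That weaker bound is not by itself enough---for $s\le 2$ it gives ratio $>3$---and the paper closes the gap with two extra ingredients you are missing. First, when a \emph{single} cut suffices to make the group agree, the tie-breaking rule in Case~\ref{case:approx:m>3,new} (the test on $\parent{a_2}$'s sibling that sets $x=2$) is exactly what guarantees the algorithm picks the unique ``right'' member $a_{i_t}$ at the bottom of the ladder, so $s=1$ costs $2$ edges against a decrease of~$1$. Second, when at least two cuts are needed but $s<3$, the paper does not use the $s-1$ bound at all; instead it argues directly that since the group now agrees in $F_2'$ and at least two cuts were necessary, $\ecut{T_1,T_2,F_2'}\le\ecut{T_1,T_2,F_2}-2$, giving $2s\le 4$ edges against a decrease of~$2$. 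Only for $s\ge 3$ does $2s\le 3(s-1)$ carry the argument. Your running-time analysis is fine and matches the paper's.
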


\begin{proof}
  We use the algorithm just described.
  This algorithm consists of Steps~\ref{case:success}--\ref{case:spr:sibling} of
  the FPT algorithm plus the modified Step~\ref{case:tw:non-sibling} above.
  In addition, there is a linear-time preprocessing step for computing the
  initial depth estimates of all nodes of $T_2$.
  We argued already that all executions of Step~\ref{case:spr:sibling} take
  linear time in total.
  In Section~\ref{sec:linear-time}, we argued that
  each execution of Step~\ref{case:success}, \ref{case:spr:singleton} or
  \ref{item:choose-sib-pair} of the FPT algorithm takes constant time.
  In the approximation algorithm, if Step~\ref{item:choose-sib-pair} chooses a
  new sibling group, it also needs to insert the members of the sibling group
  into the priority queue.
  This takes $\Oh{m \log m}$ time, $\Oh{n \log n}$ in total for all sibling
  groups.
  Each execution of Step~\ref{case:tw:non-sibling} takes $\Oh{\log n}$ time,
  constant time for the modifications of $F_2$ it performs and $\Oh{\log n}$
  time for the $\Oh{1}$ corresponding priority queue operations.
  Thus, to obtain the claimed time bound of $\Oh{n \log n}$ for the entire
  algorithm, it suffices to show that each step of the algorithm is executed
  $\Oh{n}$ times.

  This is easy to see for Steps~\ref{case:spr:singleton},
  \ref{case:spr:sibling}, and~\ref{case:tw:non-sibling}:
  Each execution of Step~\ref{case:spr:sibling} reduces the number of nodes
  in $R_t$ by one, the number of nodes in $R_t$ never increases, and initially
  $R_t$ contains the $n$ leaves of $T_1$.
  Each execution of Step~\ref{case:spr:singleton} or~\ref{case:tw:non-sibling}
  cuts at least one edge in $F_1$ or $F_2$, and initially these two forests
  have $\Oh{n}$ edges.

  For Steps~\ref{case:success} and~\ref{item:choose-sib-pair}, we
  observe that they cannot be executed more often than
  Steps~\ref{case:spr:singleton}, \ref{case:spr:sibling},
  and~\ref{case:tw:non-sibling} combined because any two executions of
  Step~\ref{case:success} or~\ref{item:choose-sib-pair} have an
  execution of Step~\ref{case:spr:singleton}, \ref{case:spr:sibling}
  or~\ref{case:tw:non-sibling} between them.

  It remains to bound the approximation ratio of the algorithm.
  First observe that the value $k'$ returned by the algorithm satisfies
  $k' \ge \ecut{T_1, T_2, T_2}$ because the input forest $F_2$ of the
  final invocation $\alg{F_1, F_2}$ is an AF of $T_1$ and $T_2$ and the
  algorithm returns the number of edges cut to obtain $F_2$ from $T_2$.
  To prove that $k' \le 3\ecut{T_1, T_2, T_2}$, let $r$ be the number
  of descendant invocations of the current invocation $\alg{F_1, F_2}$, not
  counting the current invocation itself, and let $k''$ be its return value.
  We use induction on $r$ to prove that $k'' \le 3\ecut{T_1, T_2, F_2}$ if
  $\alg{F_1, F_2}$ chooses a new sibling group in
  Step~\ref{item:choose-sib-pair}.
  We call such an invocation a \emph{master invocation}.
  We also consider the last invocation of the algorithm to be a master
  invocation.
  For two master invocations without another master invocation between them,
  all invocations between the two invocations are \emph{slave invocations}
  of the first of the two master invocations, as they manipulate the sibling
  group chosen in this invocation.

  As a base case observe that, if $r = 0$, then $\ecut{T_1, T_2, F_2} = 0$ and
  the invocation $\alg{F_1, F_2}$ returns $k'' = 0$ in Step~\ref{case:success}.
  For the inductive step, consider a master invocation $\I = \alg{F_1, F_2}$
  with $r > 0$, and let $\I' = \alg{F_1', F_2'}$ be the first master invocation
  after~$\I$.
  By the inductive hypothesis, $\I'$ returns a value $k'''$ such that
  $k''' \le 3\ecut{T_1, T_2, F_2'}$.

  If $m = 2$ in invocation $\I$, we invoke Case~\ref{case:approx:m=2,new},
  which cuts edges $\edge{a_1}$, $\edge{\parent{a_1}}$, and $\edge{a_2}$ and,
  thus, makes the sibling group agree between $F_1$ and $F_2$ (see
  Figure~\ref{fig:spr:cases-approx}(a)).
  This implies that $k'' = k''' + 3$.
  By Lemma~\ref{lem:spr:twosiblings}, we have $\ecut{T_1, T_2, F_2'} \le
  \ecut{T_1, T_2, F_2} - 1$.
  Since $k''' \le 3\ecut{T_1, T_2, F_2'}$, this shows that
  $k'' \le 3\ecut{T_1, T_2, F_2}$.

  If $m > 2$ in invocation $\I$, this invocation applies
  Case~\ref{case:approx:m>3,new}, each of its slaves applies
  Case~\ref{case:approx:m=2,old} or~\ref{case:approx:m>3,old}, and
  Case~\ref{case:approx:m=2,old} is applied at most once.
  Since the sibling group $\set{a_1, a_2, \dots, a_m}$ does not agree between
  $F_1$ and~$F_2$, at least one edge cut in $F_2$ is necessary to make the
  sibling group agree between $F_1$ and $F_2$.
  We distinguish whether one or more edge cuts are required.

  If one cut suffices (Figures~\ref{fig:spr:cases-approx}(b)
  and~\ref{fig:spr:cases-approx}(c)), then
  there are at most two components of $F_2$ that
  contain members of the current sibling group because resolving overlaps
  in $F_1$ between $q$ components of $F_2$ requires at least $q - 1$ cuts
  in $F_2$.
  Consequently, exactly one component $C$ contains at least two members of
  the sibling group.
  The existence of at least one such component follows because $m > 2$.
  If we had another component $C'$ containing at least two members of the
  current sibling group, then at least one cut would be required in each of $C$
  and $C'$ to make the sibling group agree between $F_1$ and $F_2$, but we
  assumed that one cut suffices.

  For a single cut to suffice to make the current sibling group agree between
  $F_1$ and $F_2$, $C$ must consist of a single path of nodes
  $x_1, x_2, \dots, x_t$ such that, for $1 \le j < t$, $x_j$ has two children:
  $x_{j+1}$ and a member $a_{i_j}$ of the current sibling group; $x_t$ has a
  member $a_{i_t}$ of the current sibling group as a child, as well as a group
  $B_{i_t}$ of siblings of $a_{i_t}$ such that no member $a_h$ of the current
  sibling group belongs to~$F_2^{B_{i_t}}$.
  Thus, cutting edges $\edge{a_{i_t}}$ and $\edge{\parent{a_{i_t}}}$ makes the
  sibling group agree between $F_1$ and $F_2$.
  Now observe that $a_{i_t}$ and $a_{i_{t-1}}$ are the two members of the
  current sibling group with the greatest depth estimates in $C$.
  If there exists another component $C'$ of $F_2$ that contains a member $a_h$
  of the current sibling group, then $a_h$ is the only such node in $C'$.
  Thus, the two maximum priority entries in $Q$ are either $a_{i_t}$ and
  $a_{i_{t-1}}$ or $a_{i_t}$ and $a_h$.
  In both cases, invocation $\I$ cuts edges $\edge{a_{i_t}}$ and
  $\edge{\parent{a_{i_t}}}$ because $a_h$ either has no parent or its parent
  does not have a member of the current sibling group as a sibling,
  and $a_{i_t}$ is preferred over $a_{i_{t-1}}$ by invocation $\I$ because
  $a_{i_t}$'s parent does have a member of the current sibling group as its only
  sibling (namely $a_{i_{t-1}}$) and has a greater depth estimate than
  $a_{i_{t-1}}$.
  Thus, in $\I$'s child invocation, the current sibling group agrees between
  $F_1$ and $F_2$, which implies that this child invocation is $\I'$ and
  $k'' = k''' + 2$.
  Moreover, since the sibling group does not agree between $F_1$ and $F_2$
  in invocation $\I$, we must have $\ecut{T_1, T_2, F_2'} \le
  \ecut{T_1, T_2, F_2} - 1$ and, hence,
  $k'' = k''' + 2 \le 3\ecut{T_1, T_2, F_2'} + 2 \le 3\ecut{T_1, T_2, F_2}$.

  For the remainder of the proof assume at least two cuts are necessary to make
  the sibling group $\set{a_1, a_2, \dots, a_m}$ agree between $F_1$ and $F_2$
  (Figures~\ref{fig:spr:cases-approx}(d) and~\ref{fig:spr:cases-approx}(e)), and
  assume the members of the sibling group are ordered by their depth estimates.
  Let $i_1, i_2, \dots, i_s$ be the indices such that invocation $\I$ and its
  slaves cut edges $\set{\edge{a_{i_j}}, \edge{\parent{a_{i_j}}} \mid
    1 \le j \le s}$ and, for all $1 \le j \le s$, let $B_{i_j}$ be the set of
  $a_{i_j}$'s siblings at the time we cut edges $\edge{a_{i_j}}$
  and~$\edge{\parent{a_{i_j}}}$.
  Assume for now that invocation $\I$ cuts edges $\edge{a_1}$
  and~$\edge{\parent{a_1}}$.
  Then, for all $1 \le j \le s$, $F_2^{B_{i_j}}$ contains no member of the
  current sibling group because such a member $a_h$ would have a greater depth
  estimate than $a_{i_j}$ and hence $\edge{a_h}$ would have been cut before
  $\edge{a_{i_j}}$.
  This implies that there exists an edge set $E$ such that $F_2 \div E$ is an
  MAF of $F_1$ and $F_2$ and $\size{E \cap \set{\edge{a_{i_j}}, \edge{B_{i_j}}
      \mid 1 \le j \le s}} \ge s - 1$.
  Since cutting edges $\edge{a_{i_j}}$ and $\edge{B_{i_j}}$ produces the same
  result as cutting edges $\edge{a_{i_j}}$ and $\edge{\parent{a_{i_j}}}$, this
  shows that $\ecut{T_1, T_2, F_2'} \le \ecut{T_1, T_2, F_2} - (s - 1)$.

  If $s \ge 3$, we have $2s \le 3(s - 1)$ and, hence, $k'' \le k''' + 3(s - 1)
  \le 3(\ecut{T_1, T_2, F_2'} + s - 1) \le 3\ecut{T_1, T_2, F_2}$.
  If $s < 3$, we observe that
  $\ecut{T_1, T_2, F_2'} \le \ecut{T_1, T_2, F_2} - 2$ because the current
  sibling group agrees between $F_1$ and $F_2'$ and we assumed that at least two
  cuts are necessary in $F_2$ to make this sibling group agree between $F_1$ and
  $F_2$.
  Thus, $k'' \le k''' + 2s \le 3\ecut{T_1, T_2, F_2'} + 4 \le
  3\ecut{T_1, T_2, F_2}$.

  It remains to deal with the case when invocation $\I$ cuts edges $\edge{a_2}$
  and $\edge{\parent{a_2}}$.
  If $a_1 \notin F_2^{B_2}$, then again $F_2^{B_{i_j}}$ contains no member of
  the current sibling group, for all \mbox{$1 \le j \le s$}, and the same
  argument as above shows that $k'' \le 3\ecut{T_1, T_2, F_2}$.
  If $a_1 \in F_2^{B_2}$, then observe that either the path in $F_2$ between
  $a_1$ and $\parent{a_2}$ has at least two internal nodes or $a_2$ has at least
  two siblings because otherwise invocation $\I$ would prefer $a_1$ over $a_2$
  because $a_1$ has the greater depth estimate.
  This implies that $a_2$ has the same parent before and after cutting edges
  $\edge{a_1}$ and $\edge{\parent{a_1}}$, and $a_1$ has the same parent before
  and after cutting edges $\edge{a_2}$ and $\edge{\parent{a_2}}$.
  Thus, $\I$ and its child invocation cut the same four edges
  $\edge{a_1}$, $\edge{\parent{a_1}}$, $\edge{a_2}$, and $\edge{\parent{a_2}}$
  as would have been cut if invocation $\I$ had not preferred $a_2$ over $a_1$.
  The same argument as above now shows that $k'' \le 3\ecut{T_1, T_2, F_2}$.
\end{proof}




\end{document}